\theoremstyle{definition}
\declaretheoremstyle[style=definition,qed=\openbox,]{ppstyle}
\newtheorem{teiri}{Theorem}
\newtheorem{lem}{Lemma}
\newtheorem{giron}{Discussion}
\declaretheorem[name=Example, style=ppstyle,]{example}
\declaretheorem[name=Definition, style=ppstyle,]{df}
\declaretheorem[name=Requirement, style=ppstyle,]{yosei}
\newcommand\RedeclareMathOperator{%
  \@ifstar{\def\rmo@s{m}\rmo@redeclare}{\def\rmo@s{o}\rmo@redeclare}%
}
\newcommand\rmo@redeclare[2]{%
  \begingroup \escapechar\m@ne\xdef\@gtempa{{\string#1}}\endgroup
  \expandafter\@ifundefined\@gtempa
     {\@latex@error{\noexpand#1undefined}\@ehc}%
     \relax
  \expandafter\rmo@declmathop\rmo@s{#1}{#2}}
\newcommand\rmo@declmathop[3]{%
  \DeclareRobustCommand{#2}{\qopname\newmcodes@#1{#3}}%
}
\definecolor{grey}{rgb}{0.7, 0.75, 0.71}
\def\dark-red#1{\textcolor[rgb]{0.7,0.0,0.0}{#1}}
\definecolor{amber}{rgb}{1.0, 0.75, 0.0}
\def\...{\dotsc}
\def\intT2{\int_{-T/2}^{T/2}}
\def\sumi1n{\sum_{i=1}^{n}}
\def\sumi1N{\sum_{i=1}^{N}}
\def\sumi0N--{\sum_{i=0}^{N-1}}
\def\ccc{\cdots}
\def\stkeq#1{\stackrel{{\scriptsize #1}}{=}}
\def\=def{\overset{\text{\small def}}{=}}
\newcommand{\defeq}{\overset{\text{\small def}}{=}}
\newcommand{\argmax}{\operatornamewithlimits{argmax}}
\DeclareMathOperator{\supp}{supp}
\DeclarePairedDelimiterX{\inp}[2]{\langle}{\rangle}{#1, #2}
\def\Fb{\mathbb{F}}
\def\Zb{\mathbb{Z}}
\def\Ac{\mathcal{A}}
\def\Cc{\mathcal{C}}
\def\Fc{\mathcal{F}}
\def\Qc{\mathcal{Q}}
\def\<{\langle}
\def\>{\rangle}
 \def\mat4#1#2#3#4{
\begin{pmatrix}
 #1&\ccc&#2\\
 \vdots&&\vdots\\
 #3&\ccc&#4
\end{pmatrix}}
\def\0sf{\mathsf{0}}
\def\1sf{\mathsf{1}}
\def\brm{\mathrm{b}}
\def\frm{\mathrm{f}}
\def\Lrm{\mathrm{L}}
\def\Rrm{\mathrm{R}}
\def\0BS{\boldsymbol{0}}
\def\1BS{\boldsymbol{1}}
\def\0B{\mathbf{0}}
\def\1B{\mathbf{1}}
\def\0H{\hat{0}}
\def\1H{\hat{1}}
\def\xH{\hat{x}}
\def\zH{\hat{z}}
\def\EH{\hat{E}}
\def\HH{\hat{H}}
\def\+TT{\texttt{+}}
\def\-{\texttt{-}}
\def\+KB{|+\> \<+|}
\def\-KB{|-\> \<-|}
\def\q0{|0\>}
\def\zetaH{\hat{\zeta}}
\def\xiH{\hat{\xi}}
\def\zeroU{\underline{0}}
\def\0U{\underline{0}}
\def\1U{\underline{1}}
\def\aU{\underline{a}}
\def\cU{\underline{c}}
\def\dU{\underline{d}}
\def\eU{\underline{e}}
\def\fU{\underline{f}}
\def\gU{\underline{g}}
\def\hU{\underline{h}}
\def\sU{\underline{s}}
\def\tU{\underline{t}}
\def\uU{\underline{u}}
\def\vU{\underline{v}}
\def\wU{\underline{w}}
\def\xU{\underline{x}}
\def\zU{\underline{z}}
\def\0UH{\underline{\0H}}
\def\1UH{\underline{\1H}}
\def\xUH{\underline{\xH}}
\def\zUH{\underline{\zH}}
\def\KO{\overline{K}}
\def\tauU{\underline{\tau}}
\def\gammaU{\underline{\gamma}}
\def\gammaU{\underline{\gamma}}
\def\deltaU{\underline{\delta}}
\def\sigmaU{\underline{\sigma}}
\def\zetaU{\underline{\zeta}}
\def\xiU{\underline{\xi}}
\def\zetaUH{\hat{\underline{\zeta}}}
\def\xiUH{\hat{\xiU}}
 \def\Ct{\mathtt{C}}
 \def\Gt{\mathtt{G}}
\def\tAND{\mbox{ and }}
\RedeclareMathOperator{\Im}{Im}
\newcommand{\T}{\mathsf{T}}
\newcommand{\I}{\mathbbm{1}}
\begin{document}
%
\title{Quantum Error Correction Exploiting Degeneracy to Approach the Hashing Bound}
\author{
\IEEEauthorblockN{Kenta Kasai}
\IEEEauthorblockA{\\Institution of Science Tokyo\\
Email: kenta@ict.eng.isct.ac.jp}
}
\maketitle

\begin{abstract}
Quantum error correction is essential for realizing scalable quantum computation.  
Among various approaches, low-density parity-check codes over higher-order Galois fields have shown promising performance due to their structured sparsity and compatibility with iterative decoding algorithms whose computational complexity scales linearly with the number of physical qubits.  
In this work, we demonstrate that explicitly exploiting the degeneracy of quantum errors-i.e., the non-uniqueness of syndrome representatives-can significantly enhance the decoding performance.  
Simulation results over the depolarizing channel indicate that the proposed method, at a coding rate of $1/3$, achieves a frame error rate as low as $10^{-4}$ at a physical error rate of $9.45\%$ for a code with $104{,}000$ logical qubits and $312{,}000$ physical qubits, approaching the quantum hashing bound.
These findings highlight the critical role of degeneracy in closing the gap to the fundamental limits of quantum error correction.
\end{abstract}
\begin{IEEEkeywords}
quantum error correction, low-density parity-check codes, degeneracy
\end{IEEEkeywords}
\IEEEpeerreviewmaketitle

\section{Introduction}

Quantum error correction  is a cornerstone for realizing fault-tolerant quantum computation, protecting quantum information from the inevitable decoherence and operational errors encountered in quantum hardware~\cite{shor95},\cite{steane96b}. 
Since Shor introduced the first quantum error-correcting codes in 1995, extensive research has led to various code families, including stabilizer codes~\cite{gottesmanthesis}, topological codes~\cite{kitaev2003fault},\cite{dennis2002topological}, and concatenated codes~\cite{knill1996concatenated}.

Among these, surface codes, introduced by Kitaev in the late 1990s~\cite{kitaev2003fault}, have emerged as particularly promising candidates due to their local structure and high error-correction thresholds. Surface codes have demonstrated error thresholds approaching $1\%$ under realistic noise models~\cite{fowler2012surface},\cite{google2023surface}, making them a leading architecture for scalable quantum computing. However, a major drawback of surface codes is their inherently low coding rate, typically near zero, which significantly increases the required number of physical qubits and associated hardware overhead.

To address the issue of low coding rates, quantum low-density parity-check (LDPC) codes have attracted considerable attention over the past two decades. Originally developed in classical coding theory by Gallager in 1962~\cite{Gallager1962}, LDPC codes were reintroduced and extensively studied in the 1990s due to their near-capacity performance and efficient iterative decoding algorithms~\cite{mackay1996near},\cite{richardson2001design}. Quantum LDPC codes were subsequently proposed as stabilizer codes characterized by sparse parity-check matrices, inheriting the classical advantages of efficient decoding~\cite{mackay2004sparse},\cite{4557323}.

Understanding and improving the minimum distance of quantum LDPC  codes is a central challenge in quantum error correction. Early constructions such as hypergraph product codes~\cite{tillich2013quantum} guaranteed a minimum distance that scales only as $O(\sqrt{n})$, which limits their ability to suppress logical errors at low physical error rates. Recent breakthroughs have demonstrated that it is possible to construct quantum LDPC codes with both linear minimum distance and efficient decoding. In particular, Panteleev and Kalachev~\cite{panteleev2022asymptotically} introduced a family of quantum LDPC codes with asymptotically good parameters, achieving minimum distance $\Omega(n)$ and constant rate while maintaining low-density parity-check structure. Building upon this, Hastings~\cite{hastings2021fiber} and Breuckmann and Eberhardt~\cite{breuckmann2021balanced} independently proposed similar constructions, further establishing the possibility of good quantum LDPC codes. These advances mark a significant step toward scalable quantum error correction, narrowing the gap between quantum and classical coding theory.

Sparse-graph codes with non-vanishing rate have shown that minimum distance alone is insufficient to fully characterize their practical performance~\cite[Section 13.8]{MacKay-2003},~\cite[Section 3.B]{mackay2004sparse}, and~\cite[Example 1.18]{Richardson:2008:MCT:1795974}.  
Iterative probabilistic decoding algorithms, such as belief propagation (BP), can achieve reliable error correction by effectively exploiting the structural sparsity of these codes.  
These algorithms perform well even in parameter regimes where conventional minimum-distance-based evaluations would predict failure.  
Therefore, assessing codes solely by their minimum distance overlooks key factors such as graph structure and decoder interaction-both of which are crucial for achieving near-capacity performance with practical decoding complexity.  

Classical LDPC codes are known to scale well with increasing block length, naturally exhibiting good scalability~\cite{ezri2007generalization}. In contrast, quantum LDPC codes have faced significant challenges in operating at non-vanishing constant rates with long block lengths. This difficulty has largely been attributed to the presence of degeneracy, as well as to the stringent constraints imposed on the Tanner graph by the orthogonality condition. Both factors interfere with the convergence of BP decoding and suppress the threshold behavior.

To improve the performance of BP decoding in quantum LDPC codes, various techniques have been developed. One approach combines BP with ordered statistics decoding (BP+OSD)~\cite{fossorier2002soft}, which significantly enhances decoding accuracy at the expense of increased complexity~\cite{panteleev2021degenerate}. Another approach focuses on identifying and mitigating trapping sets-small harmful substructures in the Tanner graph that lead to decoder failure-enabling targeted remedies to reduce the error floor~\cite{Raveendran2021TrappingSets,Chytas2025EnhancedMinSum}. More recently, Yao et al.~\cite{yao2024belief} proposed a guided-decimation decoder, where variable nodes are sequentially fixed based on BP likelihoods, and demonstrated that this method achieves performance comparable to BP+OSD while avoiding costly matrix inversions or stabilizer inactivation. Another notable line of work by Miao et al.~\cite{miao2024joint} proposes a code construction method that eliminates length-4 cycles in the joint Tanner graph, thereby mitigating short-cycle-induced decoding failures and improving BP convergence.

Another major advancement is the development of non-binary quantum LDPC codes over higher-order finite fields, motivated by the superior decoding performance observed in classical non-binary LDPC codes~\cite{DaveyMacKayGFq,bennatan2006design,4155118}. While binary LDPC codes typically exhibit optimal performance with column weight three, non-binary LDPC codes achieve their best performance with column weight two. Leveraging this property, Kasai et al.~\cite{6017122} introduced quantum LDPC codes constructed from circulant permutation matrices (CPMs), using non-binary LDPC codes with column weight two as the underlying structure. This construction demonstrated both improved decoding performance and enhanced structural flexibility. Building upon this approach, Komoto and Kasai~\cite{komoto2024quantumerrorcorrectionnear} further generalized the methodology by proposing a systematic framework based on {\itshape affine permutation matrices} (APMs), thereby enabling greater design freedom and decoding robustness. It should be noted that although the quantum error-correcting codes proposed in~\cite{6017122,komoto2024quantumerrorcorrectionnear} are constructed using non-binary LDPC codes, the resulting quantum codes are ultimately binary CSS codes.

Non-binary LDPC codes can be implemented efficiently and cost-effectively on GPUs, making real-time decoding feasible~\cite{9606216}. Recent advances include implementations of FFT-based sum–product or min-max algorithms over $\Fb_q$ on modern GPU architectures, achieving throughputs in the multi-gigabit range. For example, a GPU-accelerated min-max decoder achieved approximately 1.4 Gbps~\cite{9606216}, while a layered-min-sum implementation on RTX4090 delivered up to 27 Gbps for long block-length codes~\cite{liu2024gpu}.

The high-rate codes (with row weight $L \ge 8$ and rate $R \ge 1/2$) proposed in~\cite{komoto2024quantumerrorcorrectionnear} achieve a sharp threshold using joint BP alone (see Figure~\ref{fig:comparison_ISIT}). Experimental results demonstrate an exponential decrease in error rates as the block length increases. However, as the code length increases and the rate decreases, these codes exhibit more pronounced error-floor behavior in the frame error rate (FER) curves at low noise levels (see Figure~\ref{fig:comparison_ISIT}). This error floor is attributed to the presence of short cycles in the Tanner graph.
For codes with row weight $L = 8$ and rate $R = 1/2$, the issue was addressed in~\cite{kasai2025quantumerrorcorrectiongirth16} by exploiting the non-commutativity of APMs to develop a modified construction method that eliminates such cycles. As a result, the FER was successfully reduced to $10^{-5}$ at a physical error rate of $p_D = 6.49\%$ over the depolarizing channel. Notably, this performance approaches the hashing bound for the same rate, which is given by $p_D^* = 7.44\%$.

To enable reliable error correction at even higher noise levels, we focus on codes with row weight $L = 6$ and rate $R = 1/3$, for which the error-floor behavior is more pronounced than in the $L = 8$, $R = 1/2$ case. Our target is to achieve a FER of $10^{-4}$. We propose a modified code construction and decoding method to address this issue. Unlike the approach in~\cite{panteleev2021degenerate}, where joint BP leaves residual errors proportional to the block length, the errors uncorrected by joint BP in our setting are limited to a constant number of bits, independent of the code length. This opens the possibility of correcting them through efficient post-processing. Preliminary results toward this goal were briefly explored in~\cite{kasai2025efficient}, which focused on a specific class of graph structures. In contrast, the present work significantly extends that direction by proposing a new code construction method along with theoretical justification and empirical validation.

This paper presents a new code construction method for quantum error correction based on non-binary LDPC codes, extending the conventional framework~\cite{komoto2024quantumerrorcorrectionnear} with the following key innovations:
\begin{itemize}
    \item We identify inevitable cycle structures that arise from the orthogonality condition inherent to CSS-type constructions.
    \item We analyze these cycles and demonstrate that they can be classified into harmful ones that contribute to logical errors and harmless ones that do not.
    \item We systematically eliminate the harmful short cycles in the Tanner graph by strategically assigning nonzero symbols over higher-order Galois fields.
    \item We develop a decoding algorithm that explicitly accounts for degeneracy, leading to substantial improvements in decoding performance.
    \item Simulation results over the depolarizing channel show that the proposed code, with $n = 312{,}000$ physical qubits, achieves a frame error rate (FER) of $10^{-4}$ at a physical error rate of $p_D = 9.45\%$, approaching the quantum hashing bound.
\end{itemize}
Together, these contributions provide a viable pathway toward practical, high-performance quantum error correction by balancing algebraic code design with decoding robustness.

We summarize the notation used throughout this paper as follows.  
For a natural number \(n\), we define \([n] = \{0, 1, \ldots, n-1\}\).  
The finite field with \(q\) elements is denoted by \(\Fb_q\), and the ring of integers modulo \(n\) is denoted by \(\Zb_n\), representing the set of residue classes modulo \(n\).  
Vectors and one-dimensional arrays are denoted using underlined symbols, such as 
\begin{align}
 \xU = (x_0, x_1, \ldots, x_{N-1}).  
\end{align}
Unless otherwise stated, all vectors are treated as column vectors, even when written in row form.  
To simplify notation, the transpose symbol is omitted wherever possible.  
Indices for arrays and matrices are assumed to start from zero, unless noted otherwise.  
Elements of the finite field \(\Fb_q\) are typically denoted using Greek letters such as \(\xi\).  

\section{Permutation Matrices for LDPC Codes}\label{011257_4Jun25}
This section provides a foundational framework for understanding how permutation matrices (PMs) are used in the construction of structured LDPC codes, particularly in the quantum setting.  
We begin by introducing permutation matrices and their algebraic properties, followed by two important subclasses: APMs and CPMs.  
We then discuss how arrays of PMs can be used to define binary LDPC parity-check matrices and classify such constructions as protograph-based, APM-based, or QC-based codes depending on the permutation class.  
To analyze the structure of these matrices, we introduce the concept of {\itshape block cycles}, define their composite functions, and explain how they relate to the existence of cycles in the Tanner graph.  
In particular, we focus on totally closed block cycles (TCBCs), which play a central role in determining the upper bound of girth.
We conclude the section by presenting an explicit example of a TCBC and a theorem establishing an upper bound on the girth for QC-LDPC codes, motivating the use of non-commutative APMs for our construction.
\subsection{Permutation Matrices (PMs)}
Let $P$ be a positive integer. Define $\mathcal{F}_P$ as the set of all permutations from $[P]$ to itself. 
The identity permutation is written as $\mathrm{id}$.
A permutation \( f \in \mathcal{F}_P \) is associated with a binary permutation matrix \( F \in \Fb_2^{P \times P} \) of size \( P \) through the rule:  
\begin{align}
  f(j) = i \iff F_{i,j} = 1, 
\end{align}
for $i,j\in [P]$. 
We denote this correspondence by $f \sim F$, or simply write $f = F$ when the context is clear.
With a slight abuse of notation, we will freely switch between viewing $f$ as a permutation and as its associated matrix $F$.

This correspondence preserves the algebraic structure of permutations in the matrix domain.  
Let $f, g \in \mathcal{F}_P$ and $F, G \in \Fb_2^{P\times P}$ with $f \sim F$ and $g \sim G$.  
Then the composition of permutations corresponds to matrix multiplication: the composed permutation $f \circ g$ satisfies $f \circ g \sim FG$.  
That is, applying $g$ first and then $f$ is equivalent to multiplying the corresponding permutation matrices $G$ and $F$ (from right to left).  
We say that two permutations $f$ and $g$ \emph{commute} if $f \circ g = g \circ f$.
Furthermore, the inverse of a permutation corresponds to the transpose of its matrix: $f^{-1} \sim F^\top$.  
This follows from the fact that transposing a permutation matrix swaps its rows and columns, effectively reversing the direction of the permutation.

The following lemma will be used several times later, so we assign it a number for easy reference.  
\begin{lem}\label{lem:194543_14Jun25}
If $f$ and $g$ commute, then each of the following pairs also commute: $f^{-1}$ and $g^{-1}$, $f$ and $g^{-1}$, and $f^{-1}$ and $g$.
\end{lem}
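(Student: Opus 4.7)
The plan is to derive all three commutation relations directly from the hypothesis $f \circ g = g \circ f$ by composing with appropriate inverses, using only the basic group-theoretic identities $(h_1 \circ h_2)^{-1} = h_2^{-1} \circ h_1^{-1}$ and $h \circ h^{-1} = h^{-1} \circ h = \mathrm{id}$, which hold because $\mathcal{F}_P$ is a group under composition.

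First, for the pair $f^{-1}$ and $g^{-1}$, I would take the inverse of both sides of $f \circ g = g \circ f$. The left-hand side becomes $(f \circ g)^{-1} = g^{-1} \circ f^{-1}$, and the right-hand side becomes $(g \circ f)^{-1} = f^{-1} \circ g^{-1}$, yielding $g^{-1} \circ f^{-1} = f^{-1} \circ g^{-1}$ immediately.

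Second, for the pair $f$ and $g^{-1}$, I would start from $f \circ g = g \circ f$ and compose both sides on the right by $g^{-1}$ to obtain $f = g \circ f \circ g^{-1}$, then compose both sides on the left by $g^{-1}$ to obtain $g^{-1} \circ f = f \circ g^{-1}$. The pair $f^{-1}$ and $g$ is then handled by the symmetric argument, interchanging the roles of $f$ and $g$ in the previous step (which is valid since commutation is symmetric in its two arguments), or equivalently by starting from $f \circ g = g \circ f$ and composing on both sides with $f^{-1}$.

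There is really no main obstacle here; the statement is a routine consequence of the group structure of $\mathcal{F}_P$, and the only care required is in the order of composition when inverting products. The reason the lemma is singled out in the paper is presumably that it will be invoked repeatedly in later arguments about block cycles, where $f$ and $g$ are specific permutations arising from rows or columns of the APM array.
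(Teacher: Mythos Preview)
Your proof is correct and essentially identical to the paper's: the paper also derives $g^{-1}\circ f = f\circ g^{-1}$ by composing $f\circ g = g\circ f$ first on the right and then on the left with $g^{-1}$, and then dispatches the remaining two pairs with the phrase ``in the same way.'' Your use of $(f\circ g)^{-1} = g^{-1}\circ f^{-1}$ for the pair $f^{-1},g^{-1}$ is a minor cosmetic variation, not a different approach.
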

\begin{proof}
Suppose $f \circ g = g \circ f$.  
Then we have
\[
f \circ g \circ g^{-1} = g \circ f \circ g^{-1},
\quad \text{so} \quad
f = g \circ f \circ g^{-1}.
\]
Composing both sides on the left with $g^{-1}$ gives
\[
g^{-1} \circ f = f \circ g^{-1}.
\]
Thus, $f$ and $g^{-1}$ commute.  
In the same way, we obtain
\[
f^{-1} \circ g = g \circ f^{-1}
\quad \text{and} \quad
f^{-1} \circ g^{-1} = g^{-1} \circ f^{-1}.
\]
\end{proof}
\subsection{Affine and Circulant Permutation Matrices}\label{subsec:144040_31May25}
For integers \( a, b \in \Zb_P \), we define a permutation \( f: \Zb_P \to \Zb_P \) of the form  
\[
f(j) = a j + b \mod P.
\]
Note that the sets \(\Zb_P\) and \([P]\) are identical as sets.  
It is known that \( f \in \Fc \) if and only if \( \gcd(a, P) = 1 \)~\cite{myung2006combining}.  
A permutation matrix arising from a permutation of the above form is called an \emph{affine permutation matrix} (APM).  
In this paper, we identify permutation functions with their corresponding permutation matrices, and hence refer to such matrices as APMs.  
We denote the set of all such matrices by \( \Ac_P \).  
When \( a = 1 \), the affine permutation reduces to a simple shift: \( f(j) = j + b \mod P \).  
The corresponding permutation matrix is called a \emph{circulant permutation matrix} (CPM), and we denote the set of all such matrices by \( \Qc_P \).  
Any two elements \( f, g \in \Qc_P \) commute with each other; that is, \( f \circ g = g \circ f \) always holds.  
In contrast, two affine permutations \( f, g \in \Ac_P \) do not necessarily commute.
\begin{example}
Consider the affine permutation over $\Zb_5$ defined by
\[
f(j) = 2j + 1 \mod 5.
\]
Note that both row and column indices are taken modulo $5$. The corresponding permutation matrix $F \in \Fb_2^{5 \times 5}$ is given by
\[
\renewcommand{\arraystretch}{0.8}
\setlength{\tabcolsep}{0.1pt}
F = \begin{pmatrix}
0 & 0 & 1 & 0 & 0 \\
1 & 0 & 0 & 0 & 0 \\
0 & 0 & 0 & 1 & 0 \\
0 & 1 & 0 & 0 & 0 \\
0 & 0 & 0 & 0 & 1
\end{pmatrix}.
\]
The inverse permutation is given by
\[
f^{-1}(i) = 3i + 2 \mod 5,
\]
which can be verified directly by checking that $f \circ f^{-1} = \mathrm{id}$.
The inverse permutation corresponds to the transpose of the matrix:
\[
\renewcommand{\arraystretch}{0.8}
F^\top = \begin{pmatrix}
0 & 1 & 0 & 0 & 0 \\
0 & 0 & 0 & 1 & 0 \\
1 & 0 & 0 & 0 & 0 \\
0 & 0 & 1 & 0 & 0 \\
0 & 0 & 0 & 0 & 1
\end{pmatrix}.
\]
\end{example}
\subsection{Low-Density Parity-Check Matrices}\label{045449_15Feb25}
We represent the column and row weights by integers $J \ge 2$ and even $L \ge 4$, respectively.  
Let $\HH = (f_{ij}) \in \Fc_P^{J \times L}$ be a permutation array whose entries are taken from a set of permutations $\Fc_P$.  
This array can also be interpreted as a binary parity-check matrix in $\HH=(f_{ij})=(F_{ij})\in \Fb_2^{JP \times LP}$.  
We refer to each $P \times P$ matrix appearing in $\HH$ as a \emph{block}.  
LDPC codes defined by such parity-check matrices are conventionally referred to as protograph codes, APM-LDPC codes, and quasi-cyclic LDPC (QC-LDPC) codes when the permutations are taken from $\Fc_P$, $\Ac_P$, and $\Qc_P$, respectively.  
As an example, see the parity-check matrices $\HH_X$ and $\HH_Z$ in Example~\ref{ex:021639_17Apr25}.  
\subsection{Block Cycles}
Let $\HH = (f_{ij})$ be a parity-check matrix, where each $f_{ij} \in \Fc_P$.  
We consider a path that moves alternately in horizontal and vertical directions across the blocks of $\HH$, beginning and ending at the same block.  
Such a path is represented as
\[
\cU:=f_{11} \to f_{12} \to f_{21} \to f_{22} \to \cdots \to f_{n1} \to f_{n2} \to f_{11}
\]
within the following submatrix $S$ of $\HH$:
\begin{align}
S= \left(
    \begin{array}{llllllll}
  f_{11}        & f_{12}        & *                & *                & *                & *                \\
  *             & f_{21}        & f_{22}           & *                & *                & *                \\
  *             & *             & \ddots           & \ddots           & *                & *                \\
  *             & *             & *                & *                & f_{n-1,1}           & f_{n-1,2}           \\
  f_{n2}        & *             & *                & \cdots           & *                & f_{n1}           
    \end{array}
\right).
\end{align}
In this construction, each step in the path moves either to a different row or to a different column, but not both simultaneously.  
That is, consecutive blocks must differ in exactly one of their row or column indices.  
For instance, the column indices of $f_{11}$ and $f_{12}$ must be different, and likewise, the row indices of $f_{12}$ and $f_{21}$ must differ.
Such a path $\cU$ is referred to as a \emph{block cycle}.

We define the \emph{composite function} associated with the block cycle as
\begin{align}
f_{\cU}(j) := \bigl(f_{n2}^{-1} f_{n1} \cdots f_{22}^{-1} f_{21} f_{12}^{-1} f_{11}\bigr)(j),\label{143950_31May25}
\end{align}
for $j \in \Zb_P$, where the composition operator $\circ$ is omitted for simplicity.  
Equivalently, the inverse function can be written as
\begin{align}
f_{\cU}^{-1}(j) = \bigl(f_{11}^{-1} f_{12} \cdots f_{n1}^{-1} f_{n2}\bigr)(j), \quad j \in \Zb_P.
\end{align}
We call the block cycle \emph{closed} if $f_{\cU}(j) = j$ for some $j \in \Zb_P$, and \emph{open} otherwise.  
In particular, if $f_{\cU}(j) = j$ for all $j \in \Zb_P$, that is, if $f_{\cU} = \mathrm{id}$, the block cycle is said to be \emph{totally closed}.

For a block cycle $\cU$ in $\HH$, there are at most $P$ corresponding Tanner cycles in $\HH$, each denoted by $\Ct$.  
We say that $\Ct$ is contained in $\cU$, and write $\cU \in \Ct$.  
We use $\cU$ also to denote the set of all such Tanner cycles contained in the block cycle $\cU$.  
The existence of a closed block cycle implies the presence of a cycle in the corresponding Tanner graph~\cite{myung2006combining,yoshida2019linear}.  
Conversely, if no closed block cycle exists, then the Tanner graph contains no cycles.
To avoid confusion with block cycles, we explicitly refer to cycles in the Tanner graph as \emph{Tanner cycles} throughout this paper.  
We define the \emph{girth} of a given parity-check matrix as the length of the shortest closed block cycle contained in it.  
This length coincides with the length of the shortest cycle in the Tanner graph.

When column weight $J = 2$, each step in the block cycle alternates between the first and second row blocks, and hence its length is always a multiple of 4.  
In such cases, if we assume that the cycle starts from a block in the upper row and that the first move is in the horizontal direction, the block cycle can be uniquely specified by the sequence of column indices.
Moreover, if every element in the first row block appears in a unique column (i.e., no two entries in the same row block share a column), then the entire block cycle can be fully described by the sequence of permutation elements.
The following example illustrates a concrete instance of a block cycle and demonstrates how the composite function $f_{\cU}$ is evaluated.  
In particular, it shows that the cycle is totally closed, thereby providing a constructive example of a totally closed block cycle (TCBC).
\begin{example}\label{ex:215134_16May25}
We consider the following example of a parity-check matrix $\HH$, which corresponds to the one shown in~\eqref{162406_8Jun25} and illustrated in Example~\ref{ex:021639_17Apr25}.  
\begin{align}
\HH_X = \left(
\begin{array}{ccc||ccc}
f_0 & f_1 & f_2 & g_0 & g_1 & g_2 \\\hline
f_2 & f_0 & f_1 & g_2 & g_0 & g_1
\end{array}
\right).  
\end{align}
Let us consider a block cycle $\cU$ starting from the $(0,0)$-th block in the matrix $\HH$.  
\begin{align}
&\cU =\ 
\overline{f}_0 \to \overline{g}_0 \to g_2 \to f_0 \to \overline{f}_1 \to \overline{g}_2 \to g_1 \to f_1 \to \overline{f}_2 \to \overline{g}_1 \to g_0 \to f_2 \to \overline{f}_0.
\end{align}
where overlined symbols indicate blocks in the upper row.  
 Since the entries in the upper row are distinct, each one determines a unique column index.  
This traversal can be compactly expressed by the sequence:
 \begin{align}
\cU= f_0\Rightarrow g_0\Rightarrow f_1\Rightarrow g_2\Rightarrow f_2\Rightarrow g_1\Rightarrow f_0.
 \end{align}
The notation $\to$ indicates the full sequence of block-level transitions, whereas $\Rightarrow$ is used in the compact form, omitting the intermediate transitions between upper and lower rows.  
From the values in~\eqref{103357_10Jun25}, evaluating the composite function $f_{\cU}$ associated with this block cycle yields $f_{\cU}(x) = x$.  
Therefore, this example constitutes a TCBC.  
\end{example}

From the following theorem, it follows that any QC-LDPC code whose parity-check matrix contains  commutative submatrices of size at least $2 \times 3$ has girth at most 12.
\begin{teiri}[\cite{1317123}]\label{144806_5Apr25}
Let $a, b, c, d, e, f \in \mathcal{F}_P$ be mutually commuting elements.  
In particular, this commutativity condition is automatically satisfied when $a, b, c, d, e, f \in \Qc_P$.
Consider the following $2 \times 3$ subarray of a larger permutation array $H$:
\begin{align}
\left( 
\begin{array}{ccc}
a & b & c \\
d & e & f
\end{array}
\right).
\end{align}
Then, the block cycle of length 12  ${\cU}:=a \Rightarrow b \Rightarrow c \Rightarrow a \Rightarrow b \Rightarrow c \Rightarrow a$
is closed. Moreover, it forms a TCBC.  
\end{teiri}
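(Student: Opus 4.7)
My plan is to split the argument into two phases: first unfold the compact cycle into its explicit sequence of blocks, and then evaluate the composite function associated with it.

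In phase one, I would expand $a \Rightarrow b \Rightarrow c \Rightarrow a \Rightarrow b \Rightarrow c \Rightarrow a$ into the underlying sequence of twelve block-level transitions, enforcing the strict alternation of horizontal and vertical moves required by the definition of a block cycle in Section~\ref{011257_4Jun25}. Because the subarray is only $2 \times 3$, once the starting block is fixed at $a$ and the first move is taken to be horizontal, the unfolding is uniquely determined and yields
\begin{align}
a \to b \to e \to f \to c \to a \to d \to e \to b \to c \to f \to d \to a.
\end{align}
This confirms both that the block cycle is well-defined and that it closes after exactly $12$ edges.

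In phase two, I would read off the six horizontal pairs $(f_{i,1},f_{i,2})$ for $i=1,\ldots,6$ as $(a,b),(e,f),(c,a),(d,e),(b,c),(f,d)$, and substitute them into the formula~\eqref{143950_31May25} to obtain
\begin{align}
f_{\cU}=d^{-1}f\cdot c^{-1}b\cdot e^{-1}d\cdot a^{-1}c\cdot f^{-1}e\cdot b^{-1}a.
\end{align}
Each of $a,b,c,d,e,f$ appears exactly once in this product, and each inverse $a^{-1},\ldots,f^{-1}$ also appears exactly once. Since $a,b,c,d,e,f$ mutually commute, Lemma~\ref{lem:194543_14Jun25} extends the commutativity to every pair involving inverses, so I can reorder the twelve factors freely. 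Pairing each element with its own inverse collapses the product to $(aa^{-1})(bb^{-1})(cc^{-1})(dd^{-1})(ee^{-1})(ff^{-1})=\mathrm{id}$, giving $f_{\cU}=\mathrm{id}$. Hence the block cycle is totally closed, and the special case $a,\ldots,f \in \Qc_P$ follows at once because any two CPMs commute.

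The main obstacle is really bookkeeping rather than mathematics: the compact $\Rightarrow$ notation hides the intermediate lower-row blocks, so one has to unfold the cycle carefully while tracking which transitions are horizontal (and therefore contribute a pair $(f_{i,1},f_{i,2})$ to the composite function) versus vertical. Once the indexing is correct, the cancellation is an immediate consequence of commutativity.
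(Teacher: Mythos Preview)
Your proof is correct and follows essentially the same approach as the paper: both compute the composite function along the block cycle and use mutual commutativity to cancel each permutation against its inverse. The paper does this more tersely, writing down $f_{\cU}^{-1}= a^{-1}b\, e^{-1} f\, c^{-1} a\, d^{-1} e\, b^{-1} c\, f^{-1} d=\mathrm{id}$ directly without explicitly unfolding the $\Rightarrow$ notation, whereas you spell out the twelve-step expansion first; but the underlying argument is identical.
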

\begin{proof}
Since all the elements commute, the following composition reduces to the identity function:
\begin{align}
f_{\cU}^{-1}= a^{-1}b  e^{-1} f  c^{-1} a  d^{-1} e  b^{-1} c  f^{-1} d   = \mathrm{id}.
\end{align}
Hence, the composite function associated with this block cycle is the identity, confirming that it is a TCBC.
\end{proof}

This theorem implies that when the permutation array is constructed using mutually commuting blocks such as CPMs, the girth of the Tanner graph is upper bounded by $12$.
In~\cite{kasai2025quantumerrorcorrectiongirth16}, a novel construction was proposed using APMs that do not necessarily commute. By exploiting this non-commutativity, the authors succeeded in constructing Calderbank--Shor--Steane (CSS) codes whose Tanner graphs achieve girth $16$.

\section{Conventional Construction}\label{192200_2Jun25}
In the remainder of this paper, we restrict our attention to the case where the column weight is $J = 2$.
In~\cite{komoto2024quantumerrorcorrectionnear}, the construction method for CSS codes based on non-binary LDPC codes-originally proposed for $\Qc_P$-valued arrays in~\cite{6017122}-was generalized to support $\Fc_P$-valued arrays.  
This generalized method \cite{komoto2024quantumerrorcorrectionnear} is referred to as the \emph{conventional construction} in this paper.

Within this framework, we ultimately construct the binary parity-check matrices $H_X$ and $H_Z$ that define a CSS code from two arrays of permutations 
\begin{align}
 \fU := (f_0, \ldots, f_{L/2-1}) \tAND \gU := (g_0, \ldots, g_{L/2-1}) 
\end{align}
in $\Fc_P^{L/2}$. As intermediate steps, we generate auxiliary binary matrices $\HH_X$ and $\HH_Z$, along with non-binary matrices $H_\Gamma$ and $H_\Delta$, which are subsequently used in the decoding process. Since this construction serves as the foundation for the proposed code construction presented in Section~\ref{sec:150611_18May25}, we provide a detailed overview below.

Section~\ref{101943_12Jun25} describes a construction of orthogonal binary matrix pairs $(\HH_X, \HH_Z)$.  
Section~\ref{sec:231308_5Jun25} explains how to convert these binary matrices into orthogonal $\Fb_q$-valued matrix pairs $(H_\Gamma, H_\Delta)$.  
Finally, Section~\ref{092546_7Jun25} presents a construction of binary matrix pairs $(H_X, H_Z)$ that are equivalent to $(H_\Gamma, H_\Delta)$.
\subsection{Construction of $\HH_X$ and $\HH_Z$}
\label{101943_12Jun25}
The following requirement on $\fU$ and $\gU$ is used to ensure that the resulting matrices $\HH_X$ and $\HH_Z$ are orthogonal \cite{komoto2024quantumerrorcorrectionnear}.
\begin{yosei}\label{yosei:orthogonal}
We require that the following commutativity condition holds:
\begin{align}
  g_{\ell-j} \, f_{k-\ell} = f_{k-\ell} \, g_{\ell-j} \quad \text{for all } \ell \in [L/2],\ j,k \in [J]. \label{162154_15Feb25}
\end{align}
 Here, the indices of $f$ and $g$ are understood modulo $L/2$. 
\end{yosei}
This condition~\eqref{162154_15Feb25} is equivalently expressed by any of the following three statements.  
\begin{align}
& f_{\ell - j} \, g_{k - \ell} = g_{k - \ell} \, f_{\ell - j} \quad \text{for all } \ell \in [L/2],\ j, k \in [J], \label{192555_14Jun25}\\
& \text{$f_\ell$ commutes with $g_{- \ell + j}$ for all $\ell \in [L/2]$ and $j \in \{0,\pm 1,\ldots,\pm (J-1)\}$,} \label{191912_14Jun25}\\
 & \text{$g_\ell$ commutes with $f_{- \ell + j}$ for all $\ell \in [L/2]$ and $j \in \{0,\pm 1,\ldots,\pm (J-1)\}$.}
\end{align}

This commutativity condition guarantees that each pair of corresponding blocks in $\HH_X$ and $\HH_Z$ ensures the orthogonality.
As shown in Table~\ref{fig:commutativity_conditions}, the table lists the $(\fU, \gU)$ pairs for which commutativity is required when $L = 6,8$ and $10$. 
A value of ``1'' indicates that the pair must satisfy the commutativity condition $f_i g_j = g_j f_i$, while ``--'' denotes that no such requirement is imposed.
When $L = 6$, it is known that all $f_i$ and $g_j$ commute.
\begin{table}[htbp]\label{table:145006_12Jun25}
  \centering
  \renewcommand{\arraystretch}{1.2}
  \caption{Commutativity matrices for various values of $L$ with fixed $J=2$.}
  \label{fig:commutativity_conditions}
  \begin{subfigure}[t]{0.3\textwidth}
    \caption{$J=2$, $L = 6$}
    \centering
    \[
    \begin{array}{c|ccc}
        & g_0 & g_1 & g_2 \\
        \hline
        f_0 & 1 & 1 &  1 \\
        f_1 & 1 & 1 &  1 \\
        f_2 & 1 & 1 &  1 \\
    \end{array}
    \]
  \end{subfigure}
  \begin{subfigure}[t]{0.3\textwidth}
    \caption{$J=2$, $L = 8$}
    \centering
    \[
    \begin{array}{c|cccc}
        & g_0 & g_1 & g_2 & g_3 \\
        \hline
        f_0 & 1 & 1 & - & 1 \\
        f_1 & 1 & - & 1 & 1 \\
        f_2 & - & 1 & 1 & 1 \\
        f_3 & 1 & 1 & 1 & - \\
    \end{array}
    \]
  \end{subfigure}
  \begin{subfigure}[t]{0.3\textwidth}
    \caption{$J=2$, $L = 10$}
    \centering
    \[
    \begin{array}{c|ccccc}
        & g_0 & g_1 & g_2 & g_3 & g_4 \\
        \hline
        f_0 & 1 & 1 & - & - & 1 \\
        f_1 & 1 & - & - & 1 & 1 \\
        f_2 & - & - & 1 & 1 & 1 \\
        f_3 & - & 1 & 1 & 1 & - \\
        f_4 & 1 & 1 & 1 & - & - \\
    \end{array}
    \]
  \end{subfigure}
\end{table}

We show that the matrices $\HH_X$ and $\HH_Z$, defined below using $\fU$ and $\gU$ that satisfy Requirement \ref{yosei:orthogonal}, are orthogonal; that is, $\HH_X \HH_Z^\top = 0$. Let us begin by defining these matrices.
\begin{df}{Construction of $\HH_X$ and $\HH_Z$}\label{141246_12Jun25}
Let $f_0, \dots, f_{L/2-1}$ and $g_0, \dots, g_{L/2-1}$ be elements of $\mathcal{F}_P$, and assume that the indices are extended cyclically, i.e., $f_k = f_{k \bmod (L/2)}$ and similarly for $g_k$.

We first define the following four types of $\mathcal{F}_P$-valued arrays of length $L/2$:
\begin{align}
  \begin{array}{llll}
  &(\HH^{(\Lrm,j)}_X)_{\ell} = f_{ \ell-j}, \quad &(\HH^{(\Rrm,j)}_X)_{\ell} = g_{ \ell-j}, 
\\&(\HH^{(\Lrm,j)}_Z)_{\ell} = g^{-1}_{-(\ell-j)}, \quad  &(\HH^{(\Rrm,j)}_Z)_{\ell} = f^{-1}_{-(\ell-j)},
 \end{array}
\end{align}
for $j\in [L/2]$ and $\ell \in [L/2]$.
The arrays $\HH_X^{(\Lrm,j)}$, $\HH_X^{(\Rrm,j)}$, $\HH_Z^{(\Lrm,j)}$, and $\HH_Z^{(\Rrm,j)}$ serve as building blocks for constructing the permutation arrays $\hat{H}_X$ and $\hat{H}_Z$ used in our CSS code.

Next, we define the $\Fc_P$-valued $J \times L$ array $\HH_X$ as the horizontal concatenation of the left and right halves:
 \begin{align}
  \HH_X &\defeq (\HH^{(\Lrm)}_X\mid \HH^{(\Rrm)}_X) 
\\&\defeq
\left(\begin{array}{l|l} 
\HH^{(\Lrm,0)}_X&\HH^{(\Rrm,0)}_X\\
\HH^{(\Lrm,1)}_X&\HH^{(\Rrm,1)}_X\\
 \vdots&\vdots\\
\HH^{(\Lrm,J-1)}_X&\HH^{(\Rrm,J-1)}_X
 \end{array}
\right)
\end{align}
Alternatively, letting $N := LP$ and $M := JP$, the matrix $\HH_X$ can equivalently be regarded as a binary matrix of size $M \times N$.
The matrix $\HH_Z$ is defined analogously, using the row vectors $\HH^{(\Lrm,j)}_Z$ and $\HH^{(\Rrm,j)}_Z$ in place of their $X$-counterparts.
\end{df}

Alternatively, if we define the $j$-th row of $\HH_X$ for $j\in [J]$ as
\begin{align}
\HH^{(j)}_X \coloneqq (\HH^{(\Lrm,j)}_X\ |\ \HH^{(\Rrm,j)}_X),\label{025034_18May25}
\end{align}
then the matrix can be compactly expressed as
\begin{align}
\HH_X =
\left(
\begin{array}{c}
\HH^{(0)}_X \\
\HH^{(1)}_X \\
\vdots \\
\HH^{(J-1)}_X
\end{array}
\right).\label{113347_20May25}
\end{align}

Strictly speaking, the index range $j \in [J]$ is sufficient for the definition of $\HH_X$.
However, we later make use of $\HH_X^{(2)}$, which coincides with $\HH_X^{(L/2 - 1)}$
due to the $L/2$-periodicity of the permutation indices $f_i$ and $g_j$.
Therefore, we adopt the extended index range $j \in [L/2]$ in the current definition.

\begin{example}\label{ex:021639_17Apr25}
Let $P = 8$, $J = 2$, and $L = 6$.
Using the APMs or CPMs $f_0, \ldots, f_2$ and $g_0, \ldots, g_2$ specified as follows:
\begin{align}
\fU &= (5X + 7,\quad 5X + 3,\quad 1X + 6), \\
\gU &= (5X + 7,\quad 5X + 5,\quad 5X + 7),
\end{align}
we compute their inverses:
\begin{align}
(f_0^{-1},\ f_1^{-1},\ f_2^{-1}) &= (5X + 5,\quad 5X + 1,\quad 1X + 2), \\
(g_0^{-1},\ g_1^{-1},\ g_2^{-1}) &= (5X + 5,\quad 5X + 7,\quad 5X + 5).
\end{align}

Based on these permutations, we define the following $J \times L$ protograph-based permutation array $\HH_X$:
\begin{align}
\HH_X 
&= \left(
\begin{array}{lll||lll}
f_0 & f_1 & f_2 & g_0 & g_1 & g_2 \\\hline
f_{-1} & f_0 & f_1 & g_{-1} & g_0 & g_1
\end{array}
\right) \\
&= \left(
\begin{array}{ccc||ccc}
f_0 & f_1 & f_2 & g_0 & g_1 & g_2 \\\hline
f_2 & f_0 & f_1 & g_2 & g_0 & g_1
\end{array}
\right) \label{162406_8Jun25}\\
&= \left(
\begin{array}{ccc||ccc}
5X + 7 & 5X + 3 & 1X + 6 & 5X + 7 & 5X + 5 & 5X + 7 \\\hline
1X + 6 & 5X + 7 & 5X + 3 & 5X + 7 & 5X + 7 & 5X + 5
\end{array}
\right).\label{103357_10Jun25}
\end{align}

The corresponding matrix $\HH_Z$ is defined analogously by replacing each block in $\HH_X$ with its inverse, arranged according to the shifted index pattern:
\begin{align}
\HH_Z 
&= \left(
\begin{array}{ccc||ccc}
g_0^{-1} & g_{-1}^{-1} & g_{-2}^{-1} & f_0^{-1} & f_{-1}^{-1} & f_{-2}^{-1} \\\hline
g_1^{-1} & g_0^{-1} & g_{-1}^{-1} & f_1^{-1} & f_0^{-1} & f_{-1}^{-1}
\end{array}
\right) \\
&= \left(
\begin{array}{ccc||ccc}
g_0^{-1} & g_{2}^{-1} & g_{1}^{-1} & f_0^{-1} & f_{2}^{-1} & f_{1}^{-1} \\\hline
g_1^{-1} & g_0^{-1} & g_{2}^{-1} & f_1^{-1} & f_0^{-1} & f_{2}^{-1}
\end{array}
\right) \\
&= \left(
\begin{array}{ccc||ccc}
5X + 5 & 5X + 5 & 5X + 7 & 5X + 5 & 1X + 2 & 5X + 1 \\\hline
5X + 7 & 5X + 5 & 5X + 5 & 5X + 1 & 5X + 5 & 1X + 2
\end{array}
\right).
\end{align}

Let $M = JP = 16$ and $N = LP = 48$. The $M \times N$ binary matrices corresponding to $\HH_X$ and $\HH_Z$ are obtained by replacing each affine permutation with its associated $P \times P$ binary permutation matrix. The resulting binary matrices are shown below. 
The zero entries are represented as blanks.  
To facilitate understanding of the correspondence with the Tanner or factor graphs introduced in later sections, each block in the matrix is color-coded to match the corresponding edge in the graph. These matrices contain no cycles of length~4, but do contain cycles of length~8.

\definecolor{c00}{rgb}{1.00,0.00,0.00} 
\definecolor{c10}{rgb}{1.00,0.50,0.00} 
\definecolor{c01}{rgb}{1.00,1.00,0.00} 
\definecolor{c11}{rgb}{0.00,1.00,0.00} 
\definecolor{c02}{rgb}{0.00,1.00,1.00} 
\definecolor{c12}{rgb}{0.00,0.00,1.00} 
\definecolor{c03}{rgb}{0.50,0.00,1.00} 
\definecolor{c13}{rgb}{1.00,0.00,1.00} 
\definecolor{c04}{rgb}{0.50,0.50,0.50} 
\definecolor{c14}{rgb}{0.75,0.25,0.25} 
\definecolor{c05}{rgb}{0.25,0.75,0.25} 
\definecolor{c15}{rgb}{0.25,0.25,0.75} 
\begin{align}
 \\&\HH_X=
\left(
\begin{array}{c}
\HH^{(0)}_X \\
\HH^{(1)}_X 
\end{array}
\right)
=
\left(
\begin{array}{c|c|c||c|c|c}
\phantom{0}\phantom{0}\phantom{0}\phantom{0}\phantom{0}\tikzmark{2A}{\setlength{\fboxsep}{1pt}\colorbox{red!70!black}{\textcolor{white}{1}}}\phantom{0}\phantom{0}&\phantom{0}1\phantom{0}\phantom{0}\phantom{0}\phantom{0}\phantom{0}\phantom{0}&\phantom{0}\phantom{0}1\phantom{0}\phantom{0}\phantom{0}\phantom{0}\phantom{0}&\phantom{0}\phantom{0}\phantom{0}\phantom{0}\phantom{0}1\phantom{0}\phantom{0}&\phantom{0}\phantom{0}\phantom{0}\phantom{0}\phantom{0}\phantom{0}\phantom{0}1&\phantom{0}\phantom{0}\phantom{0}\phantom{0}\phantom{0}\tikzmark{2B}{\setlength{\fboxsep}{1pt}\colorbox{red!70!black}{\textcolor{white}{1}}}\phantom{0}\phantom{0}\\
\phantom{0}\phantom{0}1\phantom{0}\phantom{0}\phantom{0}\phantom{0}\phantom{0}&\phantom{0}\phantom{0}\phantom{0}\phantom{0}\phantom{0}\phantom{0}1\phantom{0}&\phantom{0}\phantom{0}\phantom{0}1\phantom{0}\phantom{0}\phantom{0}\phantom{0}&\phantom{0}\phantom{0}1\phantom{0}\phantom{0}\phantom{0}\phantom{0}\phantom{0}&\phantom{0}\phantom{0}\phantom{0}\phantom{0}1\phantom{0}\phantom{0}\phantom{0}&\phantom{0}\phantom{0}1\phantom{0}\phantom{0}\phantom{0}\phantom{0}\phantom{0}\\
\phantom{0}\phantom{0}\phantom{0}\phantom{0}\phantom{0}\phantom{0}\phantom{0}1&\phantom{0}\phantom{0}\phantom{0}\tikzmark{2E}{\setlength{\fboxsep}{1pt}\colorbox{red!70!black}{\textcolor{white}{1}}}\phantom{0}\phantom{0}\phantom{0}\phantom{0}&\phantom{0}\phantom{0}\phantom{0}\phantom{0}1\phantom{0}\phantom{0}\phantom{0}&\phantom{0}\phantom{0}\phantom{0}\phantom{0}\phantom{0}\phantom{0}\phantom{0}1&\phantom{0}\tikzmark{2F}{\setlength{\fboxsep}{1pt}\colorbox{red!70!black}{\textcolor{white}{1}}}\phantom{0}\phantom{0}\phantom{0}\phantom{0}\phantom{0}\phantom{0}&\phantom{0}\phantom{0}\phantom{0}\phantom{0}\phantom{0}\phantom{0}\phantom{0}1\\
\phantom{0}\phantom{0}\phantom{0}\phantom{0}1\phantom{0}\phantom{0}\phantom{0}&1\phantom{0}\phantom{0}\phantom{0}\phantom{0}\phantom{0}\phantom{0}\phantom{0}&\phantom{0}\phantom{0}\phantom{0}\phantom{0}\phantom{0}\tikzmark{2I}{\setlength{\fboxsep}{1pt}\colorbox{red!70!black}{\textcolor{white}{1}}}\phantom{0}\phantom{0}&\phantom{0}\phantom{0}\phantom{0}\phantom{0}\tikzmark{2J}{\setlength{\fboxsep}{1pt}\colorbox{red!70!black}{\textcolor{white}{1}}}\phantom{0}\phantom{0}\phantom{0}&\phantom{0}\phantom{0}\phantom{0}\phantom{0}\phantom{0}\phantom{0}1\phantom{0}&\phantom{0}\phantom{0}\phantom{0}\phantom{0}1\phantom{0}\phantom{0}\phantom{0}\\
\phantom{0}1\phantom{0}\phantom{0}\phantom{0}\phantom{0}\phantom{0}\phantom{0}&\phantom{0}\phantom{0}\phantom{0}\phantom{0}\phantom{0}1\phantom{0}\phantom{0}&\phantom{0}\phantom{0}\phantom{0}\phantom{0}\phantom{0}\phantom{0}1\phantom{0}&\phantom{0}1\phantom{0}\phantom{0}\phantom{0}\phantom{0}\phantom{0}\phantom{0}&\phantom{0}\phantom{0}\phantom{0}1\phantom{0}\phantom{0}\phantom{0}\phantom{0}&\phantom{0}1\phantom{0}\phantom{0}\phantom{0}\phantom{0}\phantom{0}\phantom{0}\\
\phantom{0}\phantom{0}\phantom{0}\phantom{0}\phantom{0}\phantom{0}1\phantom{0}&\phantom{0}\phantom{0}1\phantom{0}\phantom{0}\phantom{0}\phantom{0}\phantom{0}&\phantom{0}\phantom{0}\phantom{0}\phantom{0}\phantom{0}\phantom{0}\phantom{0}1&\phantom{0}\phantom{0}\phantom{0}\phantom{0}\phantom{0}\phantom{0}1\phantom{0}&1\phantom{0}\phantom{0}\phantom{0}\phantom{0}\phantom{0}\phantom{0}\phantom{0}&\phantom{0}\phantom{0}\phantom{0}\phantom{0}\phantom{0}\phantom{0}1\phantom{0}\\
\phantom{0}\phantom{0}\phantom{0}1\phantom{0}\phantom{0}\phantom{0}\phantom{0}&\phantom{0}\phantom{0}\phantom{0}\phantom{0}\phantom{0}\phantom{0}\phantom{0}1&1\phantom{0}\phantom{0}\phantom{0}\phantom{0}\phantom{0}\phantom{0}\phantom{0}&\phantom{0}\phantom{0}\phantom{0}1\phantom{0}\phantom{0}\phantom{0}\phantom{0}&\phantom{0}\phantom{0}\phantom{0}\phantom{0}\phantom{0}1\phantom{0}\phantom{0}&\phantom{0}\phantom{0}\phantom{0}1\phantom{0}\phantom{0}\phantom{0}\phantom{0}\\
{\setlength{\fboxsep}{1pt}\colorbox{blue!70!black}{\textcolor{white}{1}}}\phantom{0}\phantom{0}\phantom{0}\phantom{0}\phantom{0}\phantom{0}\phantom{0}&\phantom{0}\phantom{0}\phantom{0}\phantom{0}{\setlength{\fboxsep}{1pt}\colorbox{blue!70!black}{\textcolor{white}{1}}}\phantom{0}\phantom{0}\phantom{0}&\phantom{0}{\setlength{\fboxsep}{1pt}\colorbox{blue!70!black}{\textcolor{white}{1}}}\phantom{0}\phantom{0}\phantom{0}\phantom{0}\phantom{0}\phantom{0}&{\setlength{\fboxsep}{1pt}\colorbox{blue!70!black}{\textcolor{white}{1}}}\phantom{0}\phantom{0}\phantom{0}\phantom{0}\phantom{0}\phantom{0}\phantom{0}&\phantom{0}\phantom{0}{\setlength{\fboxsep}{1pt}\colorbox{blue!70!black}{\textcolor{white}{1}}}\phantom{0}\phantom{0}\phantom{0}\phantom{0}\phantom{0}&{\setlength{\fboxsep}{1pt}\colorbox{blue!70!black}{\textcolor{white}{1}}}\phantom{0}\phantom{0}\phantom{0}\phantom{0}\phantom{0}\phantom{0}\phantom{0}\\
\hline
\phantom{0}\phantom{0}1\phantom{0}\phantom{0}\phantom{0}\phantom{0}\phantom{0}&\phantom{0}\phantom{0}\phantom{0}\phantom{0}\phantom{0}1\phantom{0}\phantom{0}&\phantom{0}1\phantom{0}\phantom{0}\phantom{0}\phantom{0}\phantom{0}\phantom{0}&\phantom{0}\phantom{0}\phantom{0}\phantom{0}\phantom{0}1\phantom{0}\phantom{0}&\phantom{0}\phantom{0}\phantom{0}\phantom{0}\phantom{0}1\phantom{0}\phantom{0}&\phantom{0}\phantom{0}\phantom{0}\phantom{0}\phantom{0}\phantom{0}\phantom{0}1\\
\phantom{0}\phantom{0}\phantom{0}1\phantom{0}\phantom{0}\phantom{0}\phantom{0}&\phantom{0}\phantom{0}1\phantom{0}\phantom{0}\phantom{0}\phantom{0}\phantom{0}&\phantom{0}\phantom{0}\phantom{0}\phantom{0}\phantom{0}\phantom{0}1\phantom{0}&\phantom{0}\phantom{0}1\phantom{0}\phantom{0}\phantom{0}\phantom{0}\phantom{0}&\phantom{0}\phantom{0}1\phantom{0}\phantom{0}\phantom{0}\phantom{0}\phantom{0}&\phantom{0}\phantom{0}\phantom{0}\phantom{0}1\phantom{0}\phantom{0}\phantom{0}\\
\phantom{0}\phantom{0}\phantom{0}\phantom{0}1\phantom{0}\phantom{0}\phantom{0}&\phantom{0}\phantom{0}\phantom{0}\phantom{0}\phantom{0}\phantom{0}\phantom{0}1&\phantom{0}\phantom{0}\phantom{0}1\phantom{0}\phantom{0}\phantom{0}\phantom{0}&\phantom{0}\phantom{0}\phantom{0}\phantom{0}\phantom{0}\phantom{0}\phantom{0}1&\phantom{0}\phantom{0}\phantom{0}\phantom{0}\phantom{0}\phantom{0}\phantom{0}1&\phantom{0}1\phantom{0}\phantom{0}\phantom{0}\phantom{0}\phantom{0}\phantom{0}\\
\phantom{0}\phantom{0}\phantom{0}\phantom{0}\phantom{0}\tikzmark{2L}{\setlength{\fboxsep}{1pt}\colorbox{red!70!black}{\textcolor{white}{1}}}\phantom{0}\phantom{0}&\phantom{0}\phantom{0}\phantom{0}\phantom{0}1\phantom{0}\phantom{0}\phantom{0}&1\phantom{0}\phantom{0}\phantom{0}\phantom{0}\phantom{0}\phantom{0}\phantom{0}&\phantom{0}\phantom{0}\phantom{0}\phantom{0}\tikzmark{2K}{\setlength{\fboxsep}{1pt}\colorbox{red!70!black}{\textcolor{white}{1}}}\phantom{0}\phantom{0}\phantom{0}&\phantom{0}\phantom{0}\phantom{0}\phantom{0}1\phantom{0}\phantom{0}\phantom{0}&\phantom{0}\phantom{0}\phantom{0}\phantom{0}\phantom{0}\phantom{0}1\phantom{0}\\
\phantom{0}\phantom{0}\phantom{0}\phantom{0}\phantom{0}\phantom{0}1\phantom{0}&\phantom{0}1\phantom{0}\phantom{0}\phantom{0}\phantom{0}\phantom{0}\phantom{0}&\phantom{0}\phantom{0}\phantom{0}\phantom{0}\phantom{0}\tikzmark{2H}{\setlength{\fboxsep}{1pt}\colorbox{red!70!black}{\textcolor{white}{1}}}\phantom{0}\phantom{0}&\phantom{0}1\phantom{0}\phantom{0}\phantom{0}\phantom{0}\phantom{0}\phantom{0}&\phantom{0}\tikzmark{2G}{\setlength{\fboxsep}{1pt}\colorbox{red!70!black}{\textcolor{white}{1}}}\phantom{0}\phantom{0}\phantom{0}\phantom{0}\phantom{0}\phantom{0}&\phantom{0}\phantom{0}\phantom{0}1\phantom{0}\phantom{0}\phantom{0}\phantom{0}\\
\phantom{0}\phantom{0}\phantom{0}\phantom{0}\phantom{0}\phantom{0}\phantom{0}1&\phantom{0}\phantom{0}\phantom{0}\phantom{0}\phantom{0}\phantom{0}1\phantom{0}&\phantom{0}\phantom{0}1\phantom{0}\phantom{0}\phantom{0}\phantom{0}\phantom{0}&\phantom{0}\phantom{0}\phantom{0}\phantom{0}\phantom{0}\phantom{0}1\phantom{0}&\phantom{0}\phantom{0}\phantom{0}\phantom{0}\phantom{0}\phantom{0}1\phantom{0}&1\phantom{0}\phantom{0}\phantom{0}\phantom{0}\phantom{0}\phantom{0}\phantom{0}\\
1\phantom{0}\phantom{0}\phantom{0}\phantom{0}\phantom{0}\phantom{0}\phantom{0}&\phantom{0}\phantom{0}\phantom{0}\tikzmark{2D}{\setlength{\fboxsep}{1pt}\colorbox{red!70!black}{\textcolor{white}{1}}}\phantom{0}\phantom{0}\phantom{0}\phantom{0}&\phantom{0}\phantom{0}\phantom{0}\phantom{0}\phantom{0}\phantom{0}\phantom{0}1&\phantom{0}\phantom{0}\phantom{0}1\phantom{0}\phantom{0}\phantom{0}\phantom{0}&\phantom{0}\phantom{0}\phantom{0}1\phantom{0}\phantom{0}\phantom{0}\phantom{0}&\phantom{0}\phantom{0}\phantom{0}\phantom{0}\phantom{0}\tikzmark{2C}{\setlength{\fboxsep}{1pt}\colorbox{red!70!black}{\textcolor{white}{1}}}\phantom{0}\phantom{0}\\
\phantom{0}1\phantom{0}\phantom{0}\phantom{0}\phantom{0}\phantom{0}\phantom{0}&1\phantom{0}\phantom{0}\phantom{0}\phantom{0}\phantom{0}\phantom{0}\phantom{0}&\phantom{0}\phantom{0}\phantom{0}\phantom{0}1\phantom{0}\phantom{0}\phantom{0}&1\phantom{0}\phantom{0}\phantom{0}\phantom{0}\phantom{0}\phantom{0}\phantom{0}&1\phantom{0}\phantom{0}\phantom{0}\phantom{0}\phantom{0}\phantom{0}\phantom{0}&\phantom{0}\phantom{0}1\phantom{0}\phantom{0}\phantom{0}\phantom{0}\phantom{0}
\end{array}
\right),
\end{align}
\begin{tikzpicture}[remember picture, overlay]
  \draw[thick, red!50, ->] (pic cs:2A) -- (pic cs:2B);
  \draw[thick, red!50, ->] (pic cs:2B) -- (pic cs:2C);
  \draw[thick, red!50, ->] (pic cs:2C) -- (pic cs:2D);
  \draw[thick, red!50, ->] (pic cs:2D) -- (pic cs:2E);
  \draw[thick, red!50, ->] (pic cs:2E) -- (pic cs:2F);
  \draw[thick, red!50, ->] (pic cs:2F) -- (pic cs:2G);
  \draw[thick, red!50, ->] (pic cs:2G) -- (pic cs:2H);
  \draw[thick, red!50, ->] (pic cs:2H) -- (pic cs:2I);
  \draw[thick, red!50, ->] (pic cs:2I) -- (pic cs:2J);
  \draw[thick, red!50, ->] (pic cs:2J) -- (pic cs:2K);
  \draw[thick, red!50, ->] (pic cs:2K) -- (pic cs:2L);
  \draw[thick, red!50, ->] (pic cs:2L) -- (pic cs:2A);
\end{tikzpicture}
\begin{align}
&\HH_Z=
\left(
\begin{array}{c}
\HH^{(0)}_Z \\
\HH^{(1)}_Z 
\end{array}
\right)
=
\left(\begin{array}{c|c|c||c|c|c}
\phantom{0}\phantom{0}\phantom{0}\phantom{0}\phantom{0}\phantom{0}\phantom{0}1&\phantom{0}\phantom{0}\phantom{0}\phantom{0}\phantom{0}\phantom{0}\phantom{0}1&\phantom{0}\phantom{0}\phantom{0}\phantom{0}\phantom{0}1\phantom{0}\phantom{0}&\phantom{0}\phantom{0}\phantom{0}\phantom{0}\phantom{0}\phantom{0}\phantom{0}1&\phantom{0}\phantom{0}\phantom{0}\phantom{0}\phantom{0}\phantom{0}1\phantom{0}&\phantom{0}\phantom{0}\phantom{0}1\phantom{0}\phantom{0}\phantom{0}\phantom{0}\\
\phantom{0}\phantom{0}\phantom{0}\phantom{0}1\phantom{0}\phantom{0}\phantom{0}&\phantom{0}\phantom{0}\phantom{0}\phantom{0}\tikzmark{3E}{\setlength{\fboxsep}{1pt}\colorbox{blue!71!black}{\textcolor{white}{1}}}\phantom{0}\phantom{0}\phantom{0}&\phantom{0}\phantom{0}1\phantom{0}\phantom{0}\phantom{0}\phantom{0}\phantom{0}&\phantom{0}\phantom{0}\phantom{0}\phantom{0}1\phantom{0}\phantom{0}\phantom{0}&\phantom{0}\phantom{0}\phantom{0}\phantom{0}\phantom{0}\phantom{0}\phantom{0}1&\tikzmark{3F}{\setlength{\fboxsep}{1pt}\colorbox{blue!71!black}{\textcolor{white}{1}}}\phantom{0}\phantom{0}\phantom{0}\phantom{0}\phantom{0}\phantom{0}\phantom{0}\\
\phantom{0}1\phantom{0}\phantom{0}\phantom{0}\phantom{0}\phantom{0}\phantom{0}&\phantom{0}1\phantom{0}\phantom{0}\phantom{0}\phantom{0}\phantom{0}\phantom{0}&\phantom{0}\phantom{0}\phantom{0}\phantom{0}\phantom{0}\phantom{0}\phantom{0}1&\phantom{0}1\phantom{0}\phantom{0}\phantom{0}\phantom{0}\phantom{0}\phantom{0}&1\phantom{0}\phantom{0}\phantom{0}\phantom{0}\phantom{0}\phantom{0}\phantom{0}&\phantom{0}\phantom{0}\phantom{0}\phantom{0}\phantom{0}1\phantom{0}\phantom{0}\\
\phantom{0}\phantom{0}\phantom{0}\phantom{0}\phantom{0}\phantom{0}1\phantom{0}&\phantom{0}\phantom{0}\phantom{0}\phantom{0}\phantom{0}\phantom{0}1\phantom{0}&\phantom{0}\phantom{0}\phantom{0}\phantom{0}1\phantom{0}\phantom{0}\phantom{0}&\phantom{0}\phantom{0}\phantom{0}\phantom{0}\phantom{0}\phantom{0}1\phantom{0}&\phantom{0}1\phantom{0}\phantom{0}\phantom{0}\phantom{0}\phantom{0}\phantom{0}&\phantom{0}\phantom{0}1\phantom{0}\phantom{0}\phantom{0}\phantom{0}\phantom{0}\\
\phantom{0}\phantom{0}\phantom{0}1\phantom{0}\phantom{0}\phantom{0}\phantom{0}&\phantom{0}\phantom{0}\phantom{0}1\phantom{0}\phantom{0}\phantom{0}\phantom{0}&\phantom{0}\tikzmark{3I}{\setlength{\fboxsep}{1pt}\colorbox{blue!71!black}{\textcolor{white}{1}}}\phantom{0}\phantom{0}\phantom{0}\phantom{0}\phantom{0}\phantom{0}&\phantom{0}\phantom{0}\phantom{0}1\phantom{0}\phantom{0}\phantom{0}\phantom{0}&\phantom{0}\phantom{0}\tikzmark{3J}{\setlength{\fboxsep}{1pt}\colorbox{blue!71!black}{\textcolor{white}{1}}}\phantom{0}\phantom{0}\phantom{0}\phantom{0}\phantom{0}&\phantom{0}\phantom{0}\phantom{0}\phantom{0}\phantom{0}\phantom{0}\phantom{0}1\\
\tikzmark{3A}{\setlength{\fboxsep}{1pt}\colorbox{blue!71!black}{\textcolor{white}{1}}}\phantom{0}\phantom{0}\phantom{0}\phantom{0}\phantom{0}\phantom{0}\phantom{0}&1\phantom{0}\phantom{0}\phantom{0}\phantom{0}\phantom{0}\phantom{0}\phantom{0}&\phantom{0}\phantom{0}\phantom{0}\phantom{0}\phantom{0}\phantom{0}1\phantom{0}&\tikzmark{3B}{\setlength{\fboxsep}{1pt}\colorbox{blue!71!black}{\textcolor{white}{1}}}\phantom{0}\phantom{0}\phantom{0}\phantom{0}\phantom{0}\phantom{0}\phantom{0}&\phantom{0}\phantom{0}\phantom{0}1\phantom{0}\phantom{0}\phantom{0}\phantom{0}&\phantom{0}\phantom{0}\phantom{0}\phantom{0}1\phantom{0}\phantom{0}\phantom{0}\\
\phantom{0}\phantom{0}\phantom{0}\phantom{0}\phantom{0}1\phantom{0}\phantom{0}&\phantom{0}\phantom{0}\phantom{0}\phantom{0}\phantom{0}1\phantom{0}\phantom{0}&\phantom{0}\phantom{0}\phantom{0}1\phantom{0}\phantom{0}\phantom{0}\phantom{0}&\phantom{0}\phantom{0}\phantom{0}\phantom{0}\phantom{0}1\phantom{0}\phantom{0}&\phantom{0}\phantom{0}\phantom{0}\phantom{0}1\phantom{0}\phantom{0}\phantom{0}&\phantom{0}1\phantom{0}\phantom{0}\phantom{0}\phantom{0}\phantom{0}\phantom{0}\\
\phantom{0}\phantom{0}1\phantom{0}\phantom{0}\phantom{0}\phantom{0}\phantom{0}&\phantom{0}\phantom{0}1\phantom{0}\phantom{0}\phantom{0}\phantom{0}\phantom{0}&1\phantom{0}\phantom{0}\phantom{0}\phantom{0}\phantom{0}\phantom{0}\phantom{0}&\phantom{0}\phantom{0}1\phantom{0}\phantom{0}\phantom{0}\phantom{0}\phantom{0}&\phantom{0}\phantom{0}\phantom{0}\phantom{0}\phantom{0}1\phantom{0}\phantom{0}&\phantom{0}\phantom{0}\phantom{0}\phantom{0}\phantom{0}\phantom{0}1\phantom{0}\\
\hline
\phantom{0}\phantom{0}\phantom{0}\phantom{0}\phantom{0}1\phantom{0}\phantom{0}&\phantom{0}\phantom{0}\phantom{0}\phantom{0}\phantom{0}\phantom{0}\phantom{0}1&\phantom{0}\phantom{0}\phantom{0}\phantom{0}\phantom{0}\phantom{0}\phantom{0}1&\phantom{0}\phantom{0}\phantom{0}1\phantom{0}\phantom{0}\phantom{0}\phantom{0}&\phantom{0}\phantom{0}\phantom{0}\phantom{0}\phantom{0}\phantom{0}\phantom{0}1&\phantom{0}\phantom{0}\phantom{0}\phantom{0}\phantom{0}\phantom{0}1\phantom{0}\\
\phantom{0}\phantom{0}1\phantom{0}\phantom{0}\phantom{0}\phantom{0}\phantom{0}&\phantom{0}\phantom{0}\phantom{0}\phantom{0}\tikzmark{3D}{\setlength{\fboxsep}{1pt}\colorbox{blue!71!black}{\textcolor{white}{1}}}\phantom{0}\phantom{0}\phantom{0}&\phantom{0}\phantom{0}\phantom{0}\phantom{0}1\phantom{0}\phantom{0}\phantom{0}&\tikzmark{3C}{\setlength{\fboxsep}{1pt}\colorbox{blue!71!black}{\textcolor{white}{1}}}\phantom{0}\phantom{0}\phantom{0}\phantom{0}\phantom{0}\phantom{0}\phantom{0}&\phantom{0}\phantom{0}\phantom{0}\phantom{0}1\phantom{0}\phantom{0}\phantom{0}&\phantom{0}\phantom{0}\phantom{0}\phantom{0}\phantom{0}\phantom{0}\phantom{0}1\\
\phantom{0}\phantom{0}\phantom{0}\phantom{0}\phantom{0}\phantom{0}\phantom{0}1&\phantom{0}1\phantom{0}\phantom{0}\phantom{0}\phantom{0}\phantom{0}\phantom{0}&\phantom{0}\tikzmark{3H}{\setlength{\fboxsep}{1pt}\colorbox{blue!71!black}{\textcolor{white}{1}}}\phantom{0}\phantom{0}\phantom{0}\phantom{0}\phantom{0}\phantom{0}&\phantom{0}\phantom{0}\phantom{0}\phantom{0}\phantom{0}1\phantom{0}\phantom{0}&\phantom{0}1\phantom{0}\phantom{0}\phantom{0}\phantom{0}\phantom{0}\phantom{0}&\tikzmark{3G}{\setlength{\fboxsep}{1pt}\colorbox{blue!71!black}{\textcolor{white}{1}}}\phantom{0}\phantom{0}\phantom{0}\phantom{0}\phantom{0}\phantom{0}\phantom{0}\\
\phantom{0}\phantom{0}\phantom{0}\phantom{0}1\phantom{0}\phantom{0}\phantom{0}&\phantom{0}\phantom{0}\phantom{0}\phantom{0}\phantom{0}\phantom{0}1\phantom{0}&\phantom{0}\phantom{0}\phantom{0}\phantom{0}\phantom{0}\phantom{0}1\phantom{0}&\phantom{0}\phantom{0}1\phantom{0}\phantom{0}\phantom{0}\phantom{0}\phantom{0}&\phantom{0}\phantom{0}\phantom{0}\phantom{0}\phantom{0}\phantom{0}1\phantom{0}&\phantom{0}1\phantom{0}\phantom{0}\phantom{0}\phantom{0}\phantom{0}\phantom{0}\\
\phantom{0}1\phantom{0}\phantom{0}\phantom{0}\phantom{0}\phantom{0}\phantom{0}&\phantom{0}\phantom{0}\phantom{0}1\phantom{0}\phantom{0}\phantom{0}\phantom{0}&\phantom{0}\phantom{0}\phantom{0}1\phantom{0}\phantom{0}\phantom{0}\phantom{0}&\phantom{0}\phantom{0}\phantom{0}\phantom{0}\phantom{0}\phantom{0}\phantom{0}1&\phantom{0}\phantom{0}\phantom{0}1\phantom{0}\phantom{0}\phantom{0}\phantom{0}&\phantom{0}\phantom{0}1\phantom{0}\phantom{0}\phantom{0}\phantom{0}\phantom{0}\\
\phantom{0}\phantom{0}\phantom{0}\phantom{0}\phantom{0}\phantom{0}1\phantom{0}&1\phantom{0}\phantom{0}\phantom{0}\phantom{0}\phantom{0}\phantom{0}\phantom{0}&1\phantom{0}\phantom{0}\phantom{0}\phantom{0}\phantom{0}\phantom{0}\phantom{0}&\phantom{0}\phantom{0}\phantom{0}\phantom{0}1\phantom{0}\phantom{0}\phantom{0}&1\phantom{0}\phantom{0}\phantom{0}\phantom{0}\phantom{0}\phantom{0}\phantom{0}&\phantom{0}\phantom{0}\phantom{0}1\phantom{0}\phantom{0}\phantom{0}\phantom{0}\\
\phantom{0}\phantom{0}\phantom{0}1\phantom{0}\phantom{0}\phantom{0}\phantom{0}&\phantom{0}\phantom{0}\phantom{0}\phantom{0}\phantom{0}1\phantom{0}\phantom{0}&\phantom{0}\phantom{0}\phantom{0}\phantom{0}\phantom{0}1\phantom{0}\phantom{0}&\phantom{0}1\phantom{0}\phantom{0}\phantom{0}\phantom{0}\phantom{0}\phantom{0}&\phantom{0}\phantom{0}\phantom{0}\phantom{0}\phantom{0}1\phantom{0}\phantom{0}&\phantom{0}\phantom{0}\phantom{0}\phantom{0}1\phantom{0}\phantom{0}\phantom{0}\\
\tikzmark{3L}{\setlength{\fboxsep}{1pt}\colorbox{blue!71!black}{\textcolor{white}{1}}}\phantom{0}\phantom{0}\phantom{0}\phantom{0}\phantom{0}\phantom{0}\phantom{0}&\phantom{0}\phantom{0}1\phantom{0}\phantom{0}\phantom{0}\phantom{0}\phantom{0}&\phantom{0}\phantom{0}1\phantom{0}\phantom{0}\phantom{0}\phantom{0}\phantom{0}&\phantom{0}\phantom{0}\phantom{0}\phantom{0}\phantom{0}\phantom{0}1\phantom{0}&\phantom{0}\phantom{0}\tikzmark{3K}{\setlength{\fboxsep}{1pt}\colorbox{blue!71!black}{\textcolor{white}{1}}}\phantom{0}\phantom{0}\phantom{0}\phantom{0}\phantom{0}&\phantom{0}\phantom{0}\phantom{0}\phantom{0}\phantom{0}1\phantom{0}\phantom{0}
\end{array}
\right).
\end{align}
\begin{tikzpicture}[remember picture, overlay]
  \draw[thick, blue!50, ->] (pic cs:3A) -- (pic cs:3B);
  \draw[thick, blue!50, ->] (pic cs:3B) -- (pic cs:3C);
  \draw[thick, blue!50, ->] (pic cs:3C) -- (pic cs:3D);
  \draw[thick, blue!50, ->] (pic cs:3D) -- (pic cs:3E);
  \draw[thick, blue!50, ->] (pic cs:3E) -- (pic cs:3F);
  \draw[thick, blue!50, ->] (pic cs:3F) -- (pic cs:3G);
  \draw[thick, blue!50, ->] (pic cs:3G) -- (pic cs:3H);
  \draw[thick, blue!50, ->] (pic cs:3H) -- (pic cs:3I);
  \draw[thick, blue!50, ->] (pic cs:3I) -- (pic cs:3J);
  \draw[thick, blue!50, ->] (pic cs:3J) -- (pic cs:3K);
  \draw[thick, blue!50, ->] (pic cs:3K) -- (pic cs:3L);
  \draw[thick, blue!50, ->] (pic cs:3L) -- (pic cs:3A);
\end{tikzpicture}


The binary matrices corresponding to $\HH_X^{(2)}$ and $\HH_Z^{(2)}$ are shown below.
Each affine permutation is expanded into a $P \times P$ binary permutation matrix with $P = 8$.
\begin{align}
\HH_{X}^{(2)} 
& =\left(
 \begin{array}{c@{\hspace{6mm}}c@{\hspace{6mm}}c@{\hspace{3.5mm}}||@{\hspace{3mm}}c@{\hspace{6mm}}c@{\hspace{6mm}}c@{\hspace{3mm}}}
    f_{1} & f_{2} & f_0 & g_{1} & g_{2} & g_0 
 \end{array}\right) \\ 
&=
\left(\begin{array}{ccc||ccc}
5X+3&1X+6&5X+7&5X+5&5X+7&5X+7
\end{array} 
\right)\\
&=
\left(\begin{array}{c|c|c||c|c|c}
\phantom{0}1\phantom{0}\phantom{0}\phantom{0}\phantom{0}\phantom{0}\phantom{0}&\phantom{0}\phantom{0}1\phantom{0}\phantom{0}\phantom{0}\phantom{0}\phantom{0}&\phantom{0}\phantom{0}\phantom{0}\phantom{0}\phantom{0}1\phantom{0}\phantom{0}&\phantom{0}\phantom{0}\phantom{0}\phantom{0}\phantom{0}\phantom{0}\phantom{0}1&\phantom{0}\phantom{0}\phantom{0}\phantom{0}\phantom{0}1\phantom{0}\phantom{0}&\phantom{0}\phantom{0}\phantom{0}\phantom{0}\phantom{0}1\phantom{0}\phantom{0}\\
\phantom{0}\phantom{0}\phantom{0}\phantom{0}\phantom{0}\phantom{0}1\phantom{0}&\phantom{0}\phantom{0}\phantom{0}1\phantom{0}\phantom{0}\phantom{0}\phantom{0}&\phantom{0}\phantom{0}1\phantom{0}\phantom{0}\phantom{0}\phantom{0}\phantom{0}&\phantom{0}\phantom{0}\phantom{0}\phantom{0}1\phantom{0}\phantom{0}\phantom{0}&\phantom{0}\phantom{0}1\phantom{0}\phantom{0}\phantom{0}\phantom{0}\phantom{0}&\phantom{0}\phantom{0}1\phantom{0}\phantom{0}\phantom{0}\phantom{0}\phantom{0}\\
\phantom{0}\phantom{0}\phantom{0}1\phantom{0}\phantom{0}\phantom{0}\phantom{0}&\phantom{0}\phantom{0}\phantom{0}\phantom{0}1\phantom{0}\phantom{0}\phantom{0}&\phantom{0}\phantom{0}\phantom{0}\phantom{0}\phantom{0}\phantom{0}\phantom{0}1&\phantom{0}1\phantom{0}\phantom{0}\phantom{0}\phantom{0}\phantom{0}\phantom{0}&\phantom{0}\phantom{0}\phantom{0}\phantom{0}\phantom{0}\phantom{0}\phantom{0}1&\phantom{0}\phantom{0}\phantom{0}\phantom{0}\phantom{0}\phantom{0}\phantom{0}1\\
1\phantom{0}\phantom{0}\phantom{0}\phantom{0}\phantom{0}\phantom{0}\phantom{0}&\phantom{0}\phantom{0}\phantom{0}\phantom{0}\phantom{0}1\phantom{0}\phantom{0}&\phantom{0}\phantom{0}\phantom{0}\phantom{0}1\phantom{0}\phantom{0}\phantom{0}&\phantom{0}\phantom{0}\phantom{0}\phantom{0}\phantom{0}\phantom{0}1\phantom{0}&\phantom{0}\phantom{0}\phantom{0}\phantom{0}1\phantom{0}\phantom{0}\phantom{0}&\phantom{0}\phantom{0}\phantom{0}\phantom{0}1\phantom{0}\phantom{0}\phantom{0}\\
\phantom{0}\phantom{0}\phantom{0}\phantom{0}\phantom{0}1\phantom{0}\phantom{0}&\phantom{0}\phantom{0}\phantom{0}\phantom{0}\phantom{0}\phantom{0}1\phantom{0}&\phantom{0}1\phantom{0}\phantom{0}\phantom{0}\phantom{0}\phantom{0}\phantom{0}&\phantom{0}\phantom{0}\phantom{0}1\phantom{0}\phantom{0}\phantom{0}\phantom{0}&\phantom{0}1\phantom{0}\phantom{0}\phantom{0}\phantom{0}\phantom{0}\phantom{0}&\phantom{0}1\phantom{0}\phantom{0}\phantom{0}\phantom{0}\phantom{0}\phantom{0}\\
\phantom{0}\phantom{0}1\phantom{0}\phantom{0}\phantom{0}\phantom{0}\phantom{0}&\phantom{0}\phantom{0}\phantom{0}\phantom{0}\phantom{0}\phantom{0}\phantom{0}1&\phantom{0}\phantom{0}\phantom{0}\phantom{0}\phantom{0}\phantom{0}1\phantom{0}&1\phantom{0}\phantom{0}\phantom{0}\phantom{0}\phantom{0}\phantom{0}\phantom{0}&\phantom{0}\phantom{0}\phantom{0}\phantom{0}\phantom{0}\phantom{0}1\phantom{0}&\phantom{0}\phantom{0}\phantom{0}\phantom{0}\phantom{0}\phantom{0}1\phantom{0}\\
\phantom{0}\phantom{0}\phantom{0}\phantom{0}\phantom{0}\phantom{0}\phantom{0}1&1\phantom{0}\phantom{0}\phantom{0}\phantom{0}\phantom{0}\phantom{0}\phantom{0}&\phantom{0}\phantom{0}\phantom{0}1\phantom{0}\phantom{0}\phantom{0}\phantom{0}&\phantom{0}\phantom{0}\phantom{0}\phantom{0}\phantom{0}1\phantom{0}\phantom{0}&\phantom{0}\phantom{0}\phantom{0}1\phantom{0}\phantom{0}\phantom{0}\phantom{0}&\phantom{0}\phantom{0}\phantom{0}1\phantom{0}\phantom{0}\phantom{0}\phantom{0}\\
\phantom{0}\phantom{0}\phantom{0}\phantom{0}1\phantom{0}\phantom{0}\phantom{0}&\phantom{0}1\phantom{0}\phantom{0}\phantom{0}\phantom{0}\phantom{0}\phantom{0}&1\phantom{0}\phantom{0}\phantom{0}\phantom{0}\phantom{0}\phantom{0}\phantom{0}&\phantom{0}\phantom{0}1\phantom{0}\phantom{0}\phantom{0}\phantom{0}\phantom{0}&1\phantom{0}\phantom{0}\phantom{0}\phantom{0}\phantom{0}\phantom{0}\phantom{0}&1\phantom{0}\phantom{0}\phantom{0}\phantom{0}\phantom{0}\phantom{0}\phantom{0}
\end{array}\right),
\end{align}
\begin{align}
\HH_{Z}^{(2)} 
&=
 \left(
 \begin{array}{lll||lll}
 g_{2}^{-1} & g_{1}^{-1} & g_0^{-1} & f_{2}^{-1} & f_{1}^{-1} & f_0^{-1}
 \end{array}
 \right)
\\&=\left(\begin{array}{ccc||ccc}
   5X+5&5X+7&5X+5&1X+2&5X+1&5X+5
\end{array}
\right)\\
&=
\left(\begin{array}{c|c|c||c|c|c}
 \phantom{0}\phantom{0}\phantom{0}\phantom{0}\phantom{0}\phantom{0}\phantom{0}1&\phantom{0}\phantom{0}\phantom{0}\phantom{0}\phantom{0}1\phantom{0}\phantom{0}&\phantom{0}\phantom{0}\phantom{0}\phantom{0}\phantom{0}\phantom{0}\phantom{0}1&\phantom{0}\phantom{0}\phantom{0}\phantom{0}\phantom{0}\phantom{0}1\phantom{0}&\phantom{0}\phantom{0}\phantom{0}1\phantom{0}\phantom{0}\phantom{0}\phantom{0}&\phantom{0}\phantom{0}\phantom{0}\phantom{0}\phantom{0}\phantom{0}\phantom{0}1\\
 \phantom{0}\phantom{0}\phantom{0}\phantom{0}1\phantom{0}\phantom{0}\phantom{0}&\phantom{0}\phantom{0}1\phantom{0}\phantom{0}\phantom{0}\phantom{0}\phantom{0}&\phantom{0}\phantom{0}\phantom{0}\phantom{0}1\phantom{0}\phantom{0}\phantom{0}&\phantom{0}\phantom{0}\phantom{0}\phantom{0}\phantom{0}\phantom{0}\phantom{0}1&1\phantom{0}\phantom{0}\phantom{0}\phantom{0}\phantom{0}\phantom{0}\phantom{0}&\phantom{0}\phantom{0}\phantom{0}\phantom{0}1\phantom{0}\phantom{0}\phantom{0}\\
 \phantom{0}1\phantom{0}\phantom{0}\phantom{0}\phantom{0}\phantom{0}\phantom{0}&\phantom{0}\phantom{0}\phantom{0}\phantom{0}\phantom{0}\phantom{0}\phantom{0}1&\phantom{0}1\phantom{0}\phantom{0}\phantom{0}\phantom{0}\phantom{0}\phantom{0}&1\phantom{0}\phantom{0}\phantom{0}\phantom{0}\phantom{0}\phantom{0}\phantom{0}&\phantom{0}\phantom{0}\phantom{0}\phantom{0}\phantom{0}1\phantom{0}\phantom{0}&\phantom{0}1\phantom{0}\phantom{0}\phantom{0}\phantom{0}\phantom{0}\phantom{0}\\
 \phantom{0}\phantom{0}\phantom{0}\phantom{0}\phantom{0}\phantom{0}1\phantom{0}&\phantom{0}\phantom{0}\phantom{0}\phantom{0}1\phantom{0}\phantom{0}\phantom{0}&\phantom{0}\phantom{0}\phantom{0}\phantom{0}\phantom{0}\phantom{0}1\phantom{0}&\phantom{0}1\phantom{0}\phantom{0}\phantom{0}\phantom{0}\phantom{0}\phantom{0}&\phantom{0}\phantom{0}1\phantom{0}\phantom{0}\phantom{0}\phantom{0}\phantom{0}&\phantom{0}\phantom{0}\phantom{0}\phantom{0}\phantom{0}\phantom{0}1\phantom{0}\\
 \phantom{0}\phantom{0}\phantom{0}1\phantom{0}\phantom{0}\phantom{0}\phantom{0}&\phantom{0}1\phantom{0}\phantom{0}\phantom{0}\phantom{0}\phantom{0}\phantom{0}&\phantom{0}\phantom{0}\phantom{0}1\phantom{0}\phantom{0}\phantom{0}\phantom{0}&\phantom{0}\phantom{0}1\phantom{0}\phantom{0}\phantom{0}\phantom{0}\phantom{0}&\phantom{0}\phantom{0}\phantom{0}\phantom{0}\phantom{0}\phantom{0}\phantom{0}1&\phantom{0}\phantom{0}\phantom{0}1\phantom{0}\phantom{0}\phantom{0}\phantom{0}\\
 1\phantom{0}\phantom{0}\phantom{0}\phantom{0}\phantom{0}\phantom{0}\phantom{0}&\phantom{0}\phantom{0}\phantom{0}\phantom{0}\phantom{0}\phantom{0}1\phantom{0}&1\phantom{0}\phantom{0}\phantom{0}\phantom{0}\phantom{0}\phantom{0}\phantom{0}&\phantom{0}\phantom{0}\phantom{0}1\phantom{0}\phantom{0}\phantom{0}\phantom{0}&\phantom{0}\phantom{0}\phantom{0}\phantom{0}1\phantom{0}\phantom{0}\phantom{0}&1\phantom{0}\phantom{0}\phantom{0}\phantom{0}\phantom{0}\phantom{0}\phantom{0}\\
 \phantom{0}\phantom{0}\phantom{0}\phantom{0}\phantom{0}{\setlength{\fboxsep}{1pt}\colorbox{red!71!black}{\textcolor{white}{1}}}\phantom{0}\phantom{0}&\phantom{0}\phantom{0}\phantom{0}{\setlength{\fboxsep}{1pt}\colorbox{red!71!black}{\textcolor{white}{1}}}\phantom{0}\phantom{0}\phantom{0}\phantom{0}&\phantom{0}\phantom{0}\phantom{0}\phantom{0}\phantom{0}{\setlength{\fboxsep}{1pt}\colorbox{red!71!black}{\textcolor{white}{1}}}\phantom{0}\phantom{0}&\phantom{0}\phantom{0}\phantom{0}\phantom{0}{\setlength{\fboxsep}{1pt}\colorbox{red!71!black}{\textcolor{white}{1}}}\phantom{0}\phantom{0}\phantom{0}&\phantom{0}{\setlength{\fboxsep}{1pt}\colorbox{red!71!black}{\textcolor{white}{1}}}\phantom{0}\phantom{0}\phantom{0}\phantom{0}\phantom{0}\phantom{0}&\phantom{0}\phantom{0}\phantom{0}\phantom{0}\phantom{0}{\setlength{\fboxsep}{1pt}\colorbox{red!71!black}{\textcolor{white}{1}}}\phantom{0}\phantom{0}\\
 \phantom{0}\phantom{0}1\phantom{0}\phantom{0}\phantom{0}\phantom{0}\phantom{0}&1\phantom{0}\phantom{0}\phantom{0}\phantom{0}\phantom{0}\phantom{0}\phantom{0}&\phantom{0}\phantom{0}1\phantom{0}\phantom{0}\phantom{0}\phantom{0}\phantom{0}&\phantom{0}\phantom{0}\phantom{0}\phantom{0}\phantom{0}1\phantom{0}\phantom{0}&\phantom{0}\phantom{0}\phantom{0}\phantom{0}\phantom{0}\phantom{0}1\phantom{0}&\phantom{0}\phantom{0}1\phantom{0}\phantom{0}\phantom{0}\phantom{0}\phantom{0}
\end{array}\right).
 \end{align}
\end{example}

This construction generalizes the original Hagiwara--Imai construction~\cite{4557323,6017122} to a more flexible framework involving general PMs~\cite{komoto2024quantumerrorcorrectionnear}, by extending the class of permutations from $\Qc_P$ to $\Fc_P$.
Under the condition specified in Requirement~\ref{yosei:orthogonal}, the matrices $\HH_X, \HH_Z \in \mathbb{F}_2^{JP \times LP}$ are orthogonal, i.e.,
\begin{align}
 \HH_X \HH_Z^\top = O.\label{222558_20May25}
\end{align}
This orthogonality follows from the structure of the circulant blocks. Specifically, the $(j,k)$-th block of the product is given by
\begin{align*}
(\HH_X \HH_Z^\top)_{j,k}
&= \sum_{\ell} F_{\ell-j} G_{k-\ell} + \sum_{\ell} G_{\ell-j} F_{k-\ell} \\
&= \sum_{\ell} F_{\ell-j} G_{k-\ell} + \sum_{\ell} F_{k-\ell} G_{\ell-j} \\
&= \sum_{\ell} F_{\ell} G_{k-j-\ell} + \sum_{\ell} F_{-\ell} G_{\ell-j+k} \\
&= \sum_{\ell} F_{\ell} G_{k-j-\ell} + \sum_{\ell} F_{\ell} G_{k-j-\ell} \\
&= O,
\end{align*}
In all summations above, the index $\ell$ ranges over $[L/2]$.
The first equality follows from Requirement~\ref{yosei:orthogonal}, and the final equality holds since the two sums are identical over $\mathbb{F}_2$ and thus cancel out.  Therefore, $\HH_X$ and $\HH_Z$ are orthogonal by construction.

\subsection{Construction of Non-binary Parity-Check Matrices $H_\Gamma$ and $H_\Delta$}\label{sec:231308_5Jun25}
In the previous section, we constructed a pair of orthogonal $(J, L)$-regular binary LDPC matrices, $\HH_X$ and $\HH_Z$.  
Let $q = 2^e$ for a positive integer $e$.  
In this section, we focus on the case $J = 2$ and outline the conventional method~\cite{komoto2024quantumerrorcorrectionnear}, which generalizes the earlier approach by Kasai et al.~\cite{6017122} for constructing a pair of orthogonal LDPC matrices $H_\Gamma$ and $H_\Delta$ over $\Fb_q$.  
These matrices share the same support as $\HH_X$ and $\HH_Z$, and serve as the basis for the modified construction in Section~\ref{sec:150611_18May25}.  

We consider $H_\Gamma, H_\Delta \in \mathbb{F}_q^{J \times L}$ with the same support as $\HH_X$ and $\HH_Z$.  
Our goal is to randomly generate such matrices satisfying the orthogonality condition  
\begin{align}
H_\Gamma H_\Delta^\top = O. \label{212247_11Feb25}
\end{align}
When $J \ge 3$, this condition leads to a system of quadratic equations over $\mathbb{F}_q$, which is generally intractable.  
However, for $J = 2$, the problem can be decomposed into two linear subproblems, allowing for efficient solution via linear algebra.

We denote the entries of the matrices as follows:  
\begin{align}
H_\Gamma = (\gamma_{ij}), \quad H_\Delta = (\delta_{ij}).  
\end{align}
To express the algorithm concisely, we define a one-dimensional vector representation in place of $(\gamma_{ij})$ and $(\delta_{ij})$.  
Let us define a vector representation $\gammaU = (\gamma_0, \ldots, \gamma_{2N - 1}) \in \mathbb{F}_q^{2N}$ associated with the parity-check matrix $H_\Gamma$ as follows.  
For each position $(i,j)$ such that $(\HH_X)_{ij} \ne 0$, we define:
\begin{align}
\gamma_{i,j} =
\begin{cases}
\gamma_j & \text{if } i < P, \\
\gamma_{j + PL} & \text{if } i \ge P.
\end{cases}
\end{align}
Since $H_\Gamma$ contains exactly $2N$ nonzero entries, the length of the vector $\gammaU$ is also $2N$.
The corresponding vector $\deltaU$ associated with $H_\Delta$ is defined analogously.

For example, when $P = 8$ and $L = 6$, we have:
\begin{align*}
\gamma_{1,12} &= \gamma_{12}, \\
\gamma_{9,12} &= \gamma_{60}.
\end{align*}

Fix a primitive element $\alpha$ of the finite field $\mathbb{F}_q$.
Then, every nonzero element $\gamma \in \mathbb{F}_q^\times$ can be uniquely expressed as $\gamma = \alpha^i$ for some $i \in [q - 1]$.
We define the discrete logarithm of $\gamma$ as $\log_\alpha \gamma := i$, where the base $\alpha$ is omitted when there is no ambiguity.
Accordingly, we define the \emph{logarithmic vector representation} of a vector $\gammaU = (\gamma_0, \ldots, \gamma_{2N-1}) \in \mathbb{F}_q^{2N}$ as
\begin{align}
\log \gammaU := (\log \gamma_0, \ldots, \log \gamma_{2N-1}),\label{235116_14Jun25}
\end{align}
where each $\log \gamma_i$ is an integer in $[q - 1]$ such that $\gamma_i = \alpha^{\log \gamma_i}$ for $\gamma_i \ne 0$, and we may assign a special value (e.g., $-\infty$ or a distinguished symbol) to $\log 0$ if needed.

We wish to determine vectors $\gammaU$ and $\deltaU$ that satisfy the following conditions.  
To ensure the existence of matrices $H_\Gamma$ and $H_\Delta$ that satisfy the orthogonality condition~\eqref{212247_11Feb25}, the vector $\log \gammaU$ must satisfy the linear congruence relation:
\begin{align}
A_{01} \log \gammaU &= \zeroU \pmod{q - 1}. \label{015905_13Feb25} 
\end{align}
Here, the coefficient matrix $A_{01}\in\{0, \pm 1\}^{2P \times 2N}$ is defined as
\begin{align}
A_{01} :&= 
\left(
\begin{array}{c|c|c|c}
-\HH_Z^{(\Lrm,0)} & \HH_Z^{(\Rrm,0)} & \HH_Z^{(\Lrm,0)} & -\HH_Z^{(\Rrm,0)}\\\hline
-\HH_Z^{(\Lrm,1)} & \HH_Z^{(\Rrm,1)} & \HH_Z^{(\Lrm,1)} & -\HH_Z^{(\Rrm,1)}
\end{array} 
\right)\\
&=(
\begin{array}{c|c|c|c}
-\HH_Z^{(\Lrm)} & \HH_Z^{(\Rrm)} & \HH_Z^{(\Lrm)} & -\HH_Z^{(\Rrm)}
\end{array} 
). 
\end{align}
The proof of this result is given in Appendix~\ref{224018_19May25}.

We first solve the system of congruences in~\eqref{015905_13Feb25} and randomly select one solution $\log \gammaU$ from the solution space.  
This uniquely determines the corresponding vector $\gammaU$ and hence the matrix $H_\Gamma$.
Next, for the fixed matrix $H_\Gamma$, the orthogonality condition \eqref{212247_11Feb25}: $H_\Gamma H_\Delta^\top = O$ 
can be viewed as a system of linear equations in the entries of $H_\Delta$.  
We solve this system and then randomly generate one valid solution for $H_\Delta$ from the resulting solution space.

Although \eqref{212247_11Feb25} exhibits a clear symmetry between $\gammaU$ and $\deltaU$,  
\eqref{015905_13Feb25} does not explicitly reflect this symmetry.  
Some readers may find this asymmetry puzzling.  
However, when both \eqref{212247_11Feb25} and \eqref{015905_13Feb25} are satisfied,  
it can be shown that the following equation-an analogue of \eqref{015905_13Feb25} with respect to $\deltaU$-also holds automatically:
\begin{align}
B_{01} \log \deltaU &= \zeroU \pmod{q - 1},\label{095733_27May25}
\end{align}
where the coefficient matrix $B_{01} \in \{0, \pm 1\}^{2P \times 2N}$ is defined as
\begin{align}
B_{01} :&= 
\left(
\begin{array}{c|c|c|c}
-\HH_X^{(\Lrm,0)} & \HH_X^{(\Rrm,0)} & \HH_X^{(\Lrm,0)} & -\HH_X^{(\Rrm,0)}\\\hline
-\HH_X^{(\Lrm,1)} & \HH_X^{(\Rrm,1)} & \HH_X^{(\Lrm,1)} & -\HH_X^{(\Rrm,1)}
\end{array} 
\right)\\
&=(
\begin{array}{c|c|c|c}
-\HH_X^{(\Lrm)} & \HH_X^{(\Rrm)} & \HH_X^{(\Lrm)} & -\HH_X^{(\Rrm)}
\end{array}
).\label{152405_18May25}
\end{align}
This reveals that, although the symmetry between $\gammaU$ and $\deltaU$ is not manifest in the form of \eqref{015905_13Feb25}, it is nevertheless preserved at a structural level.  
The pair of equations \eqref{015905_13Feb25} and \eqref{095733_27May25} together reestablish the symmetry, confirming that the treatment of $\gammaU$ and $\deltaU$ is fundamentally dual.

\begin{example}\label{035757_2Jun25}
We continue with the previous example, now extending to the finite field $\mathbb{F}_{q}$ with $q=2^e, e = 8$.  
Let $H_\Gamma$ and $H_\Delta$ be parity-check matrices defined over $\mathbb{F}_{q}$, constructed to satisfy the orthogonality condition $H_\Gamma H_\Delta^\top = 0$.
One possible pair of such matrices is given as follows (only nonzero entries are shown using hexadecimal notation; blank entries represent zero):
\begin{align}
& H_\Gamma=
{\scriptsize 
\left(\begin{array}{c@{}c@{}c@{}c@{}c@{}c@{}c@{}c@{}|c@{}c@{}c@{}c@{}c@{}c@{}c@{}c@{}|c@{}c@{}c@{}c@{}c@{}c@{}c@{}c@{}||c@{}c@{}c@{}c@{}c@{}c@{}c@{}c@{}|c@{}c@{}c@{}c@{}c@{}c@{}c@{}c@{}|c@{}c@{}c@{}c@{}c@{}c@{}c@{}c@{}c@{}}
  &  &  &  &  &\tikzmark{0A}{\setlength{\fboxsep}{0.8pt}\colorbox{red!70!black}{\textcolor{white}{$\mathtt{3F}$}}}&  &  &  &\mathtt{4F}&  &  &  &  &  &  &  &  &\mathtt{40}&  &  &  &  &  &  &  &  &  &  &\mathtt{89}&  &  &  &  &  &  &  &  &  &\mathtt{0A}&  &  &  &  &  &\tikzmark{0B}{\setlength{\fboxsep}{0.8pt}\colorbox{red!70!black}{\textcolor{white}{$\mathtt{F1}$}}}&  &  &\\
  &  &\mathtt{D9}&  &  &  &  &  &  &  &  &  &  &  &\mathtt{68}&  &  &  &  &\mathtt{0D}&  &  &  &  &  &  &\mathtt{BC}&  &  &  &  &  &  &  &  &  &\mathtt{4B}&  &  &  &  &  &\mathtt{18}&  &  &  &  &  &\\
  &  &  &  &  &  &  &\mathtt{A9}&  &  &  &\tikzmark{0E}{\setlength{\fboxsep}{0.8pt}\colorbox{red!70!black}{\textcolor{white}{$\mathtt{4C}$}}}&  &  &  &  &  &  &  &  &\mathtt{63}&  &  &  &  &  &  &  &  &  &  &\mathtt{81}&  &\tikzmark{0F}{\setlength{\fboxsep}{0.8pt}\colorbox{red!70!black}{\textcolor{white}{$\mathtt{78}$}}}&  &  &  &  &  &  &  &  &  &  &  &  &  &\mathtt{5C}&\\
  &  &  &  &\mathtt{79}&  &  &  &\mathtt{0B}&  &  &  &  &  &  &  &  &  &  &  &  &\tikzmark{0I}{\setlength{\fboxsep}{0.8pt}\colorbox{red!70!black}{\textcolor{white}{$\mathtt{E1}$}}}&  &  &  &  &  &  &\tikzmark{0J}{\setlength{\fboxsep}{0.8pt}\colorbox{red!70!black}{\textcolor{white}{$\mathtt{89}$}}}&  &  &  &  &  &  &  &  &  &\mathtt{98}&  &  &  &  &  &\mathtt{77}&  &  &  &\\
  &\mathtt{B9}&  &  &  &  &  &  &  &  &  &  &  &\mathtt{FE}&  &  &  &  &  &  &  &  &\mathtt{5F}&  &  &\mathtt{86}&  &  &  &  &  &  &  &  &  &\mathtt{AA}&  &  &  &  &  &\mathtt{0B}&  &  &  &  &  &  &\\
  &  &  &  &  &  &\mathtt{01}&  &  &  &\mathtt{F6}&  &  &  &  &  &  &  &  &  &  &  &  &\mathtt{30}&  &  &  &  &  &  &\mathtt{41}&  &\mathtt{4B}&  &  &  &  &  &  &  &  &  &  &  &  &  &\mathtt{C7}&  &\\
  &  &  &\mathtt{A9}&  &  &  &  &  &  &  &  &  &  &  &\mathtt{D1}&\mathtt{3F}&  &  &  &  &  &  &  &  &  &  &\mathtt{C7}&  &  &  &  &  &  &  &  &  &\mathtt{5A}&  &  &  &  &  &\mathtt{EA}&  &  &  &  &\\
{\setlength{\fboxsep}{0.8pt}\colorbox{blue!70!black}{\textcolor{white}{$\mathtt{C8}$}}}&  &  &  &  &  &  &  &  &  &  &  &{\setlength{\fboxsep}{0.8pt}\colorbox{blue!70!black}{\textcolor{white}{$\mathtt{C8}$}}}&  &  &  &  &{\setlength{\fboxsep}{0.8pt}\colorbox{blue!70!black}{\textcolor{white}{$\mathtt{AA}$}}}&  &  &  &  &  &  &{\setlength{\fboxsep}{0.8pt}\colorbox{blue!70!black}{\textcolor{white}{$\mathtt{EE}$}}}&  &  &  &  &  &  &  &  &  &{\setlength{\fboxsep}{0.8pt}\colorbox{blue!70!black}{\textcolor{white}{$\mathtt{A7}$}}}&  &  &  &  &  &{\setlength{\fboxsep}{0.8pt}\colorbox{blue!70!black}{\textcolor{white}{$\mathtt{5F}$}}}&  &  &  &  &  &  &  &\\
\hline\hline
  &  &\mathtt{BD}&  &  &  &  &  &  &  &  &  &  &\mathtt{68}&  &  &  &\mathtt{3B}&  &  &  &  &  &  &  &  &  &  &  &\mathtt{E7}&  &  &  &  &  &  &  &\mathtt{37}&  &  &  &  &  &  &  &  &  &\mathtt{B1}&\\
  &  &  &\mathtt{61}&  &  &  &  &  &  &\mathtt{88}&  &  &  &  &  &  &  &  &  &  &  &\mathtt{02}&  &  &  &\mathtt{44}&  &  &  &  &  &  &  &\mathtt{80}&  &  &  &  &  &  &  &  &  &\mathtt{D7}&  &  &  &\\
  &  &  &  &\mathtt{CA}&  &  &  &  &  &  &  &  &  &  &\mathtt{5D}&  &  &  &\mathtt{6F}&  &  &  &  &  &  &  &  &  &  &  &\mathtt{F3}&  &  &  &  &  &  &  &\mathtt{C7}&  &\mathtt{43}&  &  &  &  &  &  &\\
  &  &  &  &  &\tikzmark{0L}{\setlength{\fboxsep}{0.8pt}\colorbox{red!70!black}{\textcolor{white}{$\mathtt{8C}$}}}&  &  &  &  &  &  &\mathtt{4E}&  &  &  &\mathtt{D3}&  &  &  &  &  &  &  &  &  &  &  &\tikzmark{0K}{\setlength{\fboxsep}{0.8pt}\colorbox{red!70!black}{\textcolor{white}{$\mathtt{16}$}}}&  &  &  &  &  &  &  &\mathtt{CD}&  &  &  &  &  &  &  &  &  &\mathtt{6D}&  &\\
  &  &  &  &  &  &\mathtt{97}&  &  &\mathtt{BE}&  &  &  &  &  &  &  &  &  &  &  &\tikzmark{0H}{\setlength{\fboxsep}{0.8pt}\colorbox{red!70!black}{\textcolor{white}{$\mathtt{A4}$}}}&  &  &  &\mathtt{2D}&  &  &  &  &  &  &  &\tikzmark{0G}{\setlength{\fboxsep}{0.8pt}\colorbox{red!70!black}{\textcolor{white}{$\mathtt{85}$}}}&  &  &  &  &  &  &  &  &  &\mathtt{B7}&  &  &  &  &\\
  &  &  &  &  &  &  &\mathtt{0E}&  &  &  &  &  &  &\mathtt{A7}&  &  &  &\mathtt{AC}&  &  &  &  &  &  &  &  &  &  &  &\mathtt{92}&  &  &  &  &  &  &  &\mathtt{0C}&  &\mathtt{58}&  &  &  &  &  &  &  &\\
\mathtt{D7}&  &  &  &  &  &  &  &  &  &  &\tikzmark{0D}{\setlength{\fboxsep}{0.8pt}\colorbox{red!70!black}{\textcolor{white}{$\mathtt{F3}$}}}&  &  &  &  &  &  &  &  &  &  &  &\mathtt{7C}&  &  &  &\mathtt{89}&  &  &  &  &  &  &  &\mathtt{08}&  &  &  &  &  &  &  &  &  &\tikzmark{0C}{\setlength{\fboxsep}{0.8pt}\colorbox{red!70!black}{\textcolor{white}{$\mathtt{D4}$}}}&  &  &\\
  &\mathtt{C3}&  &  &  &  &  &  &\mathtt{75}&  &  &  &  &  &  &  &  &  &  &  &\mathtt{30}&  &  &  &\mathtt{4D}&  &  &  &  &  &  &  &\mathtt{87}&  &  &  &  &  &  &  &  &  &\mathtt{5C}&  &  &  &  &  &\\
\end{array}
\right), 
}
\begin{tikzpicture}[remember picture, overlay]
  \draw[thick, red!50, ->] (pic cs:0A) -- (pic cs:0B);
  \draw[thick, red!50, ->] (pic cs:0B) -- (pic cs:0C);
  \draw[thick, red!50, ->] (pic cs:0C) -- (pic cs:0D);
  \draw[thick, red!50, ->] (pic cs:0D) -- (pic cs:0E);
  \draw[thick, red!50, ->] (pic cs:0E) -- (pic cs:0F);
  \draw[thick, red!50, ->] (pic cs:0F) -- (pic cs:0G);
  \draw[thick, red!50, ->] (pic cs:0G) -- (pic cs:0H);
  \draw[thick, red!50, ->] (pic cs:0H) -- (pic cs:0I);
  \draw[thick, red!50, ->] (pic cs:0I) -- (pic cs:0J);
  \draw[thick, red!50, ->] (pic cs:0J) -- (pic cs:0K);
  \draw[thick, red!50, ->] (pic cs:0K) -- (pic cs:0L);
  \draw[thick, red!50, ->] (pic cs:0L) -- (pic cs:0A);
\end{tikzpicture}
\end{align}
\begin{align}
& H_\Delta=
{\scriptsize 
\left(\begin{array}{c@{}c@{}c@{}c@{}c@{}c@{}c@{}c@{}|c@{}c@{}c@{}c@{}c@{}c@{}c@{}c@{}|c@{}c@{}c@{}c@{}c@{}c@{}c@{}c@{}||c@{}c@{}c@{}c@{}c@{}c@{}c@{}c@{}|c@{}c@{}c@{}c@{}c@{}c@{}c@{}c@{}|c@{}c@{}c@{}c@{}c@{}c@{}c@{}c@{}c@{}}
  &  &  &  &  &  &  &\mathtt{ED}&  &  &  &  &  &  &  &\mathtt{AC}&  &  &  &  &  &\mathtt{A6}&  &  &  &  &  &  &  &  &  &\mathtt{16}&  &  &  &  &  &  &\mathtt{EF}&  &  &  &  &\mathtt{93}&  &  &  &  &\\
  &  &  &  &\mathtt{C4}&  &  &  &  &  &  &  &\tikzmark{1A}{\setlength{\fboxsep}{1pt}\colorbox{blue!70!black}{\textcolor{white}{$\mathtt{7C}$}}}&  &  &  &  &  &\mathtt{91}&  &  &  &  &  &  &  &  &  &\mathtt{B4}&  &  &  &  &  &  &  &  &  &  &\mathtt{C7}&\tikzmark{1B}{\setlength{\fboxsep}{1pt}\colorbox{blue!70!black}{\textcolor{white}{$\mathtt{E5}$}}}&  &  &  &  &  &  &  &\\
  &\mathtt{D6}&  &  &  &  &  &  &  &\mathtt{78}&  &  &  &  &  &  &  &  &  &  &  &  &  &\mathtt{2E}&  &\mathtt{0A}&  &  &  &  &  &  &\mathtt{13}&  &  &  &  &  &  &  &  &  &  &  &  &\mathtt{D5}&  &  &\\
  &  &  &  &  &  &\mathtt{B1}&  &  &  &  &  &  &  &\mathtt{5C}&  &  &  &  &  &\mathtt{D8}&  &  &  &  &  &  &  &  &  &\mathtt{71}&  &  &\mathtt{C3}&  &  &  &  &  &  &  &  &\mathtt{AC}&  &  &  &  &  &\\
       &  &  &\mathtt{34}&  &  &  &  &  &  &  &\mathtt{AB}&  &  &  &  &  &\tikzmark{1E}{\setlength{\fboxsep}{1pt}\colorbox{blue!70!black}{\textcolor{white}{$\mathtt{12}$}}}&  &  &  &  &  &  &  &  &  &\mathtt{16}&  &  &  &  &  &  &\tikzmark{1F}{\setlength{\fboxsep}{1pt}\colorbox{blue!70!black}{\textcolor{white}{$\mathtt{15}$}}}&  &  &  &  &  &  &  &  &  &  &  &  &\mathtt{9B}&\\
\tikzmark{1I}{\setlength{\fboxsep}{1pt}\colorbox{blue!70!black}{\textcolor{white}{$\mathtt{3E}$}}}&  &  &  &  &  &  &  &\mathtt{EF}&  &  &  &  &  &  &  &  &  &  &  &  &  &\mathtt{59}&  &\tikzmark{1J}{\setlength{\fboxsep}{1pt}\colorbox{blue!70!black}{\textcolor{white}{$\mathtt{18}$}}}&  &  &  &  &  &  &  &  &  &  &\mathtt{0E}&  &  &  &  &  &  &  &  &\mathtt{83}&  &  &  &\\
  &  &  &  &  &\mathtt{D0}&  &  &  &  &  &  &  &\mathtt{06}&  &  &  &  &  &\mathtt{CD}&  &  &  &  &  &  &  &  &  &\mathtt{86}&  &  &  &  &  &  &\mathtt{8F}&  &  &  &  &\mathtt{F9}&  &  &  &  &  &  &\\
  &  &\mathtt{7F}&  &  &  &  &  &  &  &\mathtt{58}&  &  &  &  &  &\mathtt{21}&  &  &  &  &  &  &  &  &  &\mathtt{9C}&  &  &  &  &  &  &  &  &  &  &\mathtt{06}&  &  &  &  &  &  &  &  &\mathtt{87}&  &\\
\hline\hline
  &  &  &  &  &\mathtt{3D}&  &  &  &  &  &  &  &  &  &\mathtt{DC}&  &  &  &  &  &  &  &\mathtt{F3}&  &  &  &\mathtt{E6}&  &  &  &  &  &  &  &  &  &  &  &\mathtt{72}&  &  &  &  &  &  &\mathtt{5C}&  &\\
  &  &\mathtt{0F}&  &  &  &  &  &  &  &  &  &\tikzmark{1L}{\setlength{\fboxsep}{1pt}\colorbox{blue!70!black}{\textcolor{white}{$\mathtt{1D}$}}}&  &  &  &  &  &  &  &\mathtt{14}&  &  &  &\tikzmark{1K}{\setlength{\fboxsep}{1pt}\colorbox{blue!70!black}{\textcolor{white}{$\mathtt{F6}$}}}&  &  &  &  &  &  &  &  &  &  &  &\mathtt{9D}&  &  &  &  &  &  &  &  &  &  &\mathtt{1B}&\\
  &  &  &  &  &  &  &\mathtt{58}&  &\mathtt{50}&  &  &  &  &  &  &  &\tikzmark{1D}{\setlength{\fboxsep}{1pt}\colorbox{blue!70!black}{\textcolor{white}{$\mathtt{C2}$}}}&  &  &  &  &  &  &  &  &  &  &  &\mathtt{16}&  &  &  &\mathtt{89}&  &  &  &  &  &  &\tikzmark{1C}{\setlength{\fboxsep}{1pt}\colorbox{blue!70!black}{\textcolor{white}{$\mathtt{0E}$}}}&  &  &  &  &  &  &  &\\
  &  &  &  &\mathtt{14}&  &  &  &  &  &  &  &  &  &\mathtt{59}&  &  &  &  &  &  &  &\mathtt{47}&  &  &  &\mathtt{05}&  &  &  &  &  &  &  &  &  &  &  &\mathtt{F4}&  &  &\mathtt{9B}&  &  &  &  &  &  &\\
  &\mathtt{5A}&  &  &  &  &  &  &  &  &  &\mathtt{7D}&  &  &  &  &  &  &  &\mathtt{CC}&  &  &  &  &  &  &  &  &  &  &  &\mathtt{48}&  &  &  &\mathtt{69}&  &  &  &  &  &  &\mathtt{C1}&  &  &  &  &  &\\
  &  &  &  &  &  &\mathtt{C0}&  &\mathtt{88}&  &  &  &  &  &  &  &\mathtt{4C}&  &  &  &  &  &  &  &  &  &  &  &\mathtt{0A}&  &  &  &\mathtt{76}&  &  &  &  &  &  &  &  &  &  &\mathtt{A0}&  &  &  &  &\\
  &  &  &\mathtt{21}&  &  &  &  &  &  &  &  &  &\mathtt{3F}&  &  &  &  &  &  &  &\mathtt{40}&  &  &  &\mathtt{B7}&  &  &  &  &  &  &  &  &  &  &  &\mathtt{70}&  &  &  &  &  &  &\mathtt{AA}&  &  &  &\\
\tikzmark{1H}{\setlength{\fboxsep}{1pt}\colorbox{blue!70!black}{\textcolor{white}{$\mathtt{6F}$}}}&  &  &  &  &  &  &  &  &  &\mathtt{88}&  &  &  &  &  &  &  &\mathtt{24}&  &  &  &  &  &  &  &  &  &  &  &\mathtt{3E}&  &  &  &\tikzmark{1G}{\setlength{\fboxsep}{1pt}\colorbox{blue!70!black}{\textcolor{white}{$\mathtt{90}$}}}&  &  &  &  &  &  &  &  &  &  &\mathtt{72}&  &  &\\
\end{array}
\right).
}
\end{align}
\begin{tikzpicture}[remember picture, overlay]
  \draw[thick, blue!50, ->] (pic cs:1A) -- (pic cs:1B);
  \draw[thick, blue!50, ->] (pic cs:1B) -- (pic cs:1C);
  \draw[thick, blue!50, ->] (pic cs:1C) -- (pic cs:1D);
  \draw[thick, blue!50, ->] (pic cs:1D) -- (pic cs:1E);
  \draw[thick, blue!50, ->] (pic cs:1E) -- (pic cs:1F);
  \draw[thick, blue!50, ->] (pic cs:1F) -- (pic cs:1G);
  \draw[thick, blue!50, ->] (pic cs:1G) -- (pic cs:1H);
  \draw[thick, blue!50, ->] (pic cs:1H) -- (pic cs:1I);
  \draw[thick, blue!50, ->] (pic cs:1I) -- (pic cs:1J);
  \draw[thick, blue!50, ->] (pic cs:1J) -- (pic cs:1K);
  \draw[thick, blue!50, ->] (pic cs:1K) -- (pic cs:1L);
  \draw[thick, blue!50, ->] (pic cs:1L) -- (pic cs:1A);
\end{tikzpicture}
Each nonzero entry such as $\mathtt{0F}$ represents the field element $\alpha^{15}$, where $\alpha$ is a fixed primitive element of $\mathbb{F}_{q}$, and the exponent is the decimal equivalent of the hexadecimal value.
The nonzero support (positions) of $H_\Gamma$ and $H_\Delta$ matches those of the corresponding binary matrices $\HH_X$ and $\HH_Z$.
The matrices $H_\Gamma$ and $H_\Delta$ are constructed to be orthogonal.  
For example, let us examine the orthogonality between the 7th row of $H_\Gamma$ and the 5th row of $H_\Delta$.  
These rows contain the following intersecting non-zero entries:
\[
H_\Gamma[7, \cdot] = (\alpha^{200}, \alpha^{238}), \quad
H_\Delta[5, \cdot] = (\alpha^{62}, \alpha^{24}).
\]
Then, their inner product is given by:
\[
\alpha^{200} \cdot \alpha^{62} + \alpha^{238} \cdot \alpha^{24}
= \alpha^{(200+62) \bmod 255} + \alpha^{(238+24) \bmod 255}
= \alpha^7 + \alpha^7 = 0.
\]
This confirms that the two rows are orthogonal over $\mathbb{F}_{256}$.
\end{example}

\subsection{Construction of $H_X$ and $H_Z$}
\label{092546_7Jun25}
In this section, we construct orthogonal $\Fb_2$-valued matrices $H_X$ and $H_Z$ of size $m \times n$ from the orthogonal $\Fb_q$-valued matrices $H_\Gamma$ and $H_\Delta$ of size $M \times N$ obtained in the previous section, where $n = eN$ and $m = eM$. This construction follows the method used in \cite{6017122} and \cite{komoto2024quantumerrorcorrectionnear}. 
The matrices $H_X$ and $H_Z$ are constructed from $H_\Gamma$ and $H_\Delta$ as follows. The theoretical justification and proofs, which were omitted in \cite{6017122} and \cite{komoto2024quantumerrorcorrectionnear}, are provided in the appendices.

The companion matrix $A(\gamma)\in \Fb_2^{e\times e}$  for $\gamma \in \Fb_q$ are defined in Appendices~\ref{234926_5Apr25}. Using the mapping $A$, we construct the binary parity-check matrices $(H_X, H_Z)$ as follows:
\begin{align}
    H_X &= \bigl(A(\gamma_{i,j})\bigr), \quad
    H_Z = \bigl(A(\delta_{i,j})^\top\bigr).
\end{align}
From the properties of the companion matrix, it can be verified that $(H_X, H_Z)$ are orthogonal:
\begin{align}
    H_X H_Z^{\top} = O. \label{224009_19May25}
\end{align}
A proof of~\eqref{224009_19May25} is provided in Appendix~\ref{164912_4Jun25}.

We denote by $C_\Gamma$ and $C_\Delta$ the $\Fb_q$-linear spaces defined as the null spaces of the matrices $H_\Gamma$ and $H_\Delta$, respectively:
\begin{align}
    C_\Gamma &= \{\xiU \in \Fb_q^N \mid H_\Gamma \xiU = 0\},\quad     C_\Delta = \{\xiU \in \Fb_q^N \mid H_\Delta \xiU = 0\}.
\end{align}
Likewise, we denote by $C_X$ and $C_Z$ the $\Fb_2$-linear spaces defined as the null spaces of the binary matrices $H_X$ and $H_Z$, respectively:
\begin{align}
    C_X &= \{\xU \in \Fb_2^n \mid H_X \xU = 0\}, \quad    C_Z = \{\xU \in \Fb_2^n \mid H_Z \xU = 0\}.
\end{align}
The following theorem shows that the codes $C_\Gamma$ and $C_\Delta$, originally defined over $\Fb_q$, can be equivalently represented as binary codes $C_X$ and $C_Z$ via the transformations $\vU$ and $\wU$, respectively. 
The definitions of the mappings $\vU(\cdot)$ and $\wU(\cdot)$ are provided in Appendices~\ref{234926_5Apr25} and~\ref{234956_5Apr25}.
\begin{teiri}\label{thm:binary_mapping}
Let $\xiU = (\xi_0,\ldots,\xi_{N-1})$ be a vector over $\Fb_q$. Then the following statements hold:
\begin{enumerate}
 \item $\bigl(\vU(\xi_0),\ldots,\vU(\xi_{N-1})\bigr) \in C_X$ if and only if $\xiU \in C_\Gamma$.
 \item $\bigl(\wU(\xi_0),\ldots,\wU(\xi_{N-1})\bigr) \in C_Z$ if and only if $\xiU \in C_\Delta$.
\end{enumerate}
\end{teiri}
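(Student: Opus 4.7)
The plan is to reduce the theorem, via a block-by-block unpacking of the vector equations $H_X\cdot(\vU(\xi_0),\ldots,\vU(\xi_{N-1}))^\top = 0$ and the analogous one for $H_Z$, to two compatibility identities between the companion map $A(\cdot)$ and the vectorization maps $\vU$ and $\wU$ from Appendices~\ref{234926_5Apr25} and~\ref{234956_5Apr25}. The substantive content lies entirely in those identities; once they are in hand, parts (1) and (2) follow by the same linear-algebraic computation.

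For (1), I would first invoke the key property
\[
A(\gamma)\,\vU(\xi) \;=\; \vU(\gamma\,\xi) \qquad \text{for all } \gamma,\xi\in\Fb_q,
\]
together with the $\Fb_2$-linearity and bijectivity of $\vU:\Fb_q\to\Fb_2^e$ (in particular $\vU(0)=0$). Using the block structure $H_X=(A(\gamma_{i,j}))$, the $i$-th block of the product $H_X\,(\vU(\xi_0),\ldots,\vU(\xi_{N-1}))^\top\in\Fb_2^{eM}$ then equals
\[
\sum_{j=0}^{N-1} A(\gamma_{i,j})\,\vU(\xi_j) \;=\; \sum_j \vU(\gamma_{i,j}\,\xi_j) \;=\; \vU\Bigl(\sum_j \gamma_{i,j}\,\xi_j\Bigr) \;=\; \vU\bigl((H_\Gamma\,\xiU)_i\bigr),
\]
where the first equality uses the compatibility identity, the second uses $\Fb_2$-linearity of $\vU$, and the last is the definition of matrix–vector multiplication in $\Fb_q$. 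By injectivity of $\vU$, the $i$-th block vanishes iff $(H_\Gamma\,\xiU)_i=0$. Aggregating over $i\in[M]$ yields the equivalence claimed in (1).

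For (2), I would appeal to the dual compatibility $A(\delta)^\top\,\wU(\xi)=\wU(\delta\,\xi)$ for $\delta,\xi\in\Fb_q$. This is natural because $\Fb_q$ is commutative, so the map $\delta\mapsto A(\delta)^\top$ is still an $\Fb_2$-algebra homomorphism from $\Fb_q$ into $\Fb_2^{e\times e}$, and it necessarily realizes scalar multiplication by $\delta$ on the $\Fb_2$-basis of $\Fb_q$ that defines $\wU$. The same block-by-block computation, now applied to $H_Z=(A(\delta_{i,j})^\top)$ with $\wU$ in place of $\vU$, gives the equivalence between $(\wU(\xi_0),\ldots,\wU(\xi_{N-1}))\in C_Z$ and $\xiU\in C_\Delta$.

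The only real obstacle is that the compatibility identities $A(\gamma)\vU(\xi)=\vU(\gamma\xi)$ and $A(\delta)^\top\wU(\xi)=\wU(\delta\xi)$ must be cited cleanly from the appendices; beyond that, no combinatorial or quantitative content appears, and the argument is a routine unpacking exploiting the fact that companion matrices are precisely the matrix representations of scalar multiplication on $\Fb_q$ viewed as an $\Fb_2$-vector space.
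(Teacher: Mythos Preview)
Your proposal is correct and follows essentially the same approach as the paper's proof: both compute the $i$-th block of $H_X\,(\vU(\xi_0),\ldots,\vU(\xi_{N-1}))^\top$ as $\sum_j A(\gamma_{ij})\vU(\xi_j)=\vU\bigl(\sum_j\gamma_{ij}\xi_j\bigr)$ using the compatibility identity from Appendix~\ref{234926_5Apr25}, invoke injectivity of $\vU$, and then repeat with $A(\cdot)^\top$ and $\wU$ via Appendix~\ref{234956_5Apr25} for part~(2).
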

\begin{proof}
From Appendix~\ref{234926_5Apr25} and Appendix~\ref{234956_5Apr25}, we obtain the following identities for any vector $\xiU=(\xi_0,\ldots,\xi_{N-1})\in\Fb_q^N$:
\begin{align}
\sum_j(H_X)_{ij} \vU(\xi_j) &= \sum_j A(\gamma_{ij}) \vU(\xi_j) = \sum_j \vU(\gamma_{ij} \xi_j) = \vU\left(\sum_j \gamma_{ij} \xi_j\right), \\
\sum_j(H_Z)_{ij} \wU(\xi_j) &= \sum_j A(\delta_{ij})^\T \wU(\xi_j) = \sum_j \wU(\delta_{ij} \xi_j) = \wU\left(\sum_j \delta_{ij} \xi_j\right).
\end{align}
These equalities show that applying $H_X$ to the bit-level expansion $\bigl(\vU(\xi_0), \ldots, \vU(\xi_{N-1})\bigr)$ is equivalent to applying $H_\Gamma$ to $\xiU$ over $\Fb_q$ and then mapping the result via $\vU$.  
Therefore, $\bigl(\vU(\xi_0), \ldots, \vU(\xi_{N-1})\bigr)$ belongs to $C_X$ if and only if $\xiU \in C_\Gamma$.  
The second statement is proved in the same way.
\end{proof}

This correspondence allows us to identify $(C_X, C_Z)$ with $(C_\Gamma, C_\Delta)$.  
For example, we write
\begin{align}
C_X = C_\Gamma, \quad C_Z = C_\Delta.  
\end{align}
In particular, the orthogonality condition $C_X^\perp \subset C_Z$ is equivalent to $C_\Gamma^\perp \subset C_\Delta$.  

The binary CSS code defined by $(C_X, C_Z)$ is used for quantum error correction.  
As explained in Section~\ref{sec:171047_25May25}, the decoder does not directly use the binary matrices $H_X$ and $H_Z$ as the binary matrices that define the Tanner graphs for BP decoding.  
Instead, it uses a partitioned form of $H_X$ and $H_Z$ divided into $e$-bit blocks, or equivalently, the corresponding nonbinary matrices $H_\Gamma$ and $H_\Delta$.  
Accordingly, when we refer to the girth of $C_X$ or $C_Z$, we mean the girth of $H_\Gamma$ or $H_\Delta$, respectively; that is, the girth of $\HH_X$ or $\HH_Z$.

\label{003230_3Jun25}
The minimum distance $d$ is defined as follows, where $w_H(\cdot)$ denotes the Hamming weight of a binary vector:  
\begin{align}
d_X &= \min_{\xU \in C_X \setminus C_Z^\perp}  w_H(\xU) \\
&= \min_{\xiU \in C_\Gamma \setminus C_\Delta^\perp}  w_H\bigl(\vU(\xi_0),\vU(\xi_1),\ldots,\vU(\xi_{N-1})\bigr), \label{def:d_X} \\
d_Z &= \min_{\zU \in C_Z \setminus C_X^\perp} w_H(\zU) \\
&= \min_{\zetaU \in C_\Delta \setminus C_\Gamma^\perp} w_H\bigl(\wU(\zeta_0),\wU(\zeta_1),\ldots,\wU(\zeta_{N-1})\bigr), \label{def:d_Z} \\
d &= \min\{d_X, d_Z\}.
\end{align}
\section{Structure of Cycles in the Conventional Method}\label{sec:property_komoto_kasai}
From this section onward, we assume a fixed row weight $L$, unless stated otherwise.
To emphasize the structural role of $L$ in the design and analysis of the code, we avoid substituting expressions involving $L$ (such as $2L$ or $L/2$) with their explicit numeric values (e.g., $2L = 12$, $L/2 = 3$).
Instead, we consistently retain symbolic expressions in terms of $L$ throughout the presentation.

In this section, we investigate properties of the Tanner graphs associated with the parity-check matrices of CSS codes constructed using the conventional method.  
In Section~\ref{sec:235443_19May25}, we show that the matrices $H_X$ and $H_Z$ (or equivalently $\HH_X$ and $\HH_Z$) inevitably contain three types of TCBCs of length $2L$, denoted by $\uU(j)$ for $j = \{ 0,1,2 \}$.  
Next, in Section~\ref{231534_3Jun25}, we demonstrate that the structures of $\uU(j)$ for $j=\{0,1\}$ in $H_X$ and $H_Z$ correspond to codewords in $C_Z^\perp$ and $C_X^\perp$, respectively.  
Furthermore, we show that the structure of $\uU(2)$ in $H_X$ and $H_Z$ corresponds to codewords in $C_X \setminus C_Z^\perp$ and $C_Z \setminus C_X^\perp$, respectively. 
\subsection{Unavoidable Totally Closed Block Cycles}\label{sec:235443_19May25}
Suppose that $\HH_X$ and $\HH_Z$ are constructed from $\fU$ and $\gU$ according to the method described in Definition \ref{141246_12Jun25}.  
The following theorem states that, under Requirement~\ref{yosei:orthogonal}, the existence of length-$2L$ TCBCs is inevitable in the construction of the parity-check matrices $\HH_X$ and $\HH_Z$.

\begin{teiri}[Unavoidable TCBCs~\cite{komoto2024quantumerrorcorrectionnear}]\label{teiri:002234_11Feb25}
Let $P \ge 1$. Let $\fU$ and $\gU$ be permutations in $\Fc_P$ that satisfy Requirement~\ref{yosei:orthogonal}.
For each integer $j\in \{0,1,2\}$, consider a following block cycle $\uU(j)$ of length $2L$ in $\HH_X$, starting from position $(0,0)$:
\begin{align}
\uU(j) = f_0 \Rightarrow g_j \Rightarrow f_1 \Rightarrow g_{j-1} \Rightarrow \cdots \Rightarrow f_{L/2-1} \Rightarrow g_{j - (L/2 - 1)} \Rightarrow f_0. \label{055301_4Feb25}
\end{align}
Each block cycle $\uU(j)$ forms a TCBC in $\HH_X$.
Note that the subscripts of $f$ and $g$ are to be interpreted modulo $L/2$, i.e., as elements of $\mathbb{Z}_{L/2}$.

A symmetric statement holds for a block cycle in $\HH_Z$, which is obtained by interchanging the roles of $X$ and $Z$. 
The corresponding block cycle is given by:
\begin{align}
\uU(j) = g_0^{-1} \Rightarrow f_{-j}^{-1} \Rightarrow g_{-1}^{-1} \Rightarrow f_{-(j-1)}^{-1} \Rightarrow \cdots \Rightarrow g_{-(L/2 - 1)}^{-1} \Rightarrow f_{-(j - (L/2 - 1))}^{-1} \Rightarrow g_0^{-1}. \label{055302_4Feb25}
\end{align}
Each block cycle $\uU(j)$ forms a TCBC in $\HH_Z$.
\end{teiri}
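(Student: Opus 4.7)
The plan is to compute the composite function $f_{\uU(j)}$ via formula~\eqref{143950_31May25} and show it reduces to the identity. First I would unfold the compact form $\uU(j)$ into the detailed $2L$-block path: by Definition~\ref{141246_12Jun25}, between consecutive upper-row blocks $\overline{f}_k$ (at column $k$) and $\overline{g}_{j-k}$ (at column $L/2+(j-k)$), the implicit row-1 detour visits $g_{j-k-1}$ (at the same column $L/2+(j-k)$) and $f_k$ (at column $k+1$), because the row-1 left/right halves carry shifted entries $f_{\ell-1}$ at column $\ell$ and $g_{\ell-1}$ at column $L/2+\ell$, respectively.

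Next, grouping the $2L$ blocks into $L$ horizontal pairs, alternating between the upper and lower rows, and applying~\eqref{143950_31May25} factors the composite function as $f_{\uU(j)} = \Phi_{L/2-1}\, \Phi_{L/2-2}\, \cdots\, \Phi_0$, where $\Phi_k := f_k^{-1}\, g_{j-k-1}\, g_{j-k}^{-1}\, f_k$ collects the contribution of one phase.

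The core algebraic step is to simplify each $\Phi_k$ using Requirement~\ref{yosei:orthogonal}, rephrased via~\eqref{191912_14Jun25}: for $J = 2$, $f_a$ commutes with $g_b$ whenever $a + b \equiv 0, \pm 1 \pmod{L/2}$. The two index sums appearing inside $\Phi_k$ are $k + (j-k-1) = j-1$ and $k + (j-k) = j$. For $j \in \{0, 1\}$ both sums lie in $\{-1, 0, 1\}$, so by Lemma~\ref{lem:194543_14Jun25} the outer $f_k^{\pm 1}$ factors can be conjugated through the central $g$-pair and cancel, yielding $\Phi_k = g_{j-k-1}\, g_{j-k}^{-1}$. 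The whole product then telescopes to $g_{j-L/2}\, g_j^{-1} = g_j\, g_j^{-1} = \mathrm{id}$, using the convention $g_{j-L/2} = g_j$ modulo $L/2$. The symmetric statement~\eqref{055302_4Feb25} for $\HH_Z$ follows by applying the same telescoping argument after the duality substitutions $f_a \leftrightarrow g_{-a}^{-1}$ and $g_b \leftrightarrow f_{-b}^{-1}$ that relate Definition~\ref{141246_12Jun25}'s $\HH_X$ and $\HH_Z$ blocks, with Lemma~\ref{lem:194543_14Jun25} ensuring that the required commutativities transfer to the inverted permutations.

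The hardest part is the borderline case $j = 2$: when $L/2 \ge 4$ the sum $k + (j-k) = 2$ falls outside $\{-1, 0, 1\} \pmod{L/2}$, so $f_k$ need not commute with $g_{j-k}$ and the direct collapse of $\Phi_k$ to a pure $g$-product breaks. For $L = 6$ this is invisible because $2 \equiv -1 \pmod 3$, but for $L \ge 8$ one must use only the remaining commutator $f_k \leftrightarrow g_{j-k-1}$ (which still holds, since $j - 1 = 1 \in \{-1, 0, 1\}$) to pull $g_{j-k-1}$ out of each $\Phi_k$, and then show that the residual conjugations $f_k^{-1}\, g_{2-k}^{-1}\, f_k$ combine across successive phases into a telescoping cancellation rather than block-by-block. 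This cross-phase interleaving is the principal subtlety of the argument and is where the proof for $j = 2$ departs from the clean commutator-and-telescope derivation available for $j \in \{0, 1\}$.
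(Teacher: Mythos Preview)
Your approach—expand the compact cycle, factor the composite into blocks $\Phi_k$, commute the $f_k^{\pm1}$ through via Requirement~\ref{yosei:orthogonal}, and telescope—is exactly the paper's. The paper computes $f_{\uU(j)}^{-1}$ and obtains factors $f_\ell^{-1}g_{j-\ell}\,g_{j-\ell-1}^{-1}f_\ell$, which are just your $\Phi_k$ inverted. One small slip in your unfolding: the row-$1$ detour sits between $\overline{g}_{j-k}$ and $\overline{f}_{k+1}$, not between $\overline{f}_k$ and $\overline{g}_{j-k}$; the first $\Rightarrow$ is a pure horizontal in row~$0$ (see Example~\ref{ex:215134_16May25}). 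Your $\Phi_k$ are nevertheless correct.

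The real problem is your $j=2$, $L\ge 8$ discussion. You correctly spot that $f_k$ need not commute with $g_{2-k}$ there, but the ``cross-phase interleaving'' you propose does not exist: the statement is simply \emph{false} for $L\ge 8$ in general. A concrete counterexample for $L=8$, $P=16$ is $f_0(x)=3x+1$, $g_2(x)=5x+1$, and $f_1=f_2=f_3=g_0=g_1=g_3=\mathrm{id}$; these satisfy Requirement~\ref{yosei:orthogonal} (cf.\ Table~\ref{fig:commutativity_conditions}, case $L=8$), yet one computes $f_{\uU(2)}^{-1}=f_0^{-1}g_2f_0\,g_2^{-1}$, which sends $x\mapsto x+6\ne x$. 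The paper does not claim otherwise: immediately after the proof it remarks that for $L\ge 8$ only $j\in\{0,1\}$ were ever used, and that the $j=2$ case is invoked solely when $L=6$, where (as you already observe) $2\equiv -1\pmod 3$ so that every required $f$--$g$ pair commutes and the same telescoping argument goes through uniformly. So drop the speculative paragraph; your proof for $j\in\{0,1\}$, and for $j=2$ with $L=6$, is already complete.
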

\begin{proof}
 Let $f_{\uU(j)}$ denote the composed function corresponding to $\uU(j)$ in~\eqref{055301_4Feb25}.  
The simplified block cycle in~\eqref{055301_4Feb25} can be expanded into the full block cycle form as follows:
\begin{align}
\uU(j)= f_0\to g_j \to g_{j-1}\to f_0\to f_1\to g_{j-1} \to g_{j-2}\to f_1\to \cdots \to f_{L/2-1}\to g_{j-{L/2-1}} \to g_{j-({L/2})}\to f_{L/2-1} \to f_0
\end{align}
Therefore, the composite function of $\uU(j)$ is calculate
\begin{align}
 f_{\uU(j)}^{-1}&= (f_0^{-1}g_j g_{j-1}^{-1}f_0)(f_1^{-1}g_{j-1} g_{j-2}^{-1}f_1) \cdots (f_\ell^{-1}g_{j-\ell} g_{j-(\ell+1)}^{-1}f_\ell) \cdots (f_{L/2-1}^{-1}g_{j-{L/2-1}} g_{j-({L/2})}^{-1}f_{L/2-1})
\\&=(g_j g_{j-1}^{-1}f_0 f_0^{-1}) (g_{j-1} g_{j-2}^{-1}f_1 f_1^{-1}) \cdots(g_{j-\ell} g_{j-(\ell+1)}^{-1}f_\ell^{-1}f_\ell)\cdots (g_{j-{L/2-1}} g_{j-({L/2})}^{-1}f_{L/2-1} f_{L/2-1}^{-1})
\\&=(g_j g_{j-1}^{-1}) (g_{j-1} g_{j-2}^{-1}) \cdots (g_{j-\ell} g_{j-(\ell+1)}^{-1}) \cdots(g_{j-{L/2-1}} g_{j-({L/2})}^{-1})
\\&=g_j (g_{j-1}^{-1} g_{j-1}) (g_{j-2}^{-1}g_{j-2}) \cdots (g_{j-\ell}^{-1}g_{j-\ell}) \cdots (g_{j-{L/2-1}}^{-1}g_{j-{L/2-1}}) g_{j-({L/2})}^{-1}
\\&=g_j  g_{j-({L/2})}^{-1}
\\&=g_j  g_{j}^{-1}
\\&=\mathrm{id}.
\end{align}
Here, we used Lemma \ref{lem:194543_14Jun25} and condition~\eqref{191912_14Jun25}, which is equivalent to Requirement~\ref{yosei:orthogonal}.  
Under this condition, the following identities hold.  
\begin{align}
& f_\ell^{-1} g_{j - \ell} = g_{j - \ell} f_\ell^{-1}, \\
& f_\ell g_{j - (\ell + 1)}^{-1} = g_{j - (\ell + 1)}^{-1} f_\ell,
\end{align}
for $\ell\in [L/2]$. In particular, $f_\ell$ and $g_{\ell'}$ commute for all $\ell, \ell' \in [L/2]$ when $L = 6$.  
In the same manner, it can be shown that the composition of the functions along the block cycle in~\eqref{055302_4Feb25} also yields the identity function.  
\end{proof}
In~\cite{komoto2024quantumerrorcorrectionnear}, the target codes had $L \ge 8$, and Theorem~\ref{teiri:002234_11Feb25} was applied only for $j = 0,1$.  
In contrast, the present paper focuses on the case $L = 6$, for which the theorem also holds for $j = 2$.  
The newly identified TCBC $\uU(2)$ was found to be a dominant source of performance degradation specifically when $L = 6$.  
In this work, we propose  non-binary label assignment scheme that neutralizes the harmful effect of $\uU(2)$, thereby enabling improved decoding performance.

Thus, while Theorem~\ref{teiri:002234_11Feb25} guarantees the existence of TCBCs denoded by $\uU(j)$, it leaves open the possibility that other types of TCBC may exist.
However, Example~\ref{ex:021639_17Apr25}, together with Table~\ref{tab:015542_17Apr25}, provides an explicit instance of $\fU,\gU$ in which no closed block cycles other than $\uU(j)$ appear for any $P \ge 1$.
This observation suggests that $\uU(j)$ identified in Theorem~\ref{teiri:002234_11Feb25} are the only ones that necessarily occur.

We refer to these as \emph{unavoidable TCBCs} (UTCBCs), denoted by~\eqref{055301_4Feb25} and~\eqref{055302_4Feb25} in $\HH_X$ and $\HH_Z$, respectively. 
When the context is clear, we simply denote them by $\uU(j)$.
Depending on the context, the UTCBC $\uU(j)$ in $\HH_X$ may be regarded as the set of Tanner cycles contained within it.
Moreover, when we refer to the UTCBC $\uU(j)$ in $H_\Gamma$, we mean the submatrix in $H_\Gamma$ corresponding to the Tanner cycles that comprise $\uU(j)$ in $\HH_X$, including their non-binary matrix components.
Likewise, the UTCBC $\uU(j)$ in $H_X$ refers to the corresponding submatrix in $H_X$ that aligns with the UTCBC $\uU(j)$ in $H_\Gamma$.

\begin{example}\label{ex:040047_23Apr25}
In the matrices $\HH_X$ and $\HH_Z$ provided in Example~\ref{ex:021639_17Apr25}, two Tanner cycles contained in each $\uU(j)$ for $j = 2$ and $j = 0$ is highlighted in red and blue, respectively.
In the matrices $H_\Gamma$ and $H_\Delta$ provided in Example~\ref{035757_2Jun25}, two Tanner cycles contained in each $\uU(j)$ for $j = 2$ and $j = 0$ is highlighted in red and blue, respectively.

We consider the parity-check matrices $\HH_X$ and $\HH_Z$ with block dimensions $J = 2$ and $L = 6$. Each matrix comprises $J \times L$ blocks, where each block is of size $P \times P$. Under this configuration, each $\uU(j)$ traverses all blocks exactly once.
For $j \in  \{0, 1, 2\}$, we illustrate the sequence in which the blocks are visited in $\uU(j)$, starting from position $(0,0)$. The blocks are labeled in the order $0, 1, 2, \ldots, 10, 11$. The following matrices represent the visitation order over the $2 \times 6$ grid of blocks:
\begin{align}
\left(
\begin{array}{lll|lll}
0 & 4 & 8 & 1 & 9 & 5 \\\hline
11 & 3 & 7 & 2 & 10 & 6
\end{array}
\right),
\quad
\left(
\begin{array}{lll|lll}
0 & 4 & 8 & 5 & 1 & 9 \\\hline
11 & 3 & 7 & 6 & 2 & 10
\end{array}
\right),
\quad
\left(
\begin{array}{lll|lll}
0 & 4 & 8 & 9 & 5 & 1 \\\hline
11 & 3 & 7 & 10 & 6 & 2
\end{array}
\right)
\end{align}
Each matrix corresponds to $j = 0$, $j = 1$, and $j = 2$, respectively.
It can be verified that each of these block cycles $\uU(j)$ visits all $J \times L$ blocks exactly once by circulating $L/2$ steps clockwise through the four quadrants of the matrix, each consisting of half rows or half columns. This structure ensures that the entire matrix is covered uniformly in a single traversal.
\end{example}

Theorem~\ref{053905_28Apr25} concerns the structure of the UTCBCs $\uU(k)$ for $k\in \{0,1,2\}$ in $\HH_X$ and $\HH_Z$.  
Before stating the theorem, we first present the following illustrative example.
Let $\hU_{kr}^\top$ denote the $r$-th row of $\HH_X^{(k)}$. 
From the orthogonality condition~\eqref{222558_20May25}, this row is expected to be orthogonal to $\HH_Z$, satisfying the following relation:
\begin{align}
\HH_Z \hU_{kr} = \zeroU. \label{144630_21May25}
\end{align}

Let us recall the matrices $\HH_X$ and $\HH_Z$ discussed in Example~\ref{ex:021639_17Apr25}.  
For the case $k = 0$ and $r = 7$, the nonzero entries involved in the equation are located at columns 0, 12, 17, 24, 34, and 40, and at rows 1, 4, 5, 9, 10, and 15. These entries are extracted and presented as follows:
\begin{align}
\left(\begin{array}{c@{\hspace{1mm}}c@{\hspace{1mm}}c@{\hspace{1mm}}c@{\hspace{1mm}}c@{\hspace{1mm}}c@{\hspace{1mm}}}
&{\setlength{\fboxsep}{1pt}\colorbox{blue!70!black}{\textcolor{white}{$\mathtt{1}$}}}&&&&{\setlength{\fboxsep}{1pt}\colorbox{blue!70!black}{\textcolor{white}{$\mathtt{1}$}}}\\
&&{\setlength{\fboxsep}{1pt}\colorbox{blue!70!black}{\textcolor{white}{$\mathtt{1}$}}}&&{\setlength{\fboxsep}{1pt}\colorbox{blue!70!black}{\textcolor{white}{$\mathtt{1}$}}}\\
{\setlength{\fboxsep}{1pt}\colorbox{blue!70!black}{\textcolor{white}{$\mathtt{1}$}}}&&&{\setlength{\fboxsep}{1pt}\colorbox{blue!70!black}{\textcolor{white}{$\mathtt{1}$}}}\\
&{\setlength{\fboxsep}{1pt}\colorbox{blue!70!black}{\textcolor{white}{$\mathtt{1}$}}}&&{\setlength{\fboxsep}{1pt}\colorbox{blue!70!black}{\textcolor{white}{$\mathtt{1}$}}}\\
&&{\setlength{\fboxsep}{1pt}\colorbox{blue!70!black}{\textcolor{white}{$\mathtt{1}$}}}&&&{\setlength{\fboxsep}{1pt}\colorbox{blue!70!black}{\textcolor{white}{$\mathtt{1}$}}}\\
{\setlength{\fboxsep}{1pt}\colorbox{blue!70!black}{\textcolor{white}{$\mathtt{1}$}}}&&&&{\setlength{\fboxsep}{1pt}\colorbox{blue!70!black}{\textcolor{white}{$\mathtt{1}$}}}
\end{array}\right)
\left(\begin{array}{c}
 {\setlength{\fboxsep}{1pt}\colorbox{blue!70!black}{\textcolor{white}{$\mathtt{1}$}}}\\
 {\setlength{\fboxsep}{1pt}\colorbox{blue!70!black}{\textcolor{white}{$\mathtt{1}$}}}\\
 {\setlength{\fboxsep}{1pt}\colorbox{blue!70!black}{\textcolor{white}{$\mathtt{1}$}}}\\
 {\setlength{\fboxsep}{1pt}\colorbox{blue!70!black}{\textcolor{white}{$\mathtt{1}$}}}\\
 {\setlength{\fboxsep}{1pt}\colorbox{blue!70!black}{\textcolor{white}{$\mathtt{1}$}}}\\
 {\setlength{\fboxsep}{1pt}\colorbox{blue!70!black}{\textcolor{white}{$\mathtt{1}$}}}
\end{array}
\right)
=
\zeroU
\end{align}
The submatrix of $\HH_Z$ that appears in the above equation is an $L \times L$ matrix forming a Tanner cycle of length $2L$.  
We refer to this submatrix as $S$.  
Moreover, it can be verified that $S$ is contained in the UTCBC $\uU(0)$ in $\HH_Z$.  
Theorem~\ref{053905_28Apr25} asserts that this property holds for any $j \in [J]$ and $r \in [P]$, under Requirement~\ref{yosei:orthogonal} together with Requirement~\ref{yosei:square}. 

Requirement~\ref{yosei:square} is a condition imposed specifically on the permutation pair $(\fU, \gU)$ to ensure that the submatrix $S$ is square. To describe the condition that guarantees injectivity within the indexing structure of $S$, we define a set of composed functions as follows: 
\begin{align}
& \Fc := \{ f_{\ell-k} \circ g_{-(\ell - j)} \mid \ell \in [L/2],\ j \in [J],\ k\in \{0,1,2\}.
\end{align}

\begin{yosei}\label{yosei:square}
Let us say that two functions $f, g \in \Fc_P$ intersect if there exists $j \in \Zb_P$ such that $f(j) = g(j)$. We require that every pair of distinct functions $h, h' \in \Fc$ do not intersect; that is,
\begin{align}
h(j) \ne h'(j) \quad \text{for all } j \in [P]. \label{163532_15Feb25}
\end{align}
\end{yosei}
As discussed in the proof of the following theorem, if Requirement~\ref{yosei:square} is not satisfied, then cycles of length less than $2L$ will appear in $\HH_X$ and $\HH_Z$.  
If the permutations $\fU$ and $\gU$ are constructed so that the girth is exactly $2L$, then Requirement~\ref{yosei:square} is automatically satisfied.  
When $L$ is small-particularly in the case $L = 6$, which is the main focus of this paper-the girth is explicitly set to $2L$.  
Therefore, it is not necessary to explicitly impose Requirement~\ref{yosei:square} in the code design process.  

In the following, we restrict our attention to those permutations $(\fU, \gU)$ and the corresponding parity-check matrices $(\HH_X, \HH_Z)$ that satisfy both Requirement~\ref{yosei:orthogonal} and Requirement~\ref{yosei:square}.

\begin{teiri}[{Properties of the UTCBCs}]\label{053905_28Apr25}
For any $k \in \{0,1,2\}$, let the $r$-th row of the block $\HH_X^{(k)}$ be denoted by $\hU_{kr}^\top$, where $r \in [P]$. Here, $\hU_{kr}$ is a binary vector of length $N = LP$. Then, by the orthogonality condition~\eqref{222558_20May25}, the following must hold:
\begin{align}
\HH_Z \hU_{kr} = \zeroU. \label{222820_20May25}
\end{align}
Let $S$ denote the submatrix of $\HH_Z$ involved in this equation, corresponding to the nonzero positions of $\hU_{kr}$.  
Then, the submatrix $S$ is square, forms a Tanner cycle of length $2L$, and moreover, it belongs to the UTCBC $\uU(k)$ in $\HH_Z$, as defined in Theorem~\ref{teiri:002234_11Feb25}.  
The same conclusion holds when the roles of $\HH_Z$ and $\HH_X$ are interchanged.  
\end{teiri}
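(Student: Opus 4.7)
The plan is to enumerate the support of $\hU_{kr}^\top$, use commutativity from Requirement~\ref{yosei:orthogonal} to identify pairs of columns sharing a row of $\HH_Z$, invoke Requirement~\ref{yosei:square} to conclude those shared rows are distinct, and finally trace the resulting $2$-regular bipartite subgraph to exhibit a single Tanner cycle of length $2L$ coinciding with $\uU(k)$.

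First I would write the $r$-th row of $\HH_X^{(k)}$ explicitly: for each $\ell\in[L/2]$ the $\ell$-th left (resp.\ right) block contributes exactly one column $c_\ell^{\Lrm}=\ell P+f_{\ell-k}^{-1}(r)$ (resp.\ $c_\ell^{\Rrm}=(L/2+\ell)P+g_{\ell-k}^{-1}(r)$), so $\hU_{kr}^\top$ has exactly $L$ nonzero positions. For each such column the two nonzero rows of $\HH_Z$ are read off from $(\HH_Z^{(\Lrm,j)})_\ell=g^{-1}_{j-\ell}$ and $(\HH_Z^{(\Rrm,j)})_\ell=f^{-1}_{j-\ell}$: in block-row $j\in\{0,1\}$, the row meeting $c_\ell^{\Lrm}$ equals $(f_{\ell-k}\circ g_{j-\ell})^{-1}(r)$ and the row meeting $c_\ell^{\Rrm}$ equals $(g_{\ell-k}\circ f_{j-\ell})^{-1}(r)$; up to commutation, both are values of members of the family $\Fc$ in Requirement~\ref{yosei:square}.

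The decisive step is the pairing argument. For each fixed $j$, the commutativity condition~\eqref{162154_15Feb25} yields $f_{\ell-k}g_{j-\ell}=g_{j-\ell}f_{\ell-k}$, and the substitution $\ell'=j+k-\ell$ rewrites this as $g_{\ell'-k}f_{j-\ell'}$, the composition assigned to $c_{\ell'}^{\Rrm}$. Hence $c_\ell^{\Lrm}$ and $c_{j+k-\ell}^{\Rrm}$ share a row within block-row $j$, giving $L/2$ shared rows per row block. Requirement~\ref{yosei:square} guarantees these $L/2$ rows are pairwise distinct (distinct indices produce distinct elements of $\Fc$, and distinct elements of $\Fc$ disagree at $r$), and rows in different row blocks are trivially distinct. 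Therefore $S$ has exactly $L$ distinct rows with two ones each, making $S$ square of size $L\times L$ and $2$-regular as a bipartite matrix.

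To finish, I would trace the walk from $c_0^{\Lrm}$: its block-row-$0$ neighbor is $c_k^{\Rrm}$, whose block-row-$1$ neighbor is $c_1^{\Lrm}$ (solving $k=1+k-\ell'$). Each such round-trip increments the left index by one modulo $L/2$, so after $L/2$ rounds the walk closes, traversing all $L$ variables and $L$ checks, i.e.\ a Tanner cycle of length $2L$. The four block positions visited during the $\ell$-th round are $g_{-\ell}^{-1}$ at $(0,\ell)$, $f_{-(k-\ell)}^{-1}$ at $(0,L/2+k-\ell)$, $f_{1-(k-\ell)}^{-1}$ at $(1,L/2+k-\ell)$, and $g_{-\ell}^{-1}$ at $(1,\ell+1)$; concatenating over $\ell\in[L/2]$ recovers the expanded block cycle whose compact form $g_0^{-1}\Rightarrow f_{-k}^{-1}\Rightarrow g_{-1}^{-1}\Rightarrow f_{-(k-1)}^{-1}\Rightarrow\cdots$ is exactly $\uU(k)$ from~\eqref{055302_4Feb25}. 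The $\HH_X$-side statement follows by the same argument with $X$ and $Z$ exchanged. The main obstacle I expect is purely notational: keeping indices consistent modulo $L/2$ through the substitution $\ell'=j+k-\ell$, and checking that every commutativity swap is admissible under Requirement~\ref{yosei:orthogonal}, which for $L=6$ is automatic by Table~\ref{fig:commutativity_conditions}.
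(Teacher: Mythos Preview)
Your proposal is correct and follows essentially the same route as the paper's proof: enumerate the $L$ support columns of $\hU_{kr}$, compute for each the two row indices in $\HH_Z$, use the commutativity of Requirement~\ref{yosei:orthogonal} to obtain the pairing $c_\ell^{\Lrm}\leftrightarrow c_{j+k-\ell}^{\Rrm}$ within block-row $j$, invoke Requirement~\ref{yosei:square} for distinctness of the $L$ rows (making $S$ square and $2$-regular), and then trace the walk clockwise through the four quadrants over $L/2$ rounds to identify it with $\uU(k)$ in $\HH_Z$. The paper's argument is organized identically, differing only in notation (it writes $r_{j,\ell}=s_{j,j+k-\ell}$ for your substitution $\ell'=j+k-\ell$) and in its explicit tabular display of the submatrix~\eqref{181207_19May25}.
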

\begin{proof}
While the proof is not particularly difficult, it is unfortunately not easy to present in a concise form.  
Therefore, we recommend the reader to refer to Example~\ref{023404_18May25}, which illustrates concrete instances of the key expressions used in the proof.

For $j\in [J]$ and $r\in [P]$, we denote the $r$-th row within the $j$-th row block by $[r,j]$, which corresponds to the absolute row index $jP + r$ in the matrix $\HH_X$ or $\HH_Z$.
For column indices, we use the column block index $\ell \in [L/2]$ to refer to columns within the left and right halves of $\HH_X$ or $\HH_Z$, which each consist of $L/2$ blocks.
For $c\in [P]$, the $c$-th column within the $\ell$-th column block of the {left half} is denoted by $[c,\ell]_\Lrm$, which corresponds to the absolute column index $\ell P + c$.
 The $c$-th column within the $\ell$-th column block of the {right half} is also denoted by $[c,\ell]_\Rrm$, but corresponds to the absolute column index $(\ell+ L/2)P + c$.

We begin by analyzing the columns of the submatrix $S$; that is, we examine the column indices corresponding to the nonzero entries of $\hU_{kr}$.  
Recall that $\hU_{kr}^\top$ is the $r$-th row of the matrix  $\HH_X^{(k)}$.  
In $\HH_X^{(k)}$, the $\ell$-th column block in the left and right halves corresponds to the permutations $f_{\ell - k}$ and $g_{\ell - k}$, respectively.  
In each half, each column $c \in [P]$ contains a $1$ at position $\bigl([f_{\ell - k}(c), k], [c, \ell]_\Lrm\bigr)$ in the left half and $\bigl([g_{\ell - k}(c), k], [c, \ell]_\Rrm\bigr)$ in the right half.  
Equivalently, in the left and right halves, respectively, for each row index $r \in [P]$, the entries at positions $\bigl([r,k], [f_{\ell - k}^{-1}(r),\ell]_\Lrm\bigr)$ and $\bigl([r,k], [g_{\ell - k}^{-1}(r),\ell]_\Rrm\bigr)$ are equal to $1$.  
By extending this idea to the right half as well, we now describe the positions of $1$'s in the binary vector $\hU_{kr}$.  
It can be seen that one column of $S$ appears in each column block  among the $L$ column blocks of $\HH_Z^{(k)}$.  
That is, $S$ has one column in each of the $L$ blocks.
\begin{align}
\begin{array}{c|c|c|c||c|c|c|c}
  c_{\Lrm,0} & c_{\Lrm,1} & \cdots & c_{\Lrm,L/2 - 1} \quad & \quad c_{\Rrm,0} & c_{\Rrm,1} & \cdots & c_{\Rrm,L/2 - 1} \label{155227_19May25}
\end{array}
\end{align}
Here, the column indices are defined as:
\begin{align}
c_{\Lrm,\ell} = f_{\ell-k}^{-1}(r), \qquad c_{\Rrm,\ell} = g_{\ell-k}^{-1}(r).
\end{align}
Each $c_{\Lrm,\ell}$ (respectively $c_{\Rrm,\ell}$) corresponds to the index of the column within the $\ell$-th column block of the left (respectively right) half of $\HH_Z^{(k)}$ that contains a $1$ in the $r$-th row.

Next, we analyze the rows of the submatrix $S$. In the $(j,\ell)$-th block of the left half of $\HH_Z$ for $j\in [J]$ and $\ell\in [L/2]$, which corresponds to the permutation $g_{-(\ell-j)}^{-1}$, the entry at position $\bigl(g_{-(\ell-j)}^{-1}(c_{\Lrm,0}), c_{\Lrm,0}\bigr)$ is equal to $1$. Proceeding similarly, the entries in $\HH_X^{(0)}$ and $\HH_X^{(1)}$ that intersect with the support of $\hU_{kr}$ are located at the following positions:
\begin{align}
 \begin{array}{c||c}
\bigl(r_{0,\ell},\; c_{\Lrm,\ell}\bigr)_{\ell=0,1,\ldots,L/2-1} & \bigl(s_{0,\ell},\; c_{\Rrm,\ell}\bigr)_{\ell=0,1,\ldots,L/2-1} \\\hline
\bigl(r_{1,\ell},\; c_{\Lrm,\ell}\bigr)_{\ell=0,1,\ldots,L/2-1} & \bigl(s_{1,\ell},\; c_{\Rrm,\ell}\bigr)_{\ell=0,1,\ldots,L/2-1}
\end{array}
\label{233431_18May25}
\end{align}
where the row indices are defined as
\begin{align}
r_{j,\ell} = g_{-(\ell - j)}^{-1}(c_{\Lrm,\ell})= (g_{-(\ell - j)}^{-1}f_{\ell-k}^{-1})(r), \qquad s_{j,\ell} = f_{-(\ell - j)}^{-1}(c_{\Rrm,\ell})= (f_{-(\ell - j)}^{-1}g_{\ell-k}^{-1})(r).
\end{align}
By the commutativity condition~\eqref{192555_14Jun25}, which is equivalent to Requirement~\ref{yosei:orthogonal}, namely  
$f_{\ell - j} g_{k - \ell} = g_{k - \ell} f_{\ell - j}$ for all $\ell \in [L/2],\ j, k \in [J]$,  
and by applying Lemma~\ref{lem:194543_14Jun25},  
the left and right row indices in~\eqref{233431_18May25} are equal as sets.  
More precisely, we have  
\begin{align}
 \{r_{j,0}, r_{j,1}, \ldots, r_{j,L/2-1}\}
 = \{s_{j,0}, s_{j,1}, \ldots, s_{j,L/2-1}\} \quad \text{for all } j \in [J],
\end{align}
with the correspondence given explicitly by
\begin{align}
r_{j,\ell} = s_{j, j + k - \ell}, \qquad r_{j, j + k - \ell} = s_{j,\ell}.
\end{align}
Furthermore, by Requirement~\ref{yosei:square}, each of these sets consists of $L/2$ distinct row indices.  
Consequently, the submatrix $S$ is an $L \times L$ square matrix in which every row and every column has weight two.  
It remains to show that this matrix forms a cycle of length $2L$.  
Conversely, if Requirement~\ref{yosei:square} is not satisfied, then the number of rows in $S$ becomes smaller than $L$, and a cycle of length less than $2L$ will appear within $S$.  

We now describe a specific closed cycle in $S$, which is part of a block cycle winding around the four quadrants of $\HH_X$.  
This cycle traverses the four regions into which $\HH_X$ is divided by its horizontal and vertical halves, proceeding clockwise through them over $L/2$ steps.
Consider the entry
\[
\bigl( [r_{0,0},\ 0],\ [c_{\Lrm,0},\ 0]_\Lrm \bigr)
\]
in block $(0,0)$, located in the upper-left quadrant. This is the starting point of the Tanner cycle. During the $\ell$-th round ($\ell = 0,1,\ldots,L/2 - 1$), the traversal proceeds as follows:
\begin{align}
\renewcommand{\arraystretch}{1.3}
\begin{array}{lllll}
\text{Start:}          & \bigl( [r_{0,\ell},& 0],\ [c_{\Lrm,\ell},& \ell]_\Lrm &\bigr)\\
\text{Right move:}     & \bigl( [r_{0,\ell} = s_{0,k-\ell},& 0],\ [c_{\Rrm,k-\ell},& k-\ell]_\Rrm &\bigr) \\
\text{Down move:}      & \bigl( [s_{1,k-\ell},& 1],\ [c_{\Rrm,k-\ell},& k-\ell]_\Rrm &\bigr) \\
\text{Left move:}      & \bigl( [r_{1,\ell+1} = s_{1,k},& 1],\ [c_{\Lrm,\ell+1},& \ell+1]_\Lrm &\bigr) \\
\text{Up move:}        & \bigl( [r_{0,\ell+1},& 0],\ [c_{\Lrm,\ell+1},& \ell+1]_\Lrm &\bigr)
\end{array}
\end{align}
By viewing $\HH_X$ as consisting of four quadrants (upper-left, upper-right, lower-right, and lower-left), we see that the block cycle rotates {clockwise}, completing $L/2$ full rounds.
The sets of column indices in the left and right halves of $\HH_X$ are given by
\begin{align}
C_{\Lrm} &= \bigl( [c_{\Lrm,\ell},\, \ell]_\Lrm \bigr)_{\ell = 0, 1, \ldots, L/2 - 1}, \quad C_{\Rrm} = \bigl( [c_{\Rrm,\ell},\, \ell]_\Rrm \bigr)_{\ell = 0, 1, \ldots, L/2 - 1}.
\end{align}
The sets of row indices in the upper and lower halves of $\HH_Z$ are defined as
\begin{align}
R_0 &= \bigl( [r_{0,\ell}\, 0] \bigr)_{\ell = 0, 1, \ldots, L/2 - 1}, \\
R_1 &= \bigl( [r_{1,\ell},\, 1] \bigr)_{\ell = 0, 1, \ldots, L/2 - 1}.
\end{align}
The corresponding submatrix of $\HH_X$ that connects these row and column sets, with entries corresponding to $1$s, can be compactly represented as follows:
\begin{align}
\renewcommand{\arraystretch}{1.4} 
\begin{array}{c||c|c}
         & C_{\Lrm} & C_{\Rrm} \\
\hline\hline
R_0      & I        & M^{(k)} \\
\hline
R_1      & I        & M^{(k+1)}
\end{array}
\label{181207_19May25}
\end{align}
Here, $I$ denotes the identity matrix of size $L \times L$, and $M^{(k)}$ denotes a permutation matrix of size $L/2 \times L/2$ defined by
\begin{align}
M^{(k)}_{ij} =
\begin{cases}
1 & \text{if } i + j \equiv k \pmod{L/2}, \\
0 & \text{otherwise}.
\end{cases}
\end{align}
The Tanner cycle in question is contained in a block cycle, which can be expressed in a simplified form for $J = 2$ as follows:
\begin{align}
\uU(j) = g_0^{-1} \Rightarrow f_{-j}^{-1} \Rightarrow g_{-1}^{-1} \Rightarrow f_{-(j-1)}^{-1} \Rightarrow \cdots \Rightarrow g_{-(L/2 - 1)}^{-1} \Rightarrow f_{-(j - (L/2 - 1))}^{-1} \Rightarrow g_0^{-1}. 
\end{align}
This sequence coincides with  the UTCBC $\uU(k)$ in $\HH_Z$, as defined in \eqref
{055302_4Feb25} of Theorem~\ref{teiri:002234_11Feb25}.
A similar argument applies to the case obtained by interchanging $X$ and $Z$.
\end{proof}
\begin{example}\label{023404_18May25}
Let us recall the example discussed in Example~\ref{ex:021639_17Apr25}. For $k = 0$ and $r = 7$, we consider the submatrix $S$ involved in the equation
\begin{align}
\HH_Z \hU_{kr} = \zeroU,
\end{align}
where $\hU_{kr}^\top$ is the $r$-th row of $\HH_X^{(k)}$. The sets in~\eqref{155227_19May25} are instantiated as follows:
\begin{align}
\begin{array}{c|c|c||c|c|c}
c_{\Lrm,0} & c_{\Lrm,1} & c_{\Lrm,2} & c_{\Rrm,0} & c_{\Rrm,1} & c_{\Rrm,2} \\
\hline
0 & 4 & 1 & 0 & 2 & 0
\end{array}
\end{align}
The positions in~\eqref{233431_18May25} are given explicitly as:
\begin{align}
\begin{array}{c|c|c||c|c|c}
(r_{0,0}, c_{\Lrm,0}) & (r_{0,1}, c_{\Lrm,1}) & (r_{0,2}, c_{\Lrm,2}) & (s_{0,0}, c_{\Rrm,0}) & (s_{0,1}, c_{\Rrm,1}) & (s_{0,2}, c_{\Rrm,2}) \\
\hline
(5,0) & (1,4) & (4,1) & (5,0) & (4,2) & (1,0) \\
(7,0) & (1,4) & (2,1) & (1,0) & (7,2) & (2,0)
\end{array}
\end{align}
The submatrix representation in~\eqref{181207_19May25} is instantiated as:
 \begin{align}
\renewcommand{\arraystretch}{1.4}
\begin{array}{c||c|c|c||c|c|c}
       & [0,0]_\Lrm & [4,1]_\Lrm & [1,2]_\Lrm & [0,0]_\Rrm & [2,1]_\Rrm & [0,2]_\Rrm \\
\hline\hline
{[5,0]} & \tikzmark{A}1 &         &           & \tikzmark{B}1 &         & \\
{[1,0]} &   & \tikzmark{E}1       &           &               &         & \tikzmark{F}1 \\
{[4,0]} &   &         &\tikzmark{I}1         &               & \tikzmark{J}1       & \\
\hline
{[7,1]} & \tikzmark{L}1 &         &           &               & \tikzmark{K}1       & \\
{[1,1]} &   &\tikzmark{D} 1       &           & \tikzmark{C}1             &         & \\
{[2,1]} &   &         & \tikzmark{H}1         &               &         & \tikzmark{G}1 \\
\end{array}
 \end{align}
\begin{tikzpicture}[remember picture, overlay]
  \draw[thick, blue!50, ->] (pic cs:A) -- (pic cs:B);
  \draw[thick, blue!50, ->] (pic cs:B) -- (pic cs:C);
  \draw[thick, blue!50, ->] (pic cs:C) -- (pic cs:D);
  \draw[thick, blue!50, ->] (pic cs:D) -- (pic cs:E);
  \draw[thick, blue!50, ->] (pic cs:E) -- (pic cs:F);
  \draw[thick, blue!50, ->] (pic cs:F) -- (pic cs:G);
  \draw[thick, blue!50, ->] (pic cs:G) -- (pic cs:H);
  \draw[thick, blue!50, ->] (pic cs:H) -- (pic cs:I);
  \draw[thick, blue!50, ->] (pic cs:I) -- (pic cs:J);
  \draw[thick, blue!50, ->] (pic cs:J) -- (pic cs:K);
  \draw[thick, blue!50, ->] (pic cs:K) -- (pic cs:L);
  \draw[thick, blue!50, ->] (pic cs:L) -- (pic cs:A);
\end{tikzpicture}
\end{example}

\subsection{Impact of UTCBCs on Minimum Distance}\label{231534_3Jun25}
Following common conventions in the literature on LDPC codes, we identify the Tanner graph $\Gt$ corresponding to the matrix $H_\Gamma$ with $H_\Gamma$ itself. Accordingly, columns correspond to variable nodes, rows to check nodes, and submatrices to subgraphs, allowing us to treat these elements interchangeably. For example, the Tanner graph associated with a submatrix $S$ of $H_\Gamma$ is also denoted by $S$. Similarly, a subgraph of $H_\Gamma$ is denoted by $\Gt'$, and the corresponding submatrix is also referred to as $\Gt'$.

Let us consider a Tanner cycle $\Ct$ of length $2L$ contained in $H_\Gamma$.
Let $(H_\Gamma)_\Ct$ denote the submatrix of $H_\Gamma$ obtained by zeroing out all columns not involved in the cycle $\Ct$.  
Similarly, let $\xiU_\Ct$ denote the vector obtained from $\xiU \in \Fb_q^N$ by zeroing out all coordinates not corresponding to columns in $\Ct$.
We define the subcode associated with the cycle $\Ct$ as
\begin{align}
 N(\Ct;H_\Gamma) \defeq \left\{ \xiU_\Ct \in \Fb_q^N\mid (H_\Gamma)_\Ct \, \xiU_\Ct = \zeroU,\ \xiU \in \Fb_q^N \right\}.
 \label{042010_2Jun25}
\end{align}
The set $N(\Ct;H_\Gamma)$ is a subcode of $C_\Gamma$, supported only on the positions involved in the cycle $\Ct$.  
It captures the contribution of the cycle $\Ct$ to the code $C_\Gamma$ by isolating codewords whose nonzero positions lie solely within $\Ct$.
Theorem~\ref{192838_2Jun25} shows that $N(\Ct;H_\Gamma)$ is contained in $C_\Delta^\perp \subset C_\Gamma$, and therefore such codewords do not affect the minimum distance $d_X$ (see~\eqref{def:d_X}).

\begin{teiri}\label{192838_2Jun25}
Let $\Ct$ be a Tanner cycle in the UTCBC $\uU(k)$ in $H_\Gamma$ for $k = 0,1$, which is orthogonal to the row vector $\hU_{kr}^\top$, in the sense that
\begin{align}
 (H_\Gamma)_\Ct \, \hU_{kr} = \zeroU. \label{144858_5Jun25}
\end{align}
Here, $\hU_{kr}^\top$ denotes the $r$-th row of $H_\Delta^{(k)}$.
Then, the associated subcode $N(\Ct;H_\Gamma)$ is a one-dimensional subspace over $\Fb_q$ given by
\begin{align}
N(\Ct;H_\Gamma) = \{ \xi \hU_{kr} \mid \xi \in \Fb_q \},\label{145346_5Jun25}
\end{align}
where $\xi$ is a scalar in $\Fb_q$.
Moreover, $N(\Ct;H_\Gamma)$ is a subspace of $C_\Delta^\perp \subset C_\Gamma$.
A similar statement holds when $H_\Delta$ is considered in place of $H_\Gamma$, i.e., when the roles of $X$ and $Z$ are interchanged.
\end{teiri}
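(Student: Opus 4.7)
\textbf{Proof plan for Theorem~\ref{192838_2Jun25}.} The plan is to split the claim into two independent pieces: first, to identify $N(\Ct; H_\Gamma)$ as the one-dimensional $\Fb_q$-space spanned by $\hU_{kr}$, and second, to place this space inside $C_\Delta^\perp \subseteq C_\Gamma$. Throughout I would exploit the fact that $H_\Gamma$ and $H_\Delta$ share the same support as $\HH_X$ and $\HH_Z$, so structural statements about the binary matrices lift to the non-binary ones without change.

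I would begin by invoking Theorem~\ref{053905_28Apr25}, applied with the roles of $X$ and $Z$ interchanged: for $k \in \{0,1\}$, the support of the $r$-th row $\hU_{kr}^\top$ of $H_\Delta^{(k)}$ coincides with the $L$ variable-node columns of a Tanner cycle of length $2L$ contained in the UTCBC $\uU(k)$ of $H_\Gamma$, and the induced submatrix is an $L \times L$ cycle matrix in which every row and every column has weight two. Since the column weight of $H_\Gamma$ is $J = 2$, both check neighbors of each variable node in $\Ct$ lie inside $\Ct$, so restricting $H_\Gamma$ to the columns of $\Ct$ produces exactly this $L \times L$ cycle submatrix (plus all-zero rows), and $(H_\Gamma)_\Ct \xiU_\Ct = \zeroU$ reduces to an $L \times L$ system over $\Fb_q$.

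Next I would show that this cycle system has a null space of dimension exactly one. Each parity-check row of the cycle matrix has the form $a_i \xi_i + b_i \xi_{i+1} = 0$ with indices read around the $2L$-cycle; recursive substitution yields $\xi_{i+1} = -(a_i/b_i)\xi_i$, so every solution is a scalar multiple of the first coordinate, giving nullity at most one. By hypothesis~\eqref{144858_5Jun25}, $\hU_{kr}$ is a nonzero element of this null space, so the nullity is exactly one, and \eqref{145346_5Jun25} follows.

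The second piece is short. For $k \in \{0,1\} = [J]$, the subarray $H_\Delta^{(k)}$ is a genuine block-row of $H_\Delta$, so $\hU_{kr}^\top$ is literally a row of $H_\Delta$; hence $\hU_{kr}$ belongs to the row space of $H_\Delta$, which by definition is $C_\Delta^\perp$. The orthogonality identity \eqref{212247_11Feb25}, $H_\Gamma H_\Delta^\top = O$, then yields $C_\Delta^\perp \subseteq C_\Gamma$, closing the chain $N(\Ct;H_\Gamma) \subseteq C_\Delta^\perp \subseteq C_\Gamma$. The symmetric case obtained by interchanging $\Gamma$ and $\Delta$ follows by the same argument verbatim.

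The main delicate point I anticipate is the nullity computation: one must check that the coefficients $b_i$ along the cycle are all nonzero (which follows because $H_\Gamma$ has the same support as $\HH_X$, so every entry indexed by a cycle edge is nonzero) and that the consistency condition upon closing the cycle is compatible with a nontrivial solution. This compatibility is not gratuitous; it is supplied precisely by the witness $\hU_{kr}$ produced by Theorem~\ref{053905_28Apr25}, and it is the structural reason why UTCBCs for $k \in \{0,1\}$ contribute only codewords of $C_\Delta^\perp$ rather than minimum-distance-reducing codewords of $C_\Gamma \setminus C_\Delta^\perp$.
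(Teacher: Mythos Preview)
Your proposal is correct and follows essentially the same approach as the paper. Both arguments invoke Theorem~\ref{053905_28Apr25} (with the roles of $X$ and $Z$ interchanged) to identify the $L\times L$ cycle submatrix, establish that its nullity is exactly one---you via recursive substitution along the cycle, the paper via the equivalent observation that the cycle matrix can be arranged in upper-triangular form except for one row---and then conclude $N(\Ct;H_\Gamma)\subset C_\Delta^\perp$ from the fact that $\hU_{kr}^\top$ is a genuine row of $H_\Delta$ when $k\in\{0,1\}=[J]$.
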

\begin{proof}
We first observe that $\supp(\hU_{kr}) = \supp(\Ct)$.  
By Theorem~\ref{053905_28Apr25}, the cycle $\Ct$ is square-shaped and can be arranged into an upper-triangular form except for one row. Therefore, its rank is at least $L - 1$.  
Since the column weight is $J = 2$, the rank of the submatrix $(H_\Gamma)_\Ct$ coincides with that of the cycle $\Ct$.  
From~\eqref{144858_5Jun25}, a nonzero solution $\hU_{kr}$ exists in the kernel of $(H_\Gamma)_\Ct$, implying that its rank is at most $L - 1$.
Hence, the rank of $(H_\Gamma)_\Ct$ is exactly $L - 1$, and this proves~\eqref{145346_5Jun25}.
Since $\hU_{kr}^\top$ is a row of the parity-check matrix $H_\Delta$, it follows that $N(\Ct;H_\Gamma)$ is a subspace of $C_\Delta^\perp$.
\end{proof}

\begin{example}
 As an example, consider the cycle $\Ct$ in $H_\Delta$ highlighted in blue in Example~\ref{035757_2Jun25}.
 Consequently, $N(\Ct;H_\Gamma)$ is a subspace of $C_\Delta^\perp$.
 Here, $\hU_{kr}^\top$ is the $r$-th row of $H_\Gamma^{(k)}$ for $k=0, r=7$, which is also highlighted in blue:
 \begin{align}
 &\hU_{kr}^\top=\left(
 \alpha^{200}\hspace{1mm}0 \hspace{1mm}0 \hspace{1mm}0 \hspace{1mm}0 \hspace{1mm}0 \hspace{1mm}0 \hspace{1mm}0 \hspace{1mm}0 \hspace{1mm}0 \hspace{1mm}0 \hspace{1mm}0 \hspace{1mm}\alpha^{200}\hspace{1mm}0 \hspace{1mm}0 \hspace{1mm}0 \hspace{1mm}0 \hspace{1mm}\alpha^{170}\hspace{1mm}0 \hspace{1mm}0 \hspace{1mm}0 \hspace{1mm}0 \hspace{1mm}0 \hspace{1mm}0 \hspace{1mm}\alpha^{238}\hspace{1mm}0 \hspace{1mm}0 \hspace{1mm}0 \hspace{1mm}0 \hspace{1mm}0 \hspace{1mm}0 \hspace{1mm}0 \hspace{1mm}0 \hspace{1mm}0 \hspace{1mm}\alpha^{167}\hspace{1mm}0 \hspace{1mm}0 \hspace{1mm}0 \hspace{1mm}0 \hspace{1mm}0 \hspace{1mm}\alpha^{95}\hspace{1mm}0 \hspace{1mm}0 \hspace{1mm}0 \hspace{1mm}0 \hspace{1mm}0 \hspace{1mm}0 \hspace{1mm}0 
 \right).

 \end{align}
We extract only the nonzero components involved in the equation $H_\Delta \hU_{kr}^\top = \zeroU$.
The relevant portion can be represented as the following matrix-vector product:
\begin{align}
\left(
\begin{array}{c@{\hspace{1mm}}c@{\hspace{1mm}}c@{\hspace{1mm}}c@{\hspace{1mm}}c@{\hspace{1mm}}c@{\hspace{1mm}}}
&{\setlength{\fboxsep}{1pt}\colorbox{blue!70!black}{\textcolor{white}{$\mathtt{7C}$}}}&&&&{\setlength{\fboxsep}{1pt}\colorbox{blue!70!black}{\textcolor{white}{$\mathtt{E5}$}}}\\
&&{\setlength{\fboxsep}{1pt}\colorbox{blue!70!black}{\textcolor{white}{$\mathtt{12}$}}}&&{\setlength{\fboxsep}{1pt}\colorbox{blue!70!black}{\textcolor{white}{$\mathtt{15}$}}}\\
{\setlength{\fboxsep}{1pt}\colorbox{blue!70!black}{\textcolor{white}{$\mathtt{3E}$}}}&&&{\setlength{\fboxsep}{1pt}\colorbox{blue!70!black}{\textcolor{white}{$\mathtt{18}$}}}\\
&{\setlength{\fboxsep}{1pt}\colorbox{blue!70!black}{\textcolor{white}{$\mathtt{1D}$}}}&&{\setlength{\fboxsep}{1pt}\colorbox{blue!70!black}{\textcolor{white}{$\mathtt{F6}$}}}\\
&&{\setlength{\fboxsep}{1pt}\colorbox{blue!70!black}{\textcolor{white}{$\mathtt{C2}$}}}&&&{\setlength{\fboxsep}{1pt}\colorbox{blue!70!black}{\textcolor{white}{$\mathtt{0E}$}}}\\
{\setlength{\fboxsep}{1pt}\colorbox{blue!70!black}{\textcolor{white}{$\mathtt{6F}$}}}&&&&{\setlength{\fboxsep}{1pt}\colorbox{blue!70!black}{\textcolor{white}{$\mathtt{90}$}}}
\end{array}\right)
\left(\begin{array}{c}
 {\setlength{\fboxsep}{1pt}\colorbox{blue!70!black}{\textcolor{white}{$\mathtt{C8}$}}}\\
 {\setlength{\fboxsep}{1pt}\colorbox{blue!70!black}{\textcolor{white}{$\mathtt{C8}$}}}\\
 {\setlength{\fboxsep}{1pt}\colorbox{blue!70!black}{\textcolor{white}{$\mathtt{AA}$}}}\\
 {\setlength{\fboxsep}{1pt}\colorbox{blue!70!black}{\textcolor{white}{$\mathtt{EE}$}}}\\
 {\setlength{\fboxsep}{1pt}\colorbox{blue!70!black}{\textcolor{white}{$\mathtt{A7}$}}}\\
 {\setlength{\fboxsep}{1pt}\colorbox{blue!70!black}{\textcolor{white}{$\mathtt{5F}$}}}
\end{array}
\right)
=
\zeroU
\end{align}

\end{example}

\begin{giron}\label{234642_5Jun25}
 In contrast to the cases $k = 0, 1$, the UTCBCs $\uU(k)$ for $k = 2$ can negatively affect the minimum distance $d_Z$.  
 As an example, consider the cycle $\Ct$ in $H_\Gamma$ highlighted in red in Example~\ref{035757_2Jun25}.  The cycle $\Ct$ corresponds to a square matrix of size $L$:
 \begin{align}
\left(
\begin{array}{c@{\hspace{1mm}}c@{\hspace{1mm}}c@{\hspace{1mm}}c@{\hspace{1mm}}c@{\hspace{1mm}}c@{\hspace{1mm}}}
{\setlength{\fboxsep}{1pt}\colorbox{red!70!black}{\textcolor{white}{$\mathtt{3F}$}}}&&&&&{\setlength{\fboxsep}{1pt}\colorbox{red!70!black}{\textcolor{white}{$\mathtt{F1}$}}}\\
&{\setlength{\fboxsep}{1pt}\colorbox{red!70!black}{\textcolor{white}{$\mathtt{4C}$}}}&&&{\setlength{\fboxsep}{1pt}\colorbox{red!70!black}{\textcolor{white}{$\mathtt{78}$}}}\\
&&{\setlength{\fboxsep}{1pt}\colorbox{red!70!black}{\textcolor{white}{$\mathtt{E1}$}}}&{\setlength{\fboxsep}{1pt}\colorbox{red!70!black}{\textcolor{white}{$\mathtt{89}$}}}&\\
{\setlength{\fboxsep}{1pt}\colorbox{red!70!black}{\textcolor{white}{$\mathtt{8C}$}}}&&&{\setlength{\fboxsep}{1pt}\colorbox{red!70!black}{\textcolor{white}{$\mathtt{16}$}}}\\
&&{\setlength{\fboxsep}{1pt}\colorbox{red!70!black}{\textcolor{white}{$\mathtt{A4}$}}}&&{\setlength{\fboxsep}{1pt}\colorbox{red!70!black}{\textcolor{white}{$\mathtt{85}$}}}&\\
&{\setlength{\fboxsep}{1pt}\colorbox{red!70!black}{\textcolor{white}{$\mathtt{F3}$}}}&&&&{\setlength{\fboxsep}{1pt}\colorbox{red!70!black}{\textcolor{white}{$\mathtt{D4}$}}}
\end{array}\right)
\end{align}

In this instance, the matrix $\Ct$ happens to be of full rank, and hence the associated subcode $N(\Ct;H_\Gamma)$ is the trivial (zero-dimensional) subspace consisting only of the all-zero codeword.  That is, $N(\Ct;H_\Gamma) = \{\zeroU\}$.  
 However, if the arrangement of nonzero symbols in $H_\Gamma$ were unfortunate and $\Ct$ failed to be full rank, then $N(\Ct;H_\Gamma)$ would become a one-dimensional subspace, containing a nonzero codeword.  
 Such a codeword corresponds to an uncorrectable logical error and can degrade the minimum distance $d_Z$ (see~\eqref{def:d_Z}).
\end{giron}

Therefore, eliminating or carefully controlling UTCBCs $\uU(2)$ is crucial for ensuring that the minimum distance  remains large.

\section{Proposed Code Construction }\label{sec:150611_18May25}

This section proposes an improved construction of the matrices $\HH_X$, $\HH_Z$, $H_\Gamma, H_\Delta, H_X$ and $H_Z$, based on the observations made in the previous section.

As shown in Theorem~\ref{teiri:002234_11Feb25}, the conventional codes presented in \cite{komoto2024quantumerrorcorrectionnear} inevitably contain length-$2L$ cycles $\Ct$ that belong to the UTCBCs $\uU(k)$ for 
$k\in \{0,1,2\}$.  Among these, cycles in $\uU(0)$ and $\uU(1)$ are harmless to the minimum distance of the code, as established in Theorem~\ref{192838_2Jun25}.  
In contrast, as explained in Discussion~\ref{234642_5Jun25},  
cycles in UTCBC $\uU(2)$ may introduce low-weight codewords, depending on the specific values of the nonzero entries along the cycle, and therefore can degrade the minimum distance of $C_\Gamma$.  
The same argument applies to $H_\Delta$ by symmetry, by interchanging the roles of $H_\Gamma$ and $H_\Delta$.  

In a related work \cite{kasai2025quantumerrorcorrectiongirth16}, the error floor was successfully reduced for a code with $L = 8$ and rate $R = 1/2$ by eliminating all length-$2L$ cycles.  
However, for the target parameters considered in this paper, namely $J=2, L=6$ and $R = 1/3$, length-2L cycles in UTCBCs are unavoidable due to Theorem~\ref{teiri:002234_11Feb25}.

\subsection{Guiding Principles for the Construction}\label{sec:000508_6Jun25}
The objective of this section is to refine the conventional construction method of~\cite{komoto2024quantumerrorcorrectionnear} (see Section~\ref{192200_2Jun25}) so that the resulting codes satisfy the following three properties:
\begin{enumerate}
 \item \label{000948_6Jun25} The matrices $\HH_X$ and $\HH_Z$ contain no cycles of length less than $2L$.
 \item \label{000952_6Jun25} All length-$2L$ cycles in $\HH_X$ and $\HH_Z$ belong to one of the UTCBCs $\uU(k)$ with $k \in \{0,1,2\}$.  
 \item \label{000957_6Jun25} For all length-$2L$ cycles $\Ct$ belonging to the UTCBC $\uU(2)$ in $H_\Gamma$ and $H_\Delta$,  
 the null spaces $N(\Ct;H_\Gamma)$ and $N(\Ct;H_\Delta)$ are equal to $\{\zeroU\}$, respectively.  
 In other words, each such cycle $\Ct$ must have full rank $L$ (see Discussion~\ref{234642_5Jun25}).  
\end{enumerate}
Although eliminating all length-12 cycles requires $L \ge 8$ (see Theorem~\ref{teiri:002234_11Feb25}), our goal is to achieve robust performance even when $L = 6$ by carefully managing the structure and impact of UTCBCs.

In this section, we explicitly construct permutations $\fU$ and $\gU$, along with the associated matrices $\HH_X$, $\HH_Z$, $H_\Gamma$, and $H_\Delta$.
Our goal is not merely to construct a single pair $(\HH_X, \HH_Z)$ that satisfies the required conditions, but rather to randomly sample such pairs from the space of all valid constructions.  
This randomized approach is essential for harnessing the performance advantages of LDPC codes viewed as random codes \cite{Gallager1962}.

In Section~\ref{sec:property_komoto_kasai}, we investigated the structural properties of the conventional construction over a broad class of permutations, assuming that $\fU$ and $\gU$ are drawn from general subsets of $\Fc_P$.  
Although it is difficult to randomly generate permutations from $\Fc_P$ that satisfy the necessary conditions, the construction becomes significantly more efficient by leveraging the algebraic structure of APMs.  
While exploring the full class $\Fc_P$ may offer greater randomness and potentially better-performing codes, we restrict ourselves to the structured subset $\mathcal{A}_P \subset \Fc_P$ to ensure algorithmic feasibility.
Therefore, in the remainder of this paper, we focus on permutations selected from $\mathcal{A}_P$.

\subsection{Construction of $\HH_X$ and $\HH_Z$}

\begin{table}[t]
\caption{Constructed $f_i(X)$ and $g_i(X)$ polynomials for each $P$}
\label{tab:015542_17Apr25}
\centering
\begin{align}
 \begin{array}{@{}c|ccc|ccc@{}}
\toprule
P & f_0(X) & f_1(X) & f_2(X) & g_0(X) & g_1(X) & g_2(X) \\
\midrule
384 =2^7\cdot 3  & 221X+358 & 101X+314 & 217X+92  & 199X+303 & 169X+324 & 343X+375 \\
768 =2^8\cdot 3 & 235X+723 & 127X+345 & 277X+6   & 565X+374 & 725X+166 & 709X+366 \\
1536 =2^9\cdot 3 & 1003X+723 & 91X+219 & 1045X+6 & 1333X+1142 & 65X+1248 & 473X+1012 \\
3072 =2^{10}\cdot 3 & 2155X+1773 & 1165X+1110 & 1237X+2010 & 2957X+1238 & 1885X+638 & 2425X+2908 \\
6144 =2^{11}\cdot 3 & 1099X+1665 & 5875X+69 & 1153X+5952 & 2957X+974 & 2173X+4838 & 1973X+2386 \\
6500 =2^2 \cdot 5^3 \cdot 13 ;& 1X+ 2998&  1501X+ 3518&  5501X+ 2346 &3251X+ 4459&  3251X+ 3900 &    1X+  988 \\
12288 =2^{12}\cdot 3 & 3433X+3987 & 10801X+9018 & 10177X+6408 & 6065X+5770 & 3169X+2932 & 10193X+8070 \\
\bottomrule
\end{array}
\end{align}
\end{table}

In this section, we aim to construct a pair of arrays of APMs
\[
\fU = (f_0, f_1, \ldots, f_{L/2 - 1}), \quad \gU = (g_0, g_1, \ldots, g_{L/2 - 1}),
\]
that satisfy the following criteria:
\begin{itemize}
  \item[(a)] Each pair $(f_i, g_i)$ must satisfy the commutativity condition specified in Requirement~\ref{yosei:orthogonal}.
  \item[(b)] To prevent the formation of $2 \times 3$ totally closed block cycles (see Theorem~\ref{144806_5Apr25}), complete commutativity among the $f_i$'s or among the $g_i$'s must be avoided.  
  Specifically, for some $i \ne j$, the permutations $f_i$ and $f_j$ should not commute, and likewise for $g_i$ and $g_j$.
  \item[(c)] Other than the UTCBCs $\uU(k)$ for $k\in \{ 0,1,2\}$, no additional closed block cycles  of length at most $2L$ should appear in either $\HH_X$ or $\HH_Z$.
\end{itemize}
\textit{Note.} Although condition~(b) may appear to be implied by condition~(c), it plays a crucial role in the construction process.  
In practice, if condition~(b) is not explicitly enforced, it becomes extremely difficult to find sequences that satisfy condition~(c) via random sampling.  
We do not explicitly include Requirement~\ref{yosei:square} in the above criteria, since it is automatically satisfied once condition~(c) holds.

To construct such APMs $\fU$ and $\gU$, we employ a sequential randomized algorithm that incrementally builds each sequence by adding one APM at a time while checking the required constraints.  
Starting from an empty list, candidates for $f_i$ and $g_i$ are iteratively sampled from $\mathcal{A}_P$ and evaluated to ensure that the resulting partial sequences satisfy conditions~(a), (b), and (c).  
If a candidate satisfies all constraints, it is accepted and appended to the sequence; otherwise, a new candidate is drawn.  
This process is repeated until valid sequences $\fU$ and $\gU$ of length $L/2$ are obtained.  
The full procedure is detailed in Algorithm~\ref{alg:fg_construction}.

\renewcommand\algorithmicindent{.5em}
\begin{algorithm}[thbp]
\caption{Construction Algorithm of $\fU$ and $\gU$}
\label{alg:fg_construction}
\begin{algorithmic}[1]
\State Initialize empty list $\mathcal{S} \gets [\ ]$
\For{$i = 0$ to $L/2 - 1$}
    \Repeat
        \State Randomly generate a candidate $f_i$ from $\mathcal{A}_P$
        \State Temporarily set $\mathcal{S}' \gets \mathcal{S} \cup \{f_i\}$
        \If{$\mathcal{S}'$ does not violate the criteria (a),(b), or (c)}
            \State Accept $f_i$: $\mathcal{S} \gets \mathcal{S}'$
            \State \textbf{break}
        \EndIf
 \Until{a valid $f_i$ is found}
    \Repeat
        \State Randomly generate a candidate $g_i$ from $\mathcal{A}_P$
        \State Temporarily set $\mathcal{S}' \gets \mathcal{S} \cup \{g_i\}$
        \If{$\mathcal{S}'$ does not violate the criteria (a),(b), or (c)}
            \State Accept $g_i$: $\mathcal{S} \gets \mathcal{S}'$
            \State \textbf{break}
        \EndIf
 \Until{a valid $g_i$ is found}
\EndFor
\State \textbf{return} $\mathcal{S}$
\end{algorithmic}
\end{algorithm}

Using the APMs $\fU$ and $\gU$ determined by Algorithm~\ref{alg:fg_construction}, we compute the binary matrices $\HH_X$ and $\HH_Z$.
\subsection{Determination of $H_\Gamma$ and $H_\Delta$}\label{sec:se235947_5Jun25}
In the previous section, we constructed matrices $\HH_X$ and $\HH_Z$ that satisfy  
Properties~\ref{000948_6Jun25} and~\ref{000952_6Jun25} listed in Section~\ref{sec:000508_6Jun25}. 
In this section, we aim to construct orthogonal matrices $H_\Gamma$ and $H_\Delta$ that share the same support as $\HH_X$ and $\HH_Z$, respectively,  
and satisfy Property~\ref{000957_6Jun25} stated in Section~\ref{sec:000508_6Jun25}.
Formally, in addition to the orthogonality conditions~\eqref{015905_13Feb25} and~\eqref{095733_27May25},  
we seek vector representations $\gammaU$ and $\deltaU$ of $H_\Gamma$ and $H_\Delta$ over $\mathbb{F}_q$ that satisfy the following constraints:
\begin{align}
  A_{2} \log \gammaU &:= \left[-\HH_Z^{(2,L)} \mid \HH_Z^{(2,R)} \mid \HH_Z^{(2,L)} \mid -\HH_Z^{(2,R)}\right] \log \gammaU = \cU \mod (q - 1), \label{160947_1Jun25} \\
  B_{2} \log \deltaU &:= \left[-\HH_X^{(2,L)} \mid \HH_X^{(2,R)} \mid \HH_X^{(2,L)} \mid -\HH_X^{(2,R)}\right] \log \deltaU = \dU \mod (q - 1), \label{160950_1Jun25}
\end{align}
where $c_i \ne 0$ and $d_i \ne 0$ for all $i \in [P]$.
These constraints are derived in Appendix~\ref{161030_1Jun25}.  
Each system consists of $P$ equations in $2N$ variables, specifically the entries $\gamma_i$ and $\delta_i$ for $i = 0, 1, \ldots, 2N - 1$.

By performing Gaussian elimination, we can obtain a solution $\log \gammaU$ that satisfies $A_{01} \log \gammaU = \zeroU$.  
In this process, we decompose the vector $\log \gammaU$ satisfying~\eqref{015905_13Feb25}, i.e., $A_{01} \log \gammaU = \zeroU$, into a bound part and a free part, denoted as $\log \gammaU = (\log \gammaU_{\brm} \mid \log \gammaU_{\frm})$.  
Let $G_{01}$ be the linear transformation from the free part to the bound part:  
\begin{align}
 \log \gammaU_{\brm} = G_{01} \log \gammaU_{\frm},\label{005601_6Jun25}
\end{align}
where $G_{01}$ is a matrix over $\mathbb{Z}_{q-1}$, and $B$ and $F$ denote the lengths of the bound and free parts, respectively.

We employ the following two randomized algorithms-Algorithm~\ref{alg:HGamma} and Algorithm~\ref{alg:HDelta}-to construct the parity-check matrices $H_\Gamma$ and $H_\Delta$, respectively, with the desired properties.

Algorithm~\ref{alg:HGamma} constructs a matrix $H_\Gamma$ such that $A_{01} \log \gammaU=\zeroU$  and $A_{2} \log \gammaU$ contains no zero entries.  
The algorithm begins by randomly generating the free part $\log \gammaU_{\frm}$ and computing the corresponding bound part $\log \gammaU_{\brm}$ using \eqref{005601_6Jun25}.
Next, we compute a $\Zb_{q-1}$-valued vector $\cU^{(0)} := A_2 \log \gammaU$ of length $P$.  
If all entries of $\cU^{(0)}$ are nonzero, then the corresponding vector $\gammaU$ is accepted.  
The vector $\cU^{(0)}$ has $P$ entries, each of which becomes zero with probability $1/(q - 1)$.  
In this work, we set $q = 2^e$ with $e = 8$, so while the probability of a zero entry is small, such entries do occasionally occur.
If any entry of $\cU^{(0)}$ is zero, the algorithm perturbs a single free variable involved in the zero entry and updates the bound part accordingly.  
This process is repeated until all entries of $A_{2} \log \gammaU$ are nonzero.

Given $\gammaU$, or equivalently $H_\Gamma$, Algorithm~\ref{alg:HDelta} constructs a matrix $H_\Delta$ such that $H_\Gamma H_\Delta^\top = \zeroU$ and $B_{2} \log \deltaU$ contains no zero entries.  
As a first step, the algorithm solves the linear equation $H_\Gamma H_\Delta^\top = \zeroU$ and randomly generates one valid solution for $H_\Delta$.  
Next, we compute $\dU^{(0)} := B_2 \log \deltaU$.  
If any zero entry in $\dU^{(0)}$ is detected, the algorithm perturbs a free variable in $\log \gammaU_\frm$, recomputes the bound part $\log \gammaU_\brm$, and updates $H_\Delta$ accordingly.  
This process continues until $B_{2} \log \deltaU$ is fully nonzero.
\begin{algorithm}[t]
\caption{Construction of $H_\Gamma$ such that $A_{2} \log \gammaU=\zeroU$ and $A_{2} \log \gammaU$ is nonzero in all entries}
\label{alg:HGamma}
\begin{algorithmic}[1]
\State Randomly generate $\log \gammaU_{\frm}\in \Zb_{q-1}^{F}$
\State Compute $\log \gammaU_{\brm} \gets G_{01} \log \gammaU_{\frm}$
\State Compute $\cU^{(0)} \gets A_{2} \log \gammaU$
\If{all entries of $\cU^{(0)}$ are nonzero}
  \State \Return $H_\Gamma$ and terminate
\EndIf
\State $i \gets 0$
\While{true}
  \State Randomly select and perturb a free variable $\gamma_j$ in $\gammaU_{\frm}$
  \State Recompute $\log \gammaU_{\brm} \gets G_{01} \log \gammaU_{\frm}$
  \State Compute $\cU^{(i+1)} \gets A_{2} \log \gammaU$
  \If{all entries of $\cU^{(i+1)}$ are nonzero}
    \State \Return $H_\Gamma$ 
\ElsIf{the number of zeros in $\cU^{(i+1)}$ is greater than or equal to that in $\cU^{(i)}$}
    \State Undo the change to $\gamma_j$ 
  \EndIf
  \State $i \gets i + 1$
\EndWhile
\end{algorithmic}
\end{algorithm}

\begin{algorithm}[t]
\caption{Construction of $H_\Delta$ such that $H_\Gamma H_\Delta^\top = O$ and $B_{2} \log \deltaU$ is nonzero in all entries}
\label{alg:HDelta}
\begin{algorithmic}[1]
\State Solve $H_\Gamma H_\Delta^\top = O$ and randomly generate one solution $H_\Delta$
\State Compute $\dU^{(0)} \gets B_{2} \log \deltaU$
\If{all entries of $\dU^{(0)}$ are nonzero}
  \State \Return $H_\Delta$ and terminate
\EndIf
\State $i \gets 0$
\While{true}
  \State Randomly perturb one free variable $\gamma_j$ in $\gammaU_{\frm}$
  \State Recompute $\log \gammaU_{\brm} \gets G_{01} \log \gammaU_{\frm}$
  \State Solve $H_\Gamma H_\Delta^\top = O$ and randomly generate one solution $H_\Delta$
  \State Compute $\cU \gets A_{2} \log \gammaU$
  \State Compute $\dU^{(i+1)} \gets B_{2} \log \deltaU$
  \If{$\cU$ contains any zero entries \textbf{or} the number of zeros in $\dU^{(i+1)}$ is greater than or equal to those in $\dU^{(i)}$}
    \State Undo the change to $\gamma_j$ 
  \ElsIf{all entries of $\dU^{(i+1)}$ are nonzero}
    \State \Return $H_\Delta$ 
  \EndIf
  \State $i \gets i + 1$
\EndWhile
\end{algorithmic}
\end{algorithm}

This paragraph discusses a trade-off in the proposed code construction between the probability of satisfying algebraic constraints and the decoding complexity.
Assuming that the $\Fb_q$-valued labels assigned to each cycle are independently and uniformly drawn at random, the probability that a single cycle in the UTCBC $\uU(2)$ is full rank is $1 - 1/q$, which increases with $q$.  
Since there are $P$ such cycles, and assuming their labels are independently assigned, the probability that all of them are full rank is $(1 - 1/q)^P$.  
Thus, increasing $q$ improves the chance that all cycles are full rank, whereas increasing $P$ reduces it.
On the other hand, the decoding complexity of the joint BP algorithm is proportional to $q^2$.  
This is specifically required in the message updates in equations~\eqref{013318_27May25} and~\eqref{013321_27May25}.  
As $q = 2^e$, the complexity grows rapidly with increasing $e$.  
This creates a practical constraint: while a larger $q$ improves the algebraic robustness of the code, it also significantly increases decoding cost.  
Therefore, to keep the decoder complexity manageable, $q$ must be kept reasonably small, which in turn limits how large $P$ can be chosen in the construction.

\subsection{Construction of $H_X$ and $H_Z$ from $\HH_\Gamma$ and $\HH_\Delta$}
Finally, the binary parity-check matrices $H_X$ and $H_Z$ are constructed from the non-binary matrices $H_\Gamma$ and $H_\Delta$ in the same manner as described in Section~\ref{092546_7Jun25}. Specifically, each entry $\gamma_{ij} \in \Fb_q$ in $H_\Gamma$ is mapped to a binary $e \times e$ matrix $A(\gamma_{ij})$ using the companion matrix representation, and similarly, each entry $\delta_{ij} \in \Fb_q$ in $\HH_\Delta$ is mapped to $A(\delta_{ij})^\top$. This yields the binary matrices $H_X$ and $H_Z$ of size $m \times n$, where $m = eM$ and $n = eN$.

\section{Joint Belief Propagation and Post-Processing Decoding}
\label{sec:171047_25May25}
This section presents a decoding method designed to fully exploit the performance of the proposed codes described in the previous section.  
The primary goal is to concisely define a post-processing algorithm that enhances the conventional decoding method.

Section~\ref{013101_9Jun25} provides a general formulation of syndrome decoding using the binary vector representation of Pauli errors.  
Section~\ref{150214_6Jun25} defines the depolarizing channel, which serves as the physical noise model assumed in our numerical experiments.  
Section~\ref{153729_6Jun25} explains the degeneracy of quantum noise vectors and the conditions under which syndrome decoding can successfully recover the codeword.  
Section~\ref{subsec:205908_25May25} describes the joint) decoding algorithm over binary vectors, which serves as the basis of our approach.  
Section~\ref{sec:143825_6Jun25} reformulates this decoding procedure over a finite field, enabling a more compact representation and facilitating the post-processing method.  
Section~\ref{150214_6Jun25} then analyzes the behavior of the conventional decoder in the error floor regime.  
Section~\ref{150240_6Jun25} presents an algorithm for estimating a set of cycles responsible for decoder stagnation.  
Finally, Section~\ref{150247_6Jun25} describes the proposed post-processing algorithm that refines the decoder output by solving a small linear system over the estimated support.
\subsection{Pauli Channel and Binary Vector Representation of Errors}\label{013101_9Jun25}
The Pauli group $\mathcal{P}_n$ is non-commutative; however, by ignoring the global phase factor $\alpha$ of Pauli operators, we obtain the quotient group $\mathcal{P}_n /\{\pm I, \pm i I\}$, which is the subgroup $\{\pm I, \pm i I\} \subset \mathcal{P}_n$ modulo.  
Here, $I$ represents the identity operator on the space $\mathbb{C}^{2^n}$, and $\alpha \in \{\pm 1, \pm i\}$ denotes a global phase factor that does not affect the measurement outcome of quantum states.  
This quotient group is isomorphic to the commutative group $\mathbb{F}_2^{2n}$, and the isomorphism for $E \in \mathcal{P}_n / \{\pm I, \pm i I\}$ is given as follows:  
\begin{align}
 &E=\alpha \bigotimes_{i=0}^{n-1} X^{x_i} Z^{z_i} \leftrightarrow (\xU|\zU)=\left(x_0, \ldots, x_{n-1} \mid z_0, \ldots, z_{n-1}\right)^\top.\label{003014_6Jan25}
\end{align}
Using this correspondence, we identify the Pauli error $E$ with its binary representation $(\xU|\zU)$. Whether the vector is a row vector or column vector should be understood from the context.

A very important subclass of stabilizer codes \cite{gottesmanthesis} is the CSS codes \cite{calderbank96,steane96}. CSS codes are a type of stabilizer code characterized by the property that the non-identity components of each stabilizer generator within a tensor product are all equal to $X$ or all equal to $Z$. 

The check matrix of the stabilizer group $S$ is  expressed as binary vectors arranged in rows. 
Without loss of generality, the check matrix $H$ of a CSS code of length $n$ can be expressed as follows:
\begin{align}
 H = 
\begin{pmatrix}
 H_X & O \\
 O & H_Z
 \end{pmatrix}\in \Fb_2^{(m_X+m_Z)\times 2n},\label{222417_12Jan25}
\end{align}
where $H_X, H_Z$ are binary matrices of sizes $m_X\times n$ and $m_Z\times n$, respectively. The commutativity of the stabilizer generators implies that $ H_X H_Z^\T = O$. For simplicity, this paper assumes $m=m_X=m_Z$. The condition $ H_X H_Z^\T = O$ is equivalent to the following:
\begin{align}
 C_Z^{\perp} \subset C_X, \quad C_X^{\perp} \subset C_Z,\label{013431_22Jan25}
\end{align}
where $C_X,C_Z$ are the code spaces defined by $H_X,H_Z$ when considered as parity-check matrices. 
Furthermore, the dimension $k$ of the CSS code of length $n$ can be determined by the following formula:
\[
k = n - \operatorname{rank} H_X - \operatorname{rank} H_Z.
\]

Let us consider the codeword state $|\psi\>$ with a Pauli error $E\leftrightarrow (\xU\mid \zU)$ applied, resulting in the state $E|\psi\>$. The decoder considered in this paper performs decoding according to the following steps:
\begin{enumerate}
 \item Measure the syndrome $(\sU=H_Z\xU, \tU=H_X\zU)\in \Fb_2^{2m}$.
 \item Based on the syndrome, estimate the noise as $\EH\leftrightarrow(\xUH\mid \zUH)$ that satisfies the following:
 \begin{align}
 H_X\zUH=\sU, \quad H_Z\xUH=\tU\label{030159_23Jan25}
 \end{align}
 \item 
The codeword $|\psi\>$ is affected by noise $E$, resulting in the state $E|\psi\>$. Applying $\EH^\dagger$ to this state yields $E^\dagger E|\psi\>$.
If $\EH^\dagger E\in S$, then $\EH^\dagger E|\psi\>\propto |\psi\>$, and the codeword is recovered. 
\end{enumerate}

The condition $\EH^\dagger E\in S$ is equivalent to the existence of $\aU=(\aU_X\mid \aU_Z)=(a_0,\ldots,a_{2m-1})^\T \in\Fb_2^{2m}$ such that 
\begin{align}
 \prod_{i=0}^{2m-1} S_i^{a_i}=\EH^\dagger E, 
\end{align}
where $S_i$ are the stabilizer generators for $i=0,\ldots,2m-1$. Moreover, since $\EH^\dagger E\leftrightarrow(\xU+\xUH\mid \zU+\zUH)$, this is equivalent to:
\begin{align}
\sum_{i=1}^m {a_i} \sU_i=(\xU+\xUH\mid \zU+\zUH), \quad \text{for } i=1,\ldots,m, 
\end{align}
where $S_i\leftrightarrow\sU_i$. Writing this in terms of matrices and vectors, we get $  H^\T\aU=(\xU+\xUH\mid \zU+\zUH)$. Specifically, for CSS codes, since the parity-check matrix is given by \eqref{222417_12Jan25}, we use $\aU=(\aU_X\mid\aU_Z)$ to obtain:
\begin{align}
  & (H_X)^\T \aU_X=\xU+\xUH, \quad (H_Z)^\T \aU_Z=\zU+\zUH\label{025725_23Jan25}
\end{align}
which is equivalent to 
\begin{align}
&G_X\left(\xU+\xUH\right)=0, \quad G_Z\left(\zU+\zUH\right)=0,\label{041406_21Jan25}
\end{align}
or equivalently, 
\begin{align}
& \xUH+\xU\in C_X^\perp,\quad \zUH+\zU\in  C_Z^\perp, \label{021354_20Jan25}
\end{align}
where $G_X,G_Z$ are generator matrices of $C_X,C_Z$, respectively. 
From \eqref{013431_22Jan25}, it follows that \eqref{021354_20Jan25} implies
\begin{align}
 H_X\left(\zU+\zUH\right)=0, \quad H_Z\left(\xU+\xUH\right)=0,\label{022112_22Jan25}
\end{align}
which is equivalent to \eqref{030159_23Jan25}. 

We assume a depolarizing channel with physical error rate \(p_D\). The hashing bound for the depolarizing channel with \(p_D\) is given by
\begin{align}
 1 - H_2(p_D) - p_D \log_2(3),
\end{align}
where \(H_2(\cdot)\) is the binary entropy function. 
In this section, we describe the probability distribution of \(X\) and \(Z\) errors induced by the channel.

Let \(M = PJ\), \(N = PL\), and \(n = eN\). For \(n = eN\) qubits, the vector representations of \(X\) and \(Z\) errors are denoted by \(\xU\) and \(\zU\), respectively. We divide \(\xU\) and \(\zU\) into \(e\)-bit segments and write
\begin{align}
    \xU &= (x_0, \ldots, x_{N-1})^\T, \quad \zU = (z_0, \ldots, z_{N-1})^\T.
\end{align}
Each \(x_j\) and \(z_j\) is a binary vector of length \(e\). The probability of occurrence for \(\xU, \zU \in \Fb_2^{eN}\) in this channel is given by
\begin{align}
    p(\xU,\zU) &= \prod_{j=0}^{N-1}p(x_{j},z_{j}),\\
    p(x_{j},z_{j})&=\prod_{k=0}^{e-1} p(x_{j,k},z_{j,k}),  \\
    p(x_{j,k},z_{j,k}) &=
    \begin{cases}
        1-p_D, &  (x_{j,k},z_{j,k})=(0,0),\\
        \frac{p_D}{3}, & (x_{j,k},z_{j,k})=(0,1),(1,0),(1,1),
    \end{cases}
\end{align}
where \(x_{j,k}, z_{j,k} \in \Fb_2\) denote the \(k\)-th bit of \(x_j\) and \(z_j\), respectively, for \(k = 0,1,\ldots,e-1\).

\subsection{Degeneracy and Logical Correctness}\label{153729_6Jun25}
The decoding process involves estimating the noise vectors \(\xU, \zU \in \Fb_2^{n}\) from the syndromes $\sU,\tU\in \Fb_2^m$
\begin{align}
 \sU = H_Z \xU \tAND \tU = H_X \zU. \label{031203_14May25}
\end{align}
The decoder attempts to find a likely noise estimate $\xUH$ and $ \zUH$ that satisfy
\begin{align}
H_Z \xUH = \sU, \quad H_X \zUH = \tU, \label{163133_10May25}
\end{align}
that is, a vector consistent with the syndrome. Naturally, the true noise vectors \(\xU\) and \(\zU\) also satisfy this condition.

Such a vector $\xUH$ and $\zUH$ are not unique due to degeneracy: the number of solutions is equal to the size of the code $C_Z$. Indeed, Eq.~\eqref{163133_10May25} is equivalent to
\begin{align}
\xU + \xUH \in C_Z,\quad \zU+ \zUH \in C_X.   \label{180124_5Apr25}
\end{align}
Hence, the set of all solutions can be written as a coset:
\begin{align}
 \xU + C_Z = \{ \xU + \xU' \mid \xU' \in C_Z \}, \quad \zU + C_X = \{ \zU + \zU' \mid \zU' \in C_X \}. \label{180124_5Apr25_2}
\end{align}

In classical communication, syndrome decoding aims to identify the exact noise vector; the estimate must match the true error. However, in quantum error correction, exact identification is not necessary. Instead, if the following condition holds:
\begin{align}
\xU + \xUH \in C_X^\perp, \quad \zU + \zUH \in C_Z^\perp, \label{204646_19Mar25}
\end{align}
then a recovery operator constructed from \(\xUH\) and \(\zUH\) will correctly restore the quantum state, despite potential mismatches between the estimated and true noise vectors. An error that satisfies \eqref{163133_10May25} but not \eqref{204646_19Mar25}, in other words,
\begin{align}
\xU + \xUH \in C_Z \setminus C_X^\perp, \quad \zU + \zUH \in C_X \setminus C_Z^\perp,
\end{align}
is called a logical error, as it transforms the state into a different codeword state.  
That is, the recovery still satisfies the stabilizer constraints but results in a misidentification of the logical state.  
\subsection{Conventional Algorithm: Joint Belief Propagation Decoding over Binary Vectors}\label{subsec:205908_25May25}
The idea of the joint BP decoding algorithm was, to the best of the authors' knowledge, proposed in~\cite{mackay2004sparse}.  
Although the algorithm was not explicitly described in that work, it was later formulated as a message-passing algorithm in~\cite{komoto2024quantumerrorcorrectionnear}, or perhaps even earlier in other sources.  

The algorithm is employed to simultaneously estimate the noise vectors \(\xU\) and \(\zU\). In that formulation, the noise components \(x_j\) and \(z_j\) are treated as random variables over binary vectors, and the decoding algorithm is driven by a factorization of the joint posterior distribution. Specifically, the factor graph is constructed based on a function proportional to the posterior probability, which is used to guide the message passing in the BP decoder.

The posterior probability of \(\xU, \zU\) given the syndromes \(\sU, \tU\) can be sparsely factorized as follows:
\begin{align}
 p(\xU,\zU|\sU,\tU)
 &\propto \I\bigl[H_Z\xU=\sU\bigr]\I\bigl[H_X\zU=\tU\bigr]p(\xU,\zU)
 \\&=
 \Bigl(\prod_{i\in[M]} \I\bigl[\sum_{j\in[N]}(H_Z)_{ij}x_j=s_i\bigr] \Bigr)
 \Bigl( \prod_{i\in[M]} \I\bigl[\sum_{j\in[N]}(H_X)_{ij}z_j=t_i\bigr] \Bigr)
 \Bigl( \prod_{j\in[N]} p\left(x_j, z_j\right) \Bigr),
  \label{align_p(x,z|s,t)}
\end{align}
where $(H_X)_{ij}$ and $(H_Z)_{ij}$ denote the $(i,j)$-th $(e \times e)$ submatrices of $H_X$ and $H_Z$, respectively. These submatrices are defined by the companion matrix mapping:
\begin{align}
(H_X)_{ij} = A(\gamma_{ij}), \qquad
(H_Z)_{ij} = A(\delta_{ij})^\top,
\end{align}
where $\gamma_{ij}, \delta_{ij} \in \Fb_q$ are the entries of the non-binary parity-check matrices $H_\Gamma$ and $H_\Delta$, respectively, and $A(\cdot)$ denotes the $e \times e$ binary companion matrix associated with each field element in $\Fb_q$ (see Appendices~\ref{234926_5Apr25} and \ref{234956_5Apr25} for details).

We denote $\I[\cdot]$ as 1 if the inside the brackets is true, and 0 otherwise.
For each \( i \), note that there are \( L \) values of \( j \) for which \( (H_X)_{ij} \) and \( (H_Z)_{ij} \) are non-zero, respectively.
In \cite{komoto2024quantumerrorcorrectionnear}, this factorization was used to formulate the joint BP decoding algorithm.
\subsection{Joint Belief Propagation Decoding over Finite Field}\label{sec:143825_6Jun25}
In the concluding part of Section~\ref{sec:171047_25May25}, we will propose a post-processing technique for the joint BP algorithm.  
For the sake of a concise formulation, this section describes a joint BP algorithm over a finite field, which is equivalent to the one over binary vectors defined in Section~\ref{subsec:205908_25May25}.

For $x_j, z_j \in \Fb_2^e$, we define \(\xi_j, \zeta_j \in \Fb_q\) such that  
\begin{align}
 \vU(\xi_j) = x_j \tAND \wU(\zeta_j) = z_j, \label{031931_14May25}
\end{align}
respectively. Similarly, we define \(\sigma_i, \tau_i \in \Fb_q\) such that \(\vU(\sigma_i) = \sU_i\) and \(\wU(\tau_i) = \tU_i\). Then, by Theorems~\ref{013653_26Mar25} and~\ref{wU_linear_relation_theorem} in the Appendices, we obtain the following equivalences:
\begin{align}
\left.
\begin{aligned}
H_\Delta\, \xiU &= \sigmaU 
\quad &&\Longleftrightarrow \quad 
H_Z\, \xU = \sU, \\
H_\Gamma\, \zetaU &= \tauU 
\quad &&\Longleftrightarrow \quad 
H_X\, \zU = \tU
\end{aligned}
\right\}.
\label{eq:binary_nonbinary_equiv}
\end{align}

Using the equivalence in \eqref{eq:binary_nonbinary_equiv}, the decoding constraint \eqref{align_p(x,z|s,t)} in terms of binary variables \(\xU, \zU\) can be equivalently rewritten using $\Fb_q$-valued variables \(\xiU, \zetaU \in \Fb_q^N\) as follows:
\begin{align}
 p(\xU,\zU|\sU,\tU)&\propto \I\bigl[H_Z\xU=\sU\bigr]\I\bigl[H_X\zU=\tU\bigr]p(\xU,\zU)
 \\ &= \I\bigl[H_\Delta\xiU=\sigmaU\bigr]\I\bigl[H_\Gamma\zetaU=\tauU\bigr]p(\xiU,\zetaU)
 \\&=
 \Bigl(\prod_{i\in[M]} \I\bigl[\sum_{j\in[N]}\delta_{ij}\xi_j=\sigma_i\bigr] \Bigr)
 \Bigl(\prod_{i\in[M]} \I\bigl[\sum_{j\in[N]}\gamma_{ij}\zeta_j=\tau_i\bigr] \Bigr)\label{022413_12May25}
 \Bigl(\prod_{j\in[N]} p\left(\xi_j, \zeta_j\right) \Bigr)
 \\&\defeq p(\xiU,\zetaU|\sigmaU,\tauU).
  \label{233428_11May25}
\end{align}
Here, the pairwise prior $p\left(\xi_j, \zeta_j\right)$ is defined as $p\left(\xi_j, \zeta_j\right)\defeq p(x_j, z_j)$ for a pair $(x_j, z_j)$ satisfying the mapping condition described in~\eqref{031931_14May25}.

The joint BP algorithm that marginalizes the posterior probability \(p(\xiU,\zetaU|\sigmaU,\tauU)\) is an algorithm that updates the following eight types of messages:
\begin{align}
& \mu^{(\ell),X}_{ij}(\xi_j),  \nu^{(\ell),X}_{ji}(\xi_j), \lambda^{(\ell),X}_j(\xi_j),  \kappa^{(\ell),X}_j(\xi_j), 
\\&\mu^{(\ell),Z}_{ij}(\zeta_j),\nu^{(\ell),Z}_{ji}(\zeta_j), \lambda^{(\ell),Z}_j(\zeta_j),\kappa^{(\ell),Z}_j(\zeta_j),
\end{align}
at the iteration round $\ell=0,1,\ldots$. 
These messages are probability vectors of length~$q$. 
Note that although we use Greek letters to denote these messages, they represent probability distributions and not elements of the finite field.
The following messages are initialized to the uniform distribution over $\xi_j, \zeta_j\in \Fb_q$, respectively:
\begin{align}
& \lambda_j^{(0),X}(\xi_j) = 1/q, \\
& \lambda_j^{(0),Z}(\zeta_j) = 1/q,\\
& \nu_{ij}^{(0),X}(\xi_j) = 1/q, \\
& \nu_{ij}^{(0),Z}(\zeta_j) = 1/q.
\end{align}

Each message is updated at each iteration round $\ell = 0, 1, 2, \ldots$ using the following update equations.
After each update, the messages are normalized to ensure they form probability distributions.
\begin{align}
 \kappa_{j}^{(\ell),X}(\xi_j)&=\sum_{\zeta_j}p(\xi_j,\zeta_j)\lambda_j^{(\ell),Z}(\zeta_j)\label{013318_27May25},
\\ \kappa_{j}^{(\ell),Z}(\zeta_j)&=\sum_{\xi_j}p(\xi_j,\zeta_j)\lambda_j^{(\ell),X}(\xi_j)\label{013321_27May25},
\\\mu^{(\ell),X}_{ji}(\xi_j)&=\kappa^{(\ell),X}_j(\xi_j)\prod_{i'\in \partial_Z(j)\setminus i}\nu_{i'j}^{(\ell),X}(\xi_j),\label{noise_X}
\\\mu^{(\ell),Z}_{ji}(\zeta_j)&=\kappa^{(\ell),Z}_j(\zeta_j)\prod_{i'\in \partial_X(j)\setminus i}\nu_{i'j}^{(\ell),Z}(\zeta_j),\label{noise_Z}
\\  \nu_{ij}^{(\ell+1),X}(\xi_j) &=\sum_{(\xi_{j'}):{j'\in \partial_Z(i)}\setminus j}
    \I\Bigl[\sum_{j'\in \partial_Z(i)}\delta_{ij'}\xi_{j'}=\sigma_i\Bigr]
    \prod_{j'\in \partial_Z(i)\setminus j}\mu_{j'i}^{(\ell),X}(\xi_{j'}),\label{parity_check_X}
\\  \nu_{ij}^{(\ell+1),Z}(\zeta_j)&=\sum_{(\zeta_{j'}):{j'\in \partial_X(i)}\setminus j}
    \I\Bigl[\sum_{j'\in \partial_X(i)}\gamma_{ij'}\zeta_{j'}=\tau_i\Bigr]
    \prod_{j'\in \partial_X(i)\setminus j}\mu_{j'i}^{(\ell),Z}(\zeta_{j'}),\label{parity_check_Z}
\\ \lambda_{j}^{(\ell+1),X}(\xi_j)&=\prod_{i\in \partial_Z(j)}\nu_{ij}^{(\ell+1),X}(\xi_j), 
\\ \lambda_{j}^{(\ell+1),Z}(\zeta_j)&=\prod_{i\in \partial_X(j)}\nu_{ij}^{(\ell+1),Z}(\zeta_j),
\end{align}
where $\partial_X(i)\subset[N]$ and $\partial_Z(i)\subset[N]$ are the set of column index $j$ such that $\gamma_{ij}\neq 0$ and $\delta_{ij}\neq 0$, respectively.
Similarly,  $\partial_X(j)\subset[M]$ and $\partial_Z(j)\subset[M]$ are the set of row index $i$ such that $\gamma_{ij}\neq 0$ and $\delta_{ij}\neq 0$, respectively.
It holds that 
$\#\partial_X(i)=\#\partial_Z(i)=L$ and 
$\#\partial_X(j)=\#\partial_Z(j)=J$.
The outer summation in \eqref{parity_check_X} is taken over all vectors $(\xi_{j'})$ of length $L - 1$,  
where $j' \in \partial_Z(i) \setminus \{j\}$ and each component $\xi_{j'}$ ranges over $\Fb_q$.  
In this summation, the variable $\xi_j$ is fixed, and the sum is taken over all other variables involved in the parity check at node $i$.
The interpretation of the summation in \eqref{parity_check_Z} is analogous, with $\zeta_j$ fixed and the summation taken over all $\zeta_{j'}$ for $j' \in \partial_X(i) \setminus \{j\}$.

We define the estimated noise at iteration $\ell$ as
\begin{align}
  &  \xiH_j^{(\ell)} \defeq \argmax_{\xi_j}     \kappa^{(\ell),X}_j(\xi_j)  \prod_{i \in \partial_Z(j)} \nu_{ij}^{(\ell),X}(\xi_j), \quad
\\&  \zetaH_j^{(\ell)} \defeq \argmax_{\zeta_j} \kappa^{(\ell),Z}_j(\zeta_j)\prod_{i \in \partial_X(j)} \nu_{ij}^{(\ell),Z}(\zeta_j).
\end{align}
At each iteration $\ell$, we check whether these estimated syndromes match the true syndromes $\sigmaU$ and $\tauU$.
If both conditions $H_\Delta \xiUH^{(\ell)} = \sigmaU$ and $H_\Gamma \zetaUH^{(\ell)} = \tauU$ are satisfied, then $\xiUH^{(\ell)}$ and $\zetaUH^{(\ell)}$ are accepted as the estimated noise, and the algorithm terminates.
If either condition is not satisfied, the decoding continues.

The messages \eqref{parity_check_X} and \eqref{parity_check_Z} can be computed using FFT \cite{DaveyMacKayGFq}, requiring \(O(Lq\log q)\) operations per message. 
On the other hand the computation for messages \eqref{013318_27May25} and \eqref{013321_27May25} require \(O(q^2)\) operations per message.  
Other messages computation require only $O(Jq)$ operations per message. 
Moreover, since the message updates at each node do not share memory, they can be computed in parallel across all nodes.  
As performance improves with increasing \(P\), and assuming \(q\) is fixed, the overall computational complexity is proportional to the number of physical qubits \(n = ePL\).  
In the experiments conducted in this paper, we set \(q = 2^e\) with \(e = 8\).

\begin{example}\label{ex:030406_16May25}
Based on the $16 \times 48$ binary matrices $H_\Gamma$ and $H_\Delta$ given in Example~\ref{035757_2Jun25},  
we illustrate the factor graph associated with Eq.~\eqref{022413_12May25}, as shown in Fig.~\ref{025851_16May25}.  
The Tanner graphs on the left and right correspond to $H_\Delta$ and $H_\Gamma$, respectively.  
The edges that form the cycle included in the UTCBC presented in Example~\ref{035757_2Jun25} are highlighted with thicker lines in their corresponding colors.  
Each side contains 16 parity-check nodes.  
The variable nodes representing the noise variables $\xi_j$ and $\zeta_j$ are placed on each side, with 48 nodes per side.  
The central factor nodes represent the joint distribution $p(\xi_j, \zeta_j)$ of the noise variables.  
The directions of the messages passed along the edges are indicated at the bottom of the graph.  
\end{example}


\definecolor{c00}{rgb}{1.00,0.00,0.00} 
\definecolor{c10}{rgb}{1.00,0.50,0.00} 
\definecolor{c01}{rgb}{1.00,1.00,0.00} 
\definecolor{c11}{rgb}{0.00,1.00,0.00} 
\definecolor{c02}{rgb}{0.00,1.00,1.00} 
\definecolor{c12}{rgb}{0.00,0.00,1.00} 
\definecolor{c03}{rgb}{0.50,0.00,1.00} 
\definecolor{c13}{rgb}{1.00,0.00,1.00} 
\definecolor{c04}{rgb}{0.50,0.50,0.50} 
\definecolor{c14}{rgb}{0.75,0.25,0.25} 
\definecolor{c05}{rgb}{0.25,0.75,0.25} 
\definecolor{c15}{rgb}{0.25,0.25,0.75} 


\begin{figure}[htbp]
\centering
\begin{tikzpicture}[scale=1,
  roundnode/.style={circle, draw=black, fill=grey, minimum size=4pt, inner sep=1pt},
  squarenode/.style={rectangle, draw=black, fill=grey, minimum size=4pt, inner sep=1pt},
  channelnode/.style={rectangle, draw=black, fill=white, minimum size=4pt, inner sep=1pt},
  >=stealth
]

\foreach \i in {0,...,15} {
  \pgfmathsetmacro{\yy}{\i * 5.5 / 8}
  \node[squarenode] (A\i) at (-2, -\yy) {};
}

\foreach \i in {0,...,47} {
  \pgfmathsetmacro{\yy}{\i * 5.5 / 24}
  \node[roundnode]  (B\i) at (2, -\yy) {};
  \node[channelnode] (C\i) at (4, -\yy) {};
  \node[roundnode]  (D\i) at (6, -\yy) {};
}

\foreach \i in {0,...,15} {
  \pgfmathsetmacro{\yy}{\i * 5.5 / 8}
  \node[squarenode] (E\i) at (10, -\yy) {};
}

\draw[grey!50!black, thin] (A0.east) -- (B7.west);
\draw[grey!50!black, thin] (A0.east) -- (B15.west);
\draw[grey!50!black, thin] (A0.east) -- (B21.west);
\draw[grey!50!black, thin] (A0.east) -- (B31.west);
\draw[grey!50!black, thin] (A0.east) -- (B38.west);
\draw[grey!50!black, thin] (A0.east) -- (B43.west);
\draw[grey!50!black, thin] (A1.east) -- (B4.west);
\draw[blue!70!black, very thick] (A1.east) -- (B12.west);
\draw[grey!50!black, thin] (A1.east) -- (B18.west);
\draw[grey!50!black, thin] (A1.east) -- (B28.west);
\draw[grey!50!black, thin] (A1.east) -- (B39.west);
\draw[blue!70!black, very thick] (A1.east) -- (B40.west);
\draw[grey!50!black, thin] (A2.east) -- (B1.west);
\draw[grey!50!black, thin] (A2.east) -- (B9.west);
\draw[grey!50!black, thin] (A2.east) -- (B23.west);
\draw[grey!50!black, thin] (A2.east) -- (B25.west);
\draw[grey!50!black, thin] (A2.east) -- (B32.west);
\draw[grey!50!black, thin] (A2.east) -- (B45.west);
\draw[grey!50!black, thin] (A3.east) -- (B6.west);
\draw[grey!50!black, thin] (A3.east) -- (B14.west);
\draw[grey!50!black, thin] (A3.east) -- (B20.west);
\draw[grey!50!black, thin] (A3.east) -- (B30.west);
\draw[grey!50!black, thin] (A3.east) -- (B33.west);
\draw[grey!50!black, thin] (A3.east) -- (B42.west);
\draw[grey!50!black, thin] (A4.east) -- (B3.west);
\draw[grey!50!black, thin] (A4.east) -- (B11.west);
\draw[blue!70!black, very thick] (A4.east) -- (B17.west);
\draw[grey!50!black, thin] (A4.east) -- (B27.west);
\draw[blue!70!black, very thick] (A4.east) -- (B34.west);
\draw[grey!50!black, thin] (A4.east) -- (B47.west);
\draw[blue!70!black, very thick] (A5.east) -- (B0.west);
\draw[grey!50!black, thin] (A5.east) -- (B8.west);
\draw[grey!50!black, thin] (A5.east) -- (B22.west);
\draw[blue!70!black, very thick] (A5.east) -- (B24.west);
\draw[grey!50!black, thin] (A5.east) -- (B35.west);
\draw[grey!50!black, thin] (A5.east) -- (B44.west);
\draw[grey!50!black, thin] (A6.east) -- (B5.west);
\draw[grey!50!black, thin] (A6.east) -- (B13.west);
\draw[grey!50!black, thin] (A6.east) -- (B19.west);
\draw[grey!50!black, thin] (A6.east) -- (B29.west);
\draw[grey!50!black, thin] (A6.east) -- (B36.west);
\draw[grey!50!black, thin] (A6.east) -- (B41.west);
\draw[grey!50!black, thin] (A7.east) -- (B2.west);
\draw[grey!50!black, thin] (A7.east) -- (B10.west);
\draw[grey!50!black, thin] (A7.east) -- (B16.west);
\draw[grey!50!black, thin] (A7.east) -- (B26.west);
\draw[grey!50!black, thin] (A7.east) -- (B37.west);
\draw[grey!50!black, thin] (A7.east) -- (B46.west);
\draw[grey!50!black, thin] (A8.east) -- (B5.west);
\draw[grey!50!black, thin] (A8.east) -- (B15.west);
\draw[grey!50!black, thin] (A8.east) -- (B23.west);
\draw[grey!50!black, thin] (A8.east) -- (B27.west);
\draw[grey!50!black, thin] (A8.east) -- (B39.west);
\draw[grey!50!black, thin] (A8.east) -- (B46.west);
\draw[grey!50!black, thin] (A9.east) -- (B2.west);
\draw[blue!70!black, very thick] (A9.east) -- (B12.west);
\draw[grey!50!black, thin] (A9.east) -- (B20.west);
\draw[blue!70!black, very thick] (A9.east) -- (B24.west);
\draw[grey!50!black, thin] (A9.east) -- (B36.west);
\draw[grey!50!black, thin] (A9.east) -- (B47.west);
\draw[grey!50!black, thin] (A10.east) -- (B7.west);
\draw[grey!50!black, thin] (A10.east) -- (B9.west);
\draw[blue!70!black, very thick] (A10.east) -- (B17.west);
\draw[grey!50!black, thin] (A10.east) -- (B29.west);
\draw[grey!50!black, thin] (A10.east) -- (B33.west);
\draw[blue!70!black, very thick] (A10.east) -- (B40.west);
\draw[grey!50!black, thin] (A11.east) -- (B4.west);
\draw[grey!50!black, thin] (A11.east) -- (B14.west);
\draw[grey!50!black, thin] (A11.east) -- (B22.west);
\draw[grey!50!black, thin] (A11.east) -- (B26.west);
\draw[grey!50!black, thin] (A11.east) -- (B38.west);
\draw[grey!50!black, thin] (A11.east) -- (B41.west);
\draw[grey!50!black, thin] (A12.east) -- (B1.west);
\draw[grey!50!black, thin] (A12.east) -- (B11.west);
\draw[grey!50!black, thin] (A12.east) -- (B19.west);
\draw[grey!50!black, thin] (A12.east) -- (B31.west);
\draw[grey!50!black, thin] (A12.east) -- (B35.west);
\draw[grey!50!black, thin] (A12.east) -- (B42.west);
\draw[grey!50!black, thin] (A13.east) -- (B6.west);
\draw[grey!50!black, thin] (A13.east) -- (B8.west);
\draw[grey!50!black, thin] (A13.east) -- (B16.west);
\draw[grey!50!black, thin] (A13.east) -- (B28.west);
\draw[grey!50!black, thin] (A13.east) -- (B32.west);
\draw[grey!50!black, thin] (A13.east) -- (B43.west);
\draw[grey!50!black, thin] (A14.east) -- (B3.west);
\draw[grey!50!black, thin] (A14.east) -- (B13.west);
\draw[grey!50!black, thin] (A14.east) -- (B21.west);
\draw[grey!50!black, thin] (A14.east) -- (B25.west);
\draw[grey!50!black, thin] (A14.east) -- (B37.west);
\draw[grey!50!black, thin] (A14.east) -- (B44.west);
\draw[blue!70!black, very thick] (A15.east) -- (B0.west);
\draw[grey!50!black, thin] (A15.east) -- (B10.west);
\draw[grey!50!black, thin] (A15.east) -- (B18.west);
\draw[grey!50!black, thin] (A15.east) -- (B30.west);
\draw[blue!70!black, very thick] (A15.east) -- (B34.west);
\draw[grey!50!black, thin] (A15.east) -- (B45.west);
\foreach \i in {0,...,47} {
  \draw[black, thin] (B\i) -- (C\i);
  \draw[black, thin] (C\i) -- (D\i);
}

\draw[red!70!black, very thick] (D5.east) -- (E0.west);
\draw[grey!50!black, thin] (D9.east) -- (E0.west);
\draw[grey!50!black, thin] (D18.east) -- (E0.west);
\draw[grey!50!black, thin] (D29.east) -- (E0.west);
\draw[grey!50!black, thin] (D39.east) -- (E0.west);
\draw[red!70!black, very thick] (D45.east) -- (E0.west);
\draw[grey!50!black, thin] (D2.east) -- (E1.west);
\draw[grey!50!black, thin] (D14.east) -- (E1.west);
\draw[grey!50!black, thin] (D19.east) -- (E1.west);
\draw[grey!50!black, thin] (D26.east) -- (E1.west);
\draw[grey!50!black, thin] (D36.east) -- (E1.west);
\draw[grey!50!black, thin] (D42.east) -- (E1.west);
\draw[grey!50!black, thin] (D7.east) -- (E2.west);
\draw[red!70!black, very thick] (D11.east) -- (E2.west);
\draw[grey!50!black, thin] (D20.east) -- (E2.west);
\draw[grey!50!black, thin] (D31.east) -- (E2.west);
\draw[red!70!black, very thick] (D33.east) -- (E2.west);
\draw[grey!50!black, thin] (D47.east) -- (E2.west);
\draw[grey!50!black, thin] (D4.east) -- (E3.west);
\draw[grey!50!black, thin] (D8.east) -- (E3.west);
\draw[red!70!black, very thick] (D21.east) -- (E3.west);
\draw[red!70!black, very thick] (D28.east) -- (E3.west);
\draw[grey!50!black, thin] (D38.east) -- (E3.west);
\draw[grey!50!black, thin] (D44.east) -- (E3.west);
\draw[grey!50!black, thin] (D1.east) -- (E4.west);
\draw[grey!50!black, thin] (D13.east) -- (E4.west);
\draw[grey!50!black, thin] (D22.east) -- (E4.west);
\draw[grey!50!black, thin] (D25.east) -- (E4.west);
\draw[grey!50!black, thin] (D35.east) -- (E4.west);
\draw[grey!50!black, thin] (D41.east) -- (E4.west);
\draw[grey!50!black, thin] (D6.east) -- (E5.west);
\draw[grey!50!black, thin] (D10.east) -- (E5.west);
\draw[grey!50!black, thin] (D23.east) -- (E5.west);
\draw[grey!50!black, thin] (D30.east) -- (E5.west);
\draw[grey!50!black, thin] (D32.east) -- (E5.west);
\draw[grey!50!black, thin] (D46.east) -- (E5.west);
\draw[grey!50!black, thin] (D3.east) -- (E6.west);
\draw[grey!50!black, thin] (D15.east) -- (E6.west);
\draw[grey!50!black, thin] (D16.east) -- (E6.west);
\draw[grey!50!black, thin] (D27.east) -- (E6.west);
\draw[grey!50!black, thin] (D37.east) -- (E6.west);
\draw[grey!50!black, thin] (D43.east) -- (E6.west);
\draw[blue!70!black, very thick] (D0.east) -- (E7.west);
\draw[blue!70!black, very thick] (D12.east) -- (E7.west);
\draw[blue!70!black, very thick] (D17.east) -- (E7.west);
\draw[blue!70!black, very thick] (D24.east) -- (E7.west);
\draw[blue!70!black, very thick] (D34.east) -- (E7.west);
\draw[blue!70!black, very thick] (D40.east) -- (E7.west);
\draw[grey!50!black, thin] (D2.east) -- (E8.west);
\draw[grey!50!black, thin] (D13.east) -- (E8.west);
\draw[grey!50!black, thin] (D17.east) -- (E8.west);
\draw[grey!50!black, thin] (D29.east) -- (E8.west);
\draw[grey!50!black, thin] (D37.east) -- (E8.west);
\draw[grey!50!black, thin] (D47.east) -- (E8.west);
\draw[grey!50!black, thin] (D3.east) -- (E9.west);
\draw[grey!50!black, thin] (D10.east) -- (E9.west);
\draw[grey!50!black, thin] (D22.east) -- (E9.west);
\draw[grey!50!black, thin] (D26.east) -- (E9.west);
\draw[grey!50!black, thin] (D34.east) -- (E9.west);
\draw[grey!50!black, thin] (D44.east) -- (E9.west);
\draw[grey!50!black, thin] (D4.east) -- (E10.west);
\draw[grey!50!black, thin] (D15.east) -- (E10.west);
\draw[grey!50!black, thin] (D19.east) -- (E10.west);
\draw[grey!50!black, thin] (D31.east) -- (E10.west);
\draw[grey!50!black, thin] (D39.east) -- (E10.west);
\draw[grey!50!black, thin] (D41.east) -- (E10.west);
\draw[red!70!black, very thick] (D5.east) -- (E11.west);
\draw[grey!50!black, thin] (D12.east) -- (E11.west);
\draw[grey!50!black, thin] (D16.east) -- (E11.west);
\draw[red!70!black, very thick] (D28.east) -- (E11.west);
\draw[grey!50!black, thin] (D36.east) -- (E11.west);
\draw[grey!50!black, thin] (D46.east) -- (E11.west);
\draw[grey!50!black, thin] (D6.east) -- (E12.west);
\draw[grey!50!black, thin] (D9.east) -- (E12.west);
\draw[red!70!black, very thick] (D21.east) -- (E12.west);
\draw[grey!50!black, thin] (D25.east) -- (E12.west);
\draw[red!70!black, very thick] (D33.east) -- (E12.west);
\draw[grey!50!black, thin] (D43.east) -- (E12.west);
\draw[grey!50!black, thin] (D7.east) -- (E13.west);
\draw[grey!50!black, thin] (D14.east) -- (E13.west);
\draw[grey!50!black, thin] (D18.east) -- (E13.west);
\draw[grey!50!black, thin] (D30.east) -- (E13.west);
\draw[grey!50!black, thin] (D38.east) -- (E13.west);
\draw[grey!50!black, thin] (D40.east) -- (E13.west);
\draw[grey!50!black, thin] (D0.east) -- (E14.west);
\draw[red!70!black, very thick] (D11.east) -- (E14.west);
\draw[grey!50!black, thin] (D23.east) -- (E14.west);
\draw[grey!50!black, thin] (D27.east) -- (E14.west);
\draw[grey!50!black, thin] (D35.east) -- (E14.west);
\draw[red!70!black, very thick] (D45.east) -- (E14.west);
\draw[grey!50!black, thin] (D1.east) -- (E15.west);
\draw[grey!50!black, thin] (D8.east) -- (E15.west);
\draw[grey!50!black, thin] (D20.east) -- (E15.west);
\draw[grey!50!black, thin] (D24.east) -- (E15.west);
\draw[grey!50!black, thin] (D32.east) -- (E15.west);
\draw[grey!50!black, thin] (D42.east) -- (E15.west);
\node at (-2,0.5) {$\I[\sum_j \delta_{ij}\xi_j = \sigma_i]$};
\node at (2,0.5) {$\xi_j$};
\node at (4,0.5) {$p(\xi_j,\zeta_j)$};
\node at (6,0.5) {$\zeta_j$};
\node at (10,0.5) {$\I[\sum_j \gamma_{ij}\zeta_j = \tau_i]$};
\node at (0.0,-11.5) {$\overrightarrow{\nu^{(\ell),X}_{ji}(\xi_j)}$};\node at (3.0,-11.5) {$\overrightarrow{\lambda^{(\ell),X}_{j}(\xi_j)}$};\node at (5.0,-11.5) {$\overleftarrow{\lambda^{(\ell),Z}_{j}(\zeta_j)}$};\node at (8.0,-11.5) {$\overleftarrow{\nu^{(\ell),Z}_{ji}(\zeta_j)}$};
\node at (0.0,-12.5) {$\overleftarrow{\mu^{(\ell),X}_{ij}(\xi_j)}$};\node at (3.0,-12.5) {$\overleftarrow{\kappa^{(\ell),X}_{j}(\xi_j)}$};\node at (5.0,-12.5) {$\overrightarrow{\kappa^{(\ell),Z}_{j}(\zeta_j)}$};\node at (8.0,-12.5) {$\overrightarrow{\mu^{(\ell),Z}_{ij}(\zeta_j)}$};
\end{tikzpicture}
\caption{Factor graph corresponding to Eq.~\eqref{022413_12May25}.  
The joint BP algorithm can be interpreted as a message-passing algorithm executed on this graph.  
The directions of the messages are indicated by arrows.}
\label{025851_16May25}
\end{figure}

Using the conventional code construction method described in Section~\ref{192200_2Jun25},  we consider codes defined by $(H_X, H_Z)$, or equivalently $(H_\Gamma, H_\Delta)$. 
Figure~\ref{fig:comparison_ISIT} shows the decoding performance  by the conventional decoding method presented in Section~\ref{sec:143825_6Jun25}.
In this experiment, degeneracy was not taken into account, and decoding is considered successful only when the estimated noise exactly matches the true noise, i.e., \((\xiUH = \xiU, \zetaUH = \zetaU)\).  
For codes with \(L \ge 8\) and rate \(R \ge 0.5\), no significant error floor is observed, and the performance approaches the hashing bound.  
On the other hand, for \(L = 6\) and rate \(R = 1/3\), a high error floor emerges as the code length increases.  
In the next section, we discuss the structural properties of the codes that lead to this error floor.  

\begin{figure}[t]
  \centering
  \includegraphics[width=0.95\linewidth]{}
  \caption{Decoding performance of the conventional decoder applied to the conventional code proposed in~\cite{komoto2024quantumerrorcorrectionnear}
  for several row weights $L$ and coding rates $R = 1 - 2J/L$.
  The field size is $q = 2^e$ with $e = 8$, and the number of physical qubits is given by $n = eN=ePL$.
  When $L > 6$ and $R > 1/3$, no observable error floor appears down to a frame error rate (FER) of $10^{-4}$ as the blocklength increases.
  In contrast, for $L = 6$ and $R = 1/3$, a pronounced error floor emerges as the blocklength increases.}
  \label{fig:comparison_ISIT}
\end{figure}

\subsection{Observations of Decoder Behavior in the Error Floor Regime}\label{150214_6Jun25}
To characterize the typical behavior of the joint BP algorithm in the error floor regime, we begin by defining the following metrics.  
The decoder attempts to find the noise vector estimates $\xiUH$ and $\zetaUH$ such that 
\begin{align}
\text{$\sigmaU = H_\Delta \xiUH$ and $\tauU = H_\Gamma \zetaUH$.  }
\end{align}
In the following, we focus on the estimation of $X$-noise $\xiUH$.  
The estimation of $Z$-noise $\zetaUH$ can be handled in an analogous manner.  

For the output of the joint BP algorithm at the $\ell$-th iteration, the set of indices at which the estimated noise $\hat{\xiU}^{(\ell)}$ is incorrect is defined by  
\begin{align}
K_\mathrm{err}^{(\ell)} = \{ j \in [N] \mid \xiH_j^{(\ell)} \neq \xi_j \}.
\end{align}
Similarly, we define the set of unsatisfied syndrome check indices as
\begin{align}
I_\mathrm{err}^{(\ell)}=\{i\in [M]\mid \sigmaU^{(\ell)}=H_\Delta\xiU^{(\ell)}, \sigma^{(\ell)}_i\neq \sigma_i\}.
\end{align}
Next, we define the set of variable nodes whose estimated values have changed at least once during the last \(d\) iterations, denoted by \(K_d^{(\ell)}\). Specifically,
\begin{align}
K_d^{(\ell)} = \{ j \in [N]\mid \hat{\xi}_j^{(\ell'-1)} \neq \hat{\xi}_j^{(\ell')} \ \text{for some } \ell' \in [\ell - d, \ell] \}.\label{022343_27May25}
\end{align}
We define the set of syndrome indices that have changed at least once during the last \(d\) iterations as
\begin{align}
I_d^{(\ell)} = \{ i\in [M] \mid \sigmaU^{(\ell')}=H_\Delta\xiU^{(\ell')}, \sigma_i^{(\ell')} \neq \sigma_i \ \text{for some } \ell' \in [\ell - d, \ell] \}.\label{022915_27May25}
\end{align}
Note that among the quantities defined above, the decoder cannot directly observe \(K_\mathrm{err}^{(\ell)}\).  
In contrast, all the other variables-$K_d^{(\ell)}$, $I_\mathrm{err}^{(\ell)}$, and $I_d^{(\ell)}$-can be computed from the internal state and history of the joint BP decoder.

\begin{table}
 \caption{Decoding Statistics over Iterations for $J=2,L=6,P=6500,N=LP=39000,M=JP=13000, p_D=9.435\%$ and $d=8$.}
 \label{tab:full_decoding_stats}
\begin{center}
  \begin{tabular}{c||c|c|c|c||c|c|c|c}
 &\multicolumn{4}{c||}{\text{Estimation for $\xiU$}}&\multicolumn{4}{c}{\text{Estimation for $\zetaU$}}\\\hline
 $\ell$&    $|K_{\mathrm{err}}^{(\ell)}|$ &$|K_d^{(\ell)}|$ &    $|I_{\mathrm{err}}^{(\ell)}|$ &$|I_d^{(\ell)}|$        &    $|K_{\mathrm{err}}^{(\ell)}|$ &$|K_d^{(\ell)}|$ &    $|I_{\mathrm{err}}^{(\ell)}|$ &$|I_d^{(\ell)}|$ \\
 \hline\hline
  0&14944 &    0 & 9689 & 9689&15017 &    0 & 9741 & 9741 \\
  1&13731 & 4270 & 8618 &10371&13845 & 4165 & 8677 &10399 \\
  2&12875 & 6986 & 7676 &10631&12959 & 6864 & 7791 &10656 \\
  3&12108 & 8776 & 7036 &10757&12306 & 8660 & 7178 &10791 \\
  4&11693 &10053 & 6558 &10852&11765 &10017 & 6717 &10883 \\
  5&11297 &11035 & 6221 &10907&11370 &11022 & 6304 &10941 \\
  6&10866 &11808 & 5862 &10951&11043 &11808 & 6028 &10986 \\
  7&10542 &12446 & 5640 &10974&10667 &12518 & 5745 &11027 \\
  8&10300 &12950 & 5464 &10044&10364 &13119 & 5537 &10141 \\
  9&10069 &11536 & 5216 & 9337&10099 &11796 & 5334 & 9442 \\
  \vdots&\vdots&\vdots&\vdots&\vdots&\vdots&\vdots&\vdots&\vdots\\
 41&  466 & 5682 &  462 & 3625&  846 & 5956 &  684 & 3755 \\
 42&  221 & 5088 &  204 & 3167&  473 & 5405 &  436 & 3421 \\
 43&   90 & 4337 &  103 & 2742&  227 & 4822 &  225 & 3053 \\
 44&   15 & 3633 &   25 & 2307&   81 & 4243 &   95 & 2664 \\
 45&    2 & 2980 &    2 & 1856&   21 & 3575 &   27 & 2210 \\
 46&    3 & 2257 &    4 & 1389&    5 & 2882 &    7 & 1755 \\
 47&    2 & 1595 &    2 &  909&    0 & 2197 &    0 & 1300 \\
 48&    2 &  973 &    2 &  565&    0 & 1538 &    0 &  897 \\
 49&    3 &  538 &    4 &  261&    0 &  998 &    0 &  531 \\
 50&    2 &  250 &    2 &  118&    0 &  542 &    0 &  264 \\
 51&    2 &  101 &    2 &   29&    0 &  256 &    0 &  108 \\
 52&    3 &   19 &    4 &    6&    0 &   87 &    0 &   30 \\
 53&    2 &    6 &    2 &    6&    0 &   23 &    0 &    7 \\
 54&    2 &    6 &    2 &    6&    0 &    5 &    0 &    0 \\
 55&    3 &    6 &    4 &    6&    0 &    0 &    0 &    0 \\
 56&    2 &    6 &    2 &    6&    0 &    0 &    0 &    0 \\
 57&    2 &    6 &    2 &    6&    0 &    0 &    0 &    0 \\
 58&    3 &    6 &    4 &    6&    0 &    0 &    0 &    0 \\
  \vdots&\vdots&\vdots&\vdots&\vdots&\vdots&\vdots&\vdots&\vdots\\
 \end{tabular}
\end{center} 
\end{table}

Table~\ref{tab:full_decoding_stats} provides an example of a decoding state transition.  
The example is based on a code constructed by the proposed method and decoded using the conventional joint BP algorithm over a depolarizing channel with physical error rate $p_D = 9.435\%$.  
Further details are provided in Section~\ref{134114_16Jun25}.  
The performance for other values of $p_D$ is shown as the solid blue curve in Fig.~\ref{fig:comparison}.  
The decoder is applied to a code with parameters $J = 2$, $L = 6$, and $P = 6500$, and aims to estimate both $\xiU$ and $\zetaU$.  

In the estimation of Z-noise vector $\zetaU$, the sizes of $K_\mathrm{err}^{(\ell)}$ and $I_\mathrm{err}^{(\ell)}$ both decrease monotonically to zero as the iteration index $\ell$ increases. The decoder successfully estimates the noise at iteration $\ell = 47$. Although typical cases exhibit similar success in estimating X-noise vector $\xiU$, this particular example illustrates a rare failure of joint BP decoding to correctly estimate the noise vector $\xiU$, occurring with probability on the order of $10^{-2}$.

In the estimation of $\xiU$, the sizes of $K_\mathrm{err}^{(\ell)}$ and $I_\mathrm{err}^{(\ell)}$ initially decrease monotonically. However, at a certain point, this decrease stagnates: although the number of incorrect estimates continues to decline, a small number of errors persist and never vanish, even after additional iterations. In this example, as $\ell$ increases, both $K_d^{(\ell)}$ and $I_d^{(\ell)}$ converge to a fixed set of size six. Within these sets, the values of $K_\mathrm{err}^{(\ell)}$ and $I_\mathrm{err}^{(\ell)}$ continue to fluctuate across iterations.
\begin{align}
K_d^{(\ell)} &= \{[4062, 0]_\Lrm, [10410, 1]_\Lrm,\ [13890, 2]_\Lrm, [25420, 0]_\Rrm, [31699, 1]_\Rrm, [35508, 2]_\Rrm\}, \\
I_d^{(\ell)} &= \{[2853, 0], [2922, 0], [3490, 0], [6662, 1], [9201, 1], [12902, 1]\}.
\end{align}
The numbers in parentheses indicate the corresponding block index for each variable or check node.

We examined the positions of the six variable nodes that failed to be correctly decoded and the corresponding unsatisfied check nodes in the matrix $H_\Delta$. Specifically, we analyzed the submatrix of $H_\Delta$ formed by the rows indexed by $I_\mathrm{err}^{(\ell)}$ and the columns indexed by $K_\mathrm{err}^{(\ell)}$. The entries of this submatrix are as follows:
\begin{align}
\begin{array}{ccccccc}
 &[4062,0]_\Lrm   &[10410,1]_\Lrm&[13890,2]_\Lrm&[25420,0]_\Rrm&[31699,1]_\Rrm  & [35508,2]_\Rrm \\%
\relax[2853,0]&\tikzmark{4A}\alpha^{77}&0&0&0&\tikzmark{4B}\alpha^{79}&0\\
 \relax[2922,0]& 0&\tikzmark{4E}\alpha^{119}&  0&\tikzmark{4F}\alpha^{168}&  0&  0     \\
 \relax[3490,0]& 0&  0&\tikzmark{4I}\alpha^{179}&  0&  0&\tikzmark{4J}\alpha^{235}     \\
 \relax[6662,1]&\tikzmark{4L}\alpha^{65}&  0&  0&  0&  0&\tikzmark{4K}\alpha^{145}     \\
 \relax[9201,1]& 0&\tikzmark{4D}\alpha^{121}&  0&  0&\tikzmark{4C}\alpha^{174}&  0     \\
 \relax[12902,1]& 0&  0& \tikzmark{4H}\alpha^{92}& \tikzmark{4G}\alpha^{66}&  0&  0     
\end{array}
\begin{tikzpicture}[remember picture, overlay]
  \draw[thick, blue!50, ->] (pic cs:4A) -- (pic cs:4B);
  \draw[thick, blue!50, ->] (pic cs:4B) -- (pic cs:4C);
  \draw[thick, blue!50, ->] (pic cs:4C) -- (pic cs:4D);
  \draw[thick, blue!50, ->] (pic cs:4D) -- (pic cs:4E);
  \draw[thick, blue!50, ->] (pic cs:4E) -- (pic cs:4F);
  \draw[thick, blue!50, ->] (pic cs:4F) -- (pic cs:4G);
  \draw[thick, blue!50, ->] (pic cs:4G) -- (pic cs:4H);
  \draw[thick, blue!50, ->] (pic cs:4H) -- (pic cs:4I);
  \draw[thick, blue!50, ->] (pic cs:4I) -- (pic cs:4J);
  \draw[thick, blue!50, ->] (pic cs:4J) -- (pic cs:4K);
  \draw[thick, blue!50, ->] (pic cs:4K) -- (pic cs:4L);
  \draw[thick, blue!50, ->] (pic cs:4L) -- (pic cs:4A);
\end{tikzpicture}
\end{align}
Here, the notations $[i, b]=bP+i$, $[i, b]_\Lrm=bP+i$, and $[i, b]_\Rrm=(b+L/2)P+i$ are the same as those introduced in the proof of Theorem~\ref{053905_28Apr25}.  
This submatrix forms a cycle of length 12 $(=2L)$. 
From the order in which the blocks appear, we can identify this as a Tanner cycle belonging to an unavoidable TCBC $\uU(1)$ in $H_\Delta$. 

We have also observed other failure patterns, such as compound graphs formed by two length-12 cycles, or configurations trapped in length-16 cycles composed of \(8 \times 8\) submatrices, which occur with a probability roughly two orders of magnitude lower.
The objective of this section is to propose a post-processing algorithm that rescues the joint BP decoder from such cycle-induced failures where decoding does not succeed due to being trapped in one or more cycles.

In the remainder of this section, we consider a scenario in which the estimate of the X-noise has not yet converged-namely, it is trapped in a combination of a few length-12 cycles-resulting in
\begin{align}
 H_\Delta \xiUH^{(\ell)} \neq \sigmaU.
\end{align}
Failures in estimating Z-noise can be handled in a symmetric manner.

Let $\Cc^{(2L)}$ denote the set of all Tanner cycles of length $2L$ contained in $H_\Delta$ (the definition for $H_\Gamma$ is analogous). Based on Condition~\ref{000952_6Jun25} in Section~\ref{sec:000508_6Jun25}, the matrix $H_\Delta$ constructed by the proposed method is designed so that all such cycles belong to the unavoidable TCBCs $\uU(k)$ for $k \in \{0,1,2\}$. In other words, $\Cc^{(2L)}$ consists solely of the Tanner cycles contained in the unavoidable TCBCs $\uU(k)$ for $k \in \{0,1,2\}$. Since there are $P$ cycles for each value of $k$, the total number of such cycles is $3P$, and hence the size of $\Cc^{(2L)}$ is $3P$.

For a cycle $\Ct \in \Cc^{(2L)}$ of length $2L$, let $K(\Ct)$ and $I(\Ct)$ denote its column index set and row index set, respectively.  
For a collection $\Cc$ of such cycles, we define the combined column and row index sets as  
\begin{align}
  K(\Cc) &\triangleq \bigcup_{\Ct \in \Cc} K(\Ct) = K\Bigl( \bigcup_{\Ct \in \Cc} \Ct \Bigr), \\
  I(\Cc) &\triangleq \bigcup_{\Ct \in \Cc} I(\Ct) = I\Bigl( \bigcup_{\Ct \in \Cc} \Ct \Bigr).
\end{align}
We say that joint BP at iteration $\ell$ or the estimated noise $\hat{\xiU}^{(\ell)}$ is \emph{trapped} in the union of cycles $\Cc \subset \Cc^{(2L)}$ if the following holds:
\begin{align}
  K_\mathrm{err}^{(\ell)} \subset K(\Cc). \label{015037_27May25}
\end{align}
That is, all indices where the estimated noise is incorrect lie within the column support of the cycles in $\Cc$,  
that is,
\begin{align}
  \xi_j^{(\ell)} \neq \xiH_j^{(\ell)} \implies j \in K(\Cc).
\end{align}
Conversely, all indices outside $K(\Cc)$ correspond to correctly estimated noise values,  
that is,
\begin{align}
  j \notin K(\Cc) \implies \xi_j^{(\ell)} = \xiH_j^{(\ell)}.
\end{align}
In~\cite{kasai2025efficient}, only the special case where $\Cc$ consists of a single cycle (i.e., $|\Cc| = 1$) was considered.  

In the error floor region, experiments using the conventional decoding method on the proposed codes revealed the following observation:  
when the decoder stagnated, the estimated noise was typically trapped in the union of a small number of cycles, most of which had length $2L$, and occasionally of length $2(L+2)$ with low probability.  
In this work, we focus on the case where the estimated noise is trapped in a union of length-$2L$ cycles, and propose a post-processing method to address this situation.  
In the next section, we address the problem of estimating the set of cycles in which the decoder becomes stuck and trapped during decoding.  
\subsection{Estimation of the Cycle Set $\Cc$}\label{150240_6Jun25}
In this section, we describe an algorithm for estimating a minimal subset $\Cc \subset \Cc^{(2L)}$ that satisfies condition~\eqref{015037_27May25}, to be applied when the joint BP decoder stagnates after failing to find a noise estimate that satisfies the syndrome condition. Naturally, the decoder does not have access to the true noise $\xiU$ or to the set $K_\mathrm{err}^{(\ell)}$ of misestimated indices.
Instead, we use $K_d^{(\ell)}$, the set of variable nodes whose estimated values have changed during the last $d$ iterations (see~\eqref{022343_27May25}), which can be computed by the decoder.  
Based on $K_d^{(\ell)}$, we estimate the cycle set $\Cc$ according to the following strategy:
\begin{enumerate}
  \item If $K_d^{(\ell)}$ and $K(\Ct)$ share at least two elements, then include $\Ct$ in $\hat{\Cc}$.
  \item Limit the number of cycles in $\hat{\Cc}$ to at most $u$.
\end{enumerate}
This algorithm relies on the fact that no two distinct cycles in $\Cc^{(2L)}$ share more than one common column and row simultaneously.  
The minimality of $\hat{\Cc}$ is controlled by the parameter $u$.

The proposed algorithm is formally described in Algorithm~\ref{021116_27May25}. In the numerical experiments presented in a later section, we set $u = 2$. By leveraging the structure of UTCBCs, the estimation of $\Cc$ can be performed with computational complexity that depends only on the constants $J$, $L$, and $q$, and is independent of $P$ and the total number of physical qubits $n$.
\begin{algorithm}
\caption{Cycle Set Estimation Based on Recent Variable Node Changes}\label{021116_27May25}
\begin{algorithmic}[1]
\State \texttt{$\hat{\Cc}$} $\gets$ \{\}
\ForAll{$j \in$ \texttt{$K^{(\ell)}_d$}}
  \ForAll{$\Ct \in \Cc^{(2L)}$ such that $j\in K(\Ct)$}
    \If{$|K(\Ct)\ \cap K^{(\ell)}_d|\ge 2$}
      \State add $\Ct$ to $\hat{\Cc}$
      \If{$|\hat{\Cc}|>u$}
        \State \Return \textbf{null}
      \EndIf
      \If{$K_d^{(\ell)}\subset K(\hat{\Cc})$}
        \State \Return $\hat{\Cc}$
      \EndIf
    \EndIf
  \EndFor
\EndFor
\State \Return \textbf{null}
\end{algorithmic}
\end{algorithm}
\subsection{Proposed Post-Processing Decoder}\label{150247_6Jun25}
In this section, we propose a post-processing decoding algorithm that supplements the noise estimation when the joint BP decoder stagnates after failing to find a noise estimate that satisfies the syndrome condition.
Note that even if the estimate differs from the true noise  $\xiU \neq \xiUH$, it may still be harmless for codeword recovery.  
Whether decoding is successful is determined by whether the condition $\xiU + \xiUH \in C_\Gamma^\perp$ is satisfied.

In the previous section, we identified a set of cycles $\hat{\Cc}$ from the estimated noise $\xiUH^{(\ell)}$ and its history.  
For brevity, we define $K := K(\hat{\Cc})$. 
Let $\KO := [N] \setminus K$, and denote by $\xiUH^{(\ell)}_{\KO}$ and $(H_\Delta)_{\KO}$ the restriction of $\xiUH^{(\ell)}$ and $H_\Delta$ to the components indexed by $\KO$, respectively.
We have:
\begin{align}
  &\sigmaU = H_\Delta \xiU = (H_\Delta)_K \xiU_K + (H_\Delta)_{\KO} \xiU_{\KO} \label{201851_26May25}
\end{align}

Suppose the estimation $\hat{\Cc}$ is correct; that is, the joint BP algorithm is trapped in $K$, and the noise has been correctly estimated on the complement set $\KO$, i.e., $K_\mathrm{err}^{(\ell)} \subset K$. Equivalently, the following holds:
\begin{align}
 \xiUH^{(\ell)}_{\KO} = \xiU_{\KO} \label{210643_16Mar25}
\end{align}
Then, from \eqref{201851_26May25}, it follows that:
\begin{align}
  &\sigmaU = (H_\Delta)_K \xiU_K + (H_\Delta)_{\KO} \xiUH_{\KO}^{(\ell)}\label{222023_26May25}
\end{align}
Our proposed post-processing decoder solves this linear equation for $\xiU_K$, and assigns the solution as $\xiUH_K$.  
Together with $\xiUH_{\KO}:=\xiUH_{\KO}^{(\ell)}$, this yields the final estimate $\xiUH$.  

From Theorem~\ref{192838_2Jun25}, it is known that any cycle $\Ct$ in the UTCBCs $\uU(0)$ and $\uU(1)$ does not contain any logical errors; that is, $N(\Ct;H_\Delta) \subset C_\Gamma^\perp$.  
As discussed in Discussion~\ref{234642_5Jun25}, the proposed code construction assigns the $\Fb_q$-valued entries so that any cycle in the UTCBC $\uU(2)$ also avoids logical errors; specifically, $N(\Ct;H_\Delta) = \{\zeroU\}$.  
Therefore, if the set $\Cc$ is correctly estimated, it follows that $\xiUH + \xiU$ must lie in $C_\Gamma^\perp$.  

The total number of equations in \eqref{222023_26May25} that involve any of the variables $\xi_j$ for $j \in K$ is at most $|K|$.  
The size of $K$ is upper bounded by $|K| \le uL$, where $u$ is the number of cycles in $\Cc$.  
In the experiments described later, we set $u = 2$.  
Thus, the size of the resulting system of equations is quite small compared to the total number of physical qubits $n$.  
Both the algorithm for checking whether a solution exists and the algorithm for computing one such solution can be executed with computational complexity $O(|K|^3)$.  
Furthermore, the linear equation admits a solution, ensuring that the decoding success condition $\xiU + \xiUH \in C_\Gamma^\perp$ is satisfied.

Algorithm~\ref{003024_3Jun25} outlines the complete decoding procedure, which combines joint BP decoding with the proposed post-processing step. The algorithm begins by running standard joint BP iterations. If a valid noise estimate satisfying the syndrome condition is found, the algorithm terminates and outputs the estimate. Otherwise, it continues iterating.
If the decoder stagnates during the estimation of X-noise or Z-noise, the algorithm invokes a post-processing step. In this step, the decoder estimates a set of cycles $\hat{\Cc}$ responsible for trapping, and attempts to solve a linear system restricted to the support $K(\hat{\Cc})$. If a solution is found, it is assigned to the corresponding part of the noise estimate. The process continues until a valid estimate is obtained or the loop is manually terminated.

\begin{algorithm}
\caption{Joint Belief Propagation + Post-Processing Decoding Algorithm}\label{003024_3Jun25}
\begin{algorithmic}[1]
\State $\ell \gets 0$
\State Initialize with the joint BP algorithm
\While{true}
  \State $\ell \gets \ell + 1$
  \State Perform the $\ell$-th joint BP iteration
  \If{$H_\Delta \xiUH^{(\ell)} = \sigmaU$ and $H_\Gamma \zetaUH^{(\ell)} = \tauU$}
    \State Output the estimated noise $(\xiUH:=\xiUH^{(\ell)}, \zetaUH:=\zetaUH^{(\ell)})$ and terminate
  \EndIf
  \If{$H_\Gamma \xiUH^{(\ell)} \neq \sigmaU$ and joint BP algorithm stagnates in X-noise estimation}
    \State Determine $\hat{\Cc}$ using the cycle-set estimation algorithm for $H_\Delta$
    \If{$\hat{\Cc} \neq \texttt{null}$}
      \State $K := K(\hat{\Cc})$
      \If{$\sigmaU = (H_\Delta)_K \xiU_K + (H_\Delta)_{\KO} \xiUH_{\KO}^{(\ell)}$ has a solution $\xiU_K$}
        \State Assign one such solution to $\xiUH_K$ and $\xiUH_{\KO}:=\xiUH_{\KO}^{(\ell)}$
      \EndIf
    \EndIf
  \EndIf
  \If{$H_\Delta \zetaUH^{(\ell)} \neq \tauU$ and joint BP algorithm stagnates in Z-noise estimation}
    \State Determine $\hat{\Cc}$ using the cycle-set estimation algorithm for $H_\Gamma$
    \If{$\hat{\Cc} \neq \texttt{null}$}
      \State $K := K(\hat{\Cc})$
      \If{$\tauU = (H_\Gamma)_K \zetaU_K + (H_\Gamma)_{\KO} \zetaUH_{\KO}^{(\ell)}$ has a solution $\zetaU_K$}
        \State Assign one such solution to $\zetaUH_K$ and $\zetaUH_{\KO}:=\zetaUH_{\KO}^{(\ell)}$
      \EndIf
    \EndIf
  \EndIf
  \If{$H_\Delta \xiUH = \sigmaU$ and $H_\Gamma \zetaUH = \tauU$}
    \State Output the estimated noise $(\xiUH, \zetaUH)$ and terminate
  \EndIf
\EndWhile
\end{algorithmic}
\end{algorithm}

The proposed post-processing targets a cycle set $\Cc$ composed of cycles of length $2L$. The next shortest cycles are those of length $2(L+2)$. These longer cycles may not be full-rank and could potentially contain logical errors. As will be discussed in the next section, when decoding the proposed codes with the proposed decoder, it is possible for the decoder to stagnate after becoming trapped in cycles of length $2(L+2)$. In such cases, one might consider adopting a decoding strategy that selects the most likely noise consistent with the syndrome. Such a strategy may, by chance, correctly match the true noise and succeed. However, it also risks resulting in an undetected decoding failure. To avoid this risk, the proposed decoder is designed to treat such cases as detectable decoding failures.

\section{Results}\label{043151_25Apr25}
This section presents the results of the proposed code construction and decoding method.  
We begin by describing the construction of the code matrices.  
We then evaluate the decoding performance of the proposed degeneracy-aware decoder through numerical simulations over depolarizing channels.  
Finally, we analyze the minimum distance of the proposed code by enumerating short cycles in the Tanner graph.  

\subsection{Code Construction}
Using Algorithm~\ref{alg:fg_construction}, we constructed APMs $\fU$ and $\gU$, along with the corresponding matrices $\HH_X$ and $\HH_Z$, for several choices of $P$, with fixed parameters $J = 2$ and $L = 6$. Although it would be preferable to select $P$ as a power of two for implementation purposes, we found that doing so made it extremely difficult to simultaneously satisfy both conditions (a) and (b). Our experiments indicate that $P$ must contain multiple small prime factors in order to admit valid sequences $\fU$ and $\gU$. In particular, we were able to identify such sequences more easily when $P$ was a multiple of $384 = 2^7 \cdot 3$, which represents a relatively simple composite case. 
The resulting pairs $(\fU, \gU)$ of APMs are listed in Table~\ref{tab:015542_17Apr25}. 

For the finite field construction, we set $e = 8$ and $q = 2^e$, and defined the field $\mathbb{F}_q$ using the primitive polynomial $x^8 + x^4 + x^3 + x^2 + 1$ over $\mathbb{F}_2$.  
Using Algorithm~\ref{alg:HGamma} and Algorithm~\ref{alg:HDelta}, we constructed the matrices $H_\Gamma$ and $H_\Delta$.  
We used the instance with $P = 6500$, which was the largest value of $P$ in Table~\ref{tab:015542_17Apr25} for which both Algorithm~\ref{alg:HGamma} and Algorithm~\ref{alg:HDelta} successfully completed.  
Specifically, the algorithms succeeded for $P = 2^i \cdot 3$ with $i = 7, 8, \ldots, 11$, but failed for $P = 2^{12} \cdot 3$, where they were unable to generate $H_\Gamma$ and $H_\Delta$.  

Each nonzero entry in $H_\Gamma$ and $H_\Delta$ is then replaced by its corresponding $e \times e$ binary companion matrix (or its transpose), following the method described in Section \ref{153729_6Jun25}, to obtain the binary matrices $H_X$ and $H_Z$.

\subsection{Numerical Simulation}\label{134114_16Jun25}
We conducted numerical experiments over depolarizing channels with physical error rate  $p_D$. 
The proposed decoding method was implemented using Algorithm~\ref{003024_3Jun25}, and decoding success or failure was determined using the degeneracy-aware criterion described in Section~\ref{013101_9Jun25}.

\begin{figure}
  \centering
  \includegraphics[width=0.95\linewidth]{}
\caption{
Comparison of decoding performance between the proposed $[[312000, 104000, \le 14]]$ code and the conventional $[[312000, 104000, \le 10]]$ code~\cite{komoto2024quantumerrorcorrectionnear}, both having the same code length and rate $R = 1/3$.
The proposed code reaches FER $= 10^{-4}$ at $p_D = 9.45\%$, a noise level typically unachievable by conventional codes with such high rate,
highlighting the effectiveness of our construction in pushing the performance toward the hashing bound.
}
 \label{fig:comparison}
\end{figure}

Figure~\ref{fig:comparison} shows the decoding performance of the proposed code with $P = 6500$ using the degeneracy-aware decoder.  
The number of physical qubits is $n = eLP = 312000$.  
At a coding rate of $R = 1/3$ and physical error rate $p_D = 9.45\%$, the decoder achieves a frame error rate (FER) of $10^{-4}$.  
This noise level is comparable to the threshold of surface codes, which operate at near-zero rates and typically cannot achieve this performance.  
Reaching FER $= 10^{-4}$ at this noise level with $R = 1/3$ highlights the robustness of the proposed scheme.  
Moreover, the decoding complexity scales linearly with the number of physical qubits $n$.  

For comparison, we also plot the decoding performance of a code constructed using the method of~\cite{komoto2024quantumerrorcorrectionnear} with the same parameters $(P, q, J, L)$.  
The conventional decoding method corresponds to joint BP decoding without any post-processing.

The proposed method exhibits a slightly inferior performance in the waterfall region compared to the conventional one.
This behavior is consistent with a well-known phenomenon in LDPC code design: algebraically constructed LDPC codes, which are designed to achieve large girth and minimum distance, tend to exhibit better performance in the error floor region but may perform worse in the waterfall region~\cite{lin2004structured}.
Conversely, randomly constructed LDPC codes generally perform well in the waterfall region, but they often suffer from higher error floors.

The strategy discussed in the previous section functioned as intended: all decoding failures were detectable.  
That is, there were no cases in which the estimated noise $\xiUH$ satisfied $H_\Delta \xiUH = \sigma$ while $\xiUH$ belonged to $C_\Delta \setminus C_\Gamma^\perp$.
All observed decoding failures in the error floor region were caused by errors trapped in cycles of length $2(L+2)$.  
Once such a structure was detected, the decoder was configured to declare a decoding failure.
\subsection{Computing Upper Bound of Minimum Distance}
For the proposed code $(C_X, C_Z)$, we evaluate an upper bound on the minimum distance in this section.  
It is known that, for binary LDPC codes with column weight 2, all codewords can be constructed from the union of cycles~\cite{1715529}.  
This also holds for nonbinary LDPC codes.  

Our approach for computing an upper bound on the minimum distance $d_X$ is as follows.  
The same procedure is applied to compute an upper bound on $d_Z$.  
\begin{enumerate}
 \item First, we enumerate short Tanner cycles $\Ct$ in $H_\Gamma$. 
 \item Next, for each cycle $\Ct$, we enumerate nonzero codewords $\xiU \in N(\Ct;H_\Gamma)$ that are not contained in $C_\Delta^\perp$.
 \item Among all such codewords, we take the smallest bitwise Hamming weight as an upper bound on $d_X$.
\end{enumerate}

Both $C_X$ and $C_Z$ are designed to have girth 12.  
That is, cycles of length 12 are the shortest cycles in the Tanner graphs of the codes.  
For any cycle $\Ct$ of length 12, we have constructed the code such that the corresponding subcode $N(\Ct;H_\Gamma)$ is either contained in $C_\Delta^\perp$ or has dimension zero.  
Hence, such cycles do not contribute to the minimum distance $d_X$ of $C_X$ (see \eqref{def:d_X}).  
The same holds for $d_Z$ of $C_Z$.

We now consider the next shortest cycles, those of length 16.  
We first enumerated all length-16 cycles $\Ct$ in the Tanner graph of $C_X$.  
For each cycle, we computed the determinant of the corresponding submatrix.  
This determinant is either $L$ or $L - 1$.  
In the former case, the subcode $N(\Ct;H_\Gamma) \subset C_\Gamma$ contains only the zero vector;  
in the latter case, it contains $q - 1$ nonzero codewords.  
We take the minimum bitwise Hamming weight among these nonzero codewords as an upper bound on $d_X$.  
An upper bound on $d_Z$ of $C_Z$ is computed in the same manner.  

Using this approach, we evaluated an upper bound on the minimum distance of the proposed code.  
In total, we identified 1402 and 1396 length-16 cycles whose determinant is $L - 1$ in $H_\Gamma$ and $H_\Delta$, respectively.  
By computing the bitwise weights of the corresponding codewords-excluding those that lie in $C_\Delta^\perp$ and $C_\Gamma^\perp$, respectively-we obtained the weight distributions $A_X(w)$ and $A_Z(w)$, as shown in Table~\ref{weight_dist}.  

For comparison, we also computed $A_X(w)$ and $A_Z(w)$ for the conventional codes used in Section~\ref{134114_16Jun25}.  
In the conventional code, there exist length-12 cycles $\Ct$ in the UTCBC $\uU(2)$ in $H_\Gamma$ such that $N(\Ct;H_\Gamma)$ contains nonzero codewords that are not included in $C_\Delta^\perp$.  
These codewords were enumerated and included in the weight distribution $A_X(w)$.  
The distribution $A_Z(w)$ was computed in the same manner.  

In Table~\ref{weight_dist}, we list the weight distributions of the proposed and conventional codes.  
The upper bounds on the minimum bitwise distance are 14 and 10 for the proposed and conventional codes, respectively.  
The conventional code contains many low-weight codewords, which are likely responsible for the observed error floor.  
In contrast, the proposed code includes fewer low-weight codewords arising from short cycles, which we believe is the key reason why it avoids the error floor.  

We note that the minimum bitwise weight observed among the enumerated length-16 cycles is 14.  
While this value provides an upper bound on the minimum distance, it may not be exact:  
it is possible that codewords arising from longer cycles or from combinations of cycles in the Tanner graph have lower bitwise Hamming weights, despite having $\mathbb{F}_q$ weights of at least 10.  
Nonetheless, due to the use of a relatively large field size $q$, we believe that the bound is likely to be tight.

\begin{table}[t]
  \centering
\caption{Weight distribution of codewords formed from the short cycles. Here, $A_X(w)$ and $A_Z(w)$ denote the number of codewords of weight $w$ in $C_X \setminus C_Z^\perp$ and $C_Z \setminus C_X^\perp$, respectively.}
\label{weight_dist}
  \renewcommand{\arraystretch}{0.95}
\begin{tabular}{c|c|c||c|c}
&\multicolumn{2}{c}{\text{Prop. }}&\multicolumn{2}{c}{\text{Conv. }}  \\\hline
    \hline
    $w$ & $A_X(w)$ & $A_Z(w)$& $A_X(w)$ & $A_Z(w)$\\
    \hline
10 & 0 & 0 & 2 & 4 \\
11 & 0 & 0 & 8 & 4 \\
12 & 0 & 0 & 35 & 33 \\
13 & 0 & 0 & 109 & 91 \\
14 & 0 & 2 & 257 & 262 \\
15 & 0 & 6 & 593 & 647 \\
16 & 4 & 9 & 1255 & 1354 \\
17 & 9 & 12 & 2674 & 2520 \\
18 & 33 & 46 & 4751 & 4747 \\
19 & 111 & 114 & 7530 & 7721 \\
20 & 266 & 274 & 11363 & 11249 \\
21 & 643 & 611 & 15059 & 15276 \\
22 & 1277 & 1276 & 18985 & 18985 \\
23 & 2509 & 2360 & 21517 & 21733 \\
24 & 4394 & 4154 & 22499 & 22916 \\
25 & 7350 & 7245 & 21497 & 21693 \\
26 & 10952 & 11102 & 19119 & 19543 \\
27 & 15725 & 15624 & 15532 & 15993 \\
28 & 21109 & 21190 & 12019 & 11757 \\
29 & 26060 & 26526 & 8061 & 8259 \\
30 & 31305 & 30931 & 5246 & 5156 \\
31 & 34721 & 34512 & 2989 & 2941 \\
32 & 36401 & 35851 & 1614 & 1642 \\
33 & 35065 & 34931 & 793 & 746 \\
34 & 32146 & 32092 & 346 & 344 \\
35 & 27484 & 27441 & 132 & 152 \\
36 & 22073 & 22069 & 45 & 49 \\
37 & 16718 & 16730 & 15 & 14 \\
38 & 11865 & 11859 & 6 & 7 \\
39 & 8108 & 7856 & 3 & 1 \\
40 & 4962 & 5071 & 1 & 1 \\
41 & 2926 & 2919 & 0 & 0 \\
42 & 1667 & 1666 & 0 & 0 \\
43 & 898 & 821 & 0 & 0 \\
44 & 412 & 393 & 0 & 0 \\
45 & 199 & 181 & 0 & 0 \\
46 & 73 & 64 & 0 & 0 \\
47 & 28 & 26 & 0 & 0 \\
48 & 9 & 9 & 0 & 0 \\
49 & 6 & 5 & 0 & 0 \\
50 & 2 & 2 & 0 & 0 \\
\hline
\text{total} & $255\times$1402 & $255\times$1396 & $255\times$761 & $255\times$768 \\
\end{tabular}
\end{table}
\section{Conclusion and Future Work}\label{011317_4Jun25}

In this paper, we proposed a construction and decoding method for quantum error correction using non-binary LDPC codes, with a focus on explicitly exploiting degeneracy.  
We introduced key design criteria to control cycle structure in the parity-check matrices and developed a degeneracy-aware post-processing decoder to address decoding stagnation.  

Our construction utilizes APMs to ensure algebraic structure while maintaining randomness through selective sampling.  
This decoder dynamically estimates the subset of cycles responsible for the stagnation and corrects errors via constrained linear solving.  

Simulation results demonstrated that our code achieves FER of $10^{-4}$ at a physical error rate of $9.45\%$ for rate-$1/3$ codes,  
achieving performance comparable to that of surface codes in the error floor regime while maintaining higher coding rates.  
Furthermore, upper bounds on the minimum distance derived from cycle enumeration confirmed the structural advantages of the proposed code over conventional constructions.  
Importantly, the decoding complexity scales linearly with the number of physical qubits, making the proposed method suitable for large-scale implementation.  

Future work includes extending the proposed framework to support lower-rate quantum LDPC codes and exploring the feasibility of applying the proposed degeneracy-aware decoding strategy to binary LDPC codes.  
These directions may further broaden the applicability of our approach and enable practical deployment in systems where qubit and complexity constraints are stringent.  

 \section*{Acknowledgment}
 This study was carried out using the TSUBAME4.0 supercomputer at Institute of Science Tokyo.
\appendices
\section{Companion Matrix Representation over Finite Fields}\label{234926_5Apr25}
This section introduces the companion matrix representation of finite field elements over $\mathbb{F}_{2^e}$. We begin by reviewing the vector representation of field elements via the mapping $\vU$, and then define the companion matrix $A(\alpha)$ associated with a primitive element $\alpha$. The main result is that multiplication in $\mathbb{F}_{2^e}$ can be translated into matrix-vector multiplication over $\mathbb{F}_2$, providing a linear-algebraic framework that is crucial for decoding operations in quantum LDPC codes. Several theorems are presented to formalize the algebraic properties of these representations, including linearity, multiplicativity, and the equivalence between field equations and their matrix representations.

Let $q = 2^e$. It is well-known that elements of the finite field $\mathbb{F}_q$ can be commonly represented either as binary vectors of length $e$ or as polynomials of degree less than $e$ with binary coefficients.
Let $\alpha \in \mathbb{F}_q$ be a primitive element. Any element $\gamma \in \mathbb{F}_q$ can be uniquely written in the form 
\begin{align}
 \gamma = \sum_{j=0}^{e-1} g_j \alpha^j.
\end{align} 
We define the mapping $\vU : \mathbb{F}_q \to \mathbb{F}_2^e$ by
\[
\vU(\gamma) := (g_0, g_1, \ldots, g_{e-1})^{\top} \in \mathbb{F}_2^e.
\]

\begin{example}\label{ex:023559_22Mar25}
For $e=8$, using the primitive polynomial $x^8 + x^4 + x^3 + x^2 + 1$ over $\mathbb{F}_2$, we have:
\begin{align}
 \begin{array}{ll}
 \vU(0) &= (00000000)^\top, \\
 \vU(1) &= (10000000)^\top, \\
 \vU(\alpha) &= (01000000)^\top, \\
 \vU(\alpha^{e-1}) &= (00000001)^\top, \\
 \vU(\alpha^{e}) &= (10110000)^\top, \\
 \end{array}
\end{align}
\end{example}

\begin{teiri}
The map $\vU$ defines an isomorphism of vector spaces over $\mathbb{F}_2$, and also a group isomorphism under addition:
\[
\vU(\gamma_1 + \gamma_2) = \vU(\gamma_1) + \vU(\gamma_2), \quad \forall \gamma_1, \gamma_2 \in \mathbb{F}_q.
\]
\end{teiri}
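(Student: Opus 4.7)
The plan is to verify that $\vU$ is precisely the coordinate map associated with the ordered $\Fb_2$-basis $\{1,\alpha,\alpha^2,\ldots,\alpha^{e-1}\}$ of $\Fb_q$. Once this identification is established, the additive (and in fact fully $\Fb_2$-linear) isomorphism property follows from the standard universal property of a basis. The group isomorphism under addition is then an immediate consequence, since any $\Fb_2$-linear bijection is in particular an additive group isomorphism.

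First, I would verify that $\{1,\alpha,\ldots,\alpha^{e-1}\}$ is an $\Fb_2$-basis of $\Fb_q$. Since $[\Fb_q:\Fb_2]=e$, it suffices to prove linear independence. Because $\alpha$ is primitive, its multiplicative order equals $2^e-1$, so $\alpha$ does not belong to any proper subfield $\Fb_{2^d}$ with $d\mid e$ and $d<e$ (otherwise its order would divide $2^d-1<2^e-1$). Consequently, the minimal polynomial of $\alpha$ over $\Fb_2$ has degree exactly $e$, which is equivalent to the linear independence of $\{1,\alpha,\ldots,\alpha^{e-1}\}$ over $\Fb_2$. Counting dimensions, this set is therefore a basis.

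Second, the basis property gives a unique expansion $\gamma=\sum_{j=0}^{e-1} g_j\alpha^j$ for every $\gamma\in\Fb_q$, so $\vU$ is well-defined on all of $\Fb_q$ and injective; surjectivity onto $\Fb_2^e$ is immediate because every binary coefficient tuple yields an element of $\Fb_q$. For additivity, take $\gamma_1=\sum_j g_j^{(1)}\alpha^j$ and $\gamma_2=\sum_j g_j^{(2)}\alpha^j$; then by commutativity and distributivity in $\Fb_q$,
\[
\gamma_1+\gamma_2=\sum_{j=0}^{e-1}\bigl(g_j^{(1)}+g_j^{(2)}\bigr)\alpha^j,
\]
and uniqueness of the coordinate expansion forces $\vU(\gamma_1+\gamma_2)=\vU(\gamma_1)+\vU(\gamma_2)$, where the addition on the right is componentwise in $\Fb_2^e$.

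There is no genuine obstacle here; the result is essentially a reformulation of the universal property of a basis. The one point deserving emphasis is the use of primitivity to ensure that $1,\alpha,\ldots,\alpha^{e-1}$ are $\Fb_2$-linearly independent, since for a non-primitive $\alpha$ this set could generate only a proper subfield and the claim would fail. With that caveat recorded, the proof reduces to a one-line invocation of basis uniqueness.
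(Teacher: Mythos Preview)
Your proof is correct and follows essentially the same approach as the paper: both argue that $\{1,\alpha,\ldots,\alpha^{e-1}\}$ is an $\Fb_2$-basis of $\Fb_q$, write $\gamma_1,\gamma_2$ in coordinates, and read off additivity from uniqueness of the coefficient expansion. The only difference is that you justify the basis property explicitly via primitivity and the degree of the minimal polynomial, whereas the paper simply asserts it as well known.
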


\begin{proof}
Let $\gamma_1, \gamma_2 \in \mathbb{F}_q$. Since $\mathbb{F}_q = \mathbb{F}_{2^e}$ is an $\mathbb{F}_2$-vector space with basis $\{1, \alpha, \ldots, \alpha^{e-1}\}$, each $\gamma_i$ has a unique representation
\[
\gamma_1 = \sum_{j=0}^{e-1} g_j^{(1)} \alpha^j, \quad
\gamma_2 = \sum_{j=0}^{e-1} g_j^{(2)} \alpha^j \quad (g_j^{(i)} \in \mathbb{F}_2).
\]
Thus,
\[
\gamma_1 + \gamma_2 = \sum_{j=0}^{e-1} ( g_j^{(1)} + g_j^{(2)} ) \alpha^j.
\]
Hence,
\[
\vU(\gamma_1 + \gamma_2) = ( g_0^{(1)} + g_0^{(2)}, \ldots, g_{e-1}^{(1)} + g_{e-1}^{(2)} )^{\top}
= \vU(\gamma_1) + \vU(\gamma_2).
\]
Therefore, $\vU$ is a homomorphism with respect to addition, and it is a linear isomorphism over $\mathbb{F}_2$, as it extracts coefficients from a linear combination and both $\mathbb{F}_q$ and $\mathbb{F}_2^e$ have the same dimension.
\end{proof}

They can also be naturally represented using $e \times e$ binary matrices.
\begin{df}
Let $a(x) = a_0 + a_1 x + \cdots + a_e x^e$ with $a_0 = a_e = 1$ be the primitive polynomial of $\alpha \in \mathbb{F}_q$. The {\itshape companion matrix} $A(\alpha)$ \cite{macwilliams77} is defined as:
\begin{align}
    A(\alpha) &\defeq
\bigl(\vU(\alpha^1),\vU(\alpha^2),\ldots,\vU(\alpha^{e})\bigr)\label{153337_11May25}
\\&=
    \begin{pmatrix}
        0 & 0 & 0 & 0 & a_0 \\
        1 & 0 & 0 & 0 & a_1 \\
        0 & 1 & 0 & 0 & a_2 \\
        \vdots & \vdots & \ddots & \vdots & \vdots \\
        0 & 0 & 0 & 1 & a_{e-1}
    \end{pmatrix}.\label{031734_22Mar25}
\end{align}
We define the mapping \(A: \Fb_q \to \Fb_2^{e \times e}\) by
\begin{align}
    A(0) &\defeq O, \\
    A(\alpha^l) &\defeq A(\alpha)^l, \quad \text{for } l = 0, \ldots, q-2.
\end{align}
\end{df}

The mapping \(A: \mathbb{F}_q \to \mathbb{F}_2^{e \times e}\) introduced above has important algebraic properties. Specifically, it preserves the algebraic structure of the finite field $\mathbb{F}_q$, as demonstrated in the following theorem.
\begin{teiri}\label{163515_22Mar25}
As shown in \cite{macwilliams77}, the map $A$ is injective, and its image $A(\mathbb{F}_q)$ forms a field isomorphic to $\mathbb{F}_q$. The map satisfies the following properties for all $\gamma_1, \gamma_2, \delta \in \mathbb{F}_q$: multiplicativity, commutativity, and additivity:
\begin{align}
&     A(\gamma_1 \gamma_2) = A(\gamma_1)A(\gamma_2) = A(\gamma_2)A(\gamma_1), \label{044508_22Mar25}\\
&     A(\gamma)\vU(\delta) = \vU(\gamma\delta), \label{044514_22Mar25}\\
&     A(\gamma_1 + \gamma_2) = A(\gamma_1) + A(\gamma_2). \label{044517_22Mar25}
\end{align}
\end{teiri}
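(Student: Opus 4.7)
The plan is to prove the three identities in the order (\ref{044514_22Mar25}), (\ref{044508_22Mar25}), (\ref{044517_22Mar25}), since the second identity is the most fundamental one and can be established directly from the definition of the companion matrix, after which the other two follow by short linear-algebra arguments. Injectivity and the claim that $A(\Fb_q)$ is a subfield isomorphic to $\Fb_q$ will then be immediate consequences.

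First, I would prove (\ref{044514_22Mar25}), that is, $A(\gamma)\vU(\delta)=\vU(\gamma\delta)$. For $\gamma=0$ both sides vanish, so take $\gamma=\alpha^l$ with $l\in[q-1]$. The key observation comes from the column decomposition (\ref{153337_11May25}): the $j$-th column of $A(\alpha)$ is exactly $\vU(\alpha^{j+1})$ for $j\in[e]$. Hence, for $\delta=\alpha^j$ with $j\in[e]$, the identity $A(\alpha)\vU(\alpha^j)=\vU(\alpha^{j+1})$ holds by definition; for $j=e-1$ this must be interpreted via the reduction by the primitive polynomial, which is built into the last column of $A(\alpha)$ through the coefficients $a_0,\ldots,a_{e-1}$. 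Extending by $\Fb_2$-linearity on the $\delta$ side (using the basis $\{\vU(\alpha^j)\}_{j\in[e]}$ of $\Fb_2^e$) gives $A(\alpha)\vU(\delta)=\vU(\alpha\delta)$ for every $\delta\in\Fb_q$. Then an induction on $l$, using $A(\alpha^{l+1})=A(\alpha)\,A(\alpha)^l$ from the definition, yields $A(\alpha^l)\vU(\delta)=\vU(\alpha^l\delta)$ for all $l$ and $\delta$.

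Once (\ref{044514_22Mar25}) is in hand, the multiplicativity (\ref{044508_22Mar25}) follows by applying both sides to an arbitrary column vector $\vU(\delta)$:
\begin{align}
A(\gamma_1)A(\gamma_2)\vU(\delta)=A(\gamma_1)\vU(\gamma_2\delta)=\vU(\gamma_1\gamma_2\delta)=A(\gamma_1\gamma_2)\vU(\delta).
\end{align}
Since $\{\vU(\delta):\delta\in\Fb_q\}$ is all of $\Fb_2^e$, the matrices agree. Commutativity then follows from commutativity of multiplication in $\Fb_q$. For additivity (\ref{044517_22Mar25}), the same strategy works: apply both matrices to $\vU(\delta)$ and use (\ref{044514_22Mar25}) together with the $\Fb_2$-linearity of $\vU$ established in the preceding theorem, giving
\begin{align}
A(\gamma_1+\gamma_2)\vU(\delta)=\vU((\gamma_1+\gamma_2)\delta)=\vU(\gamma_1\delta)+\vU(\gamma_2\delta)=\bigl(A(\gamma_1)+A(\gamma_2)\bigr)\vU(\delta).
\end{align}
Injectivity of $A$ is then immediate: if $A(\gamma)=O$, applying it to $\vU(1)$ gives $\vU(\gamma)=\0U$, hence $\gamma=0$. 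The claim that $A(\Fb_q)$ is a subfield isomorphic to $\Fb_q$ follows from (\ref{044508_22Mar25}), (\ref{044517_22Mar25}), and injectivity.

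The only non-routine step is the base case of (\ref{044514_22Mar25}) at the boundary $j=e-1$, where the action of $A(\alpha)$ on $\vU(\alpha^{e-1})$ must produce $\vU(\alpha^e)$ after reduction modulo the primitive polynomial $a(x)$. I would handle this explicitly by noting that the last column of $A(\alpha)$ in (\ref{031734_22Mar25}) is $(a_0,a_1,\ldots,a_{e-1})^\top$, which by the relation $\alpha^e=\sum_{i=0}^{e-1}a_i\alpha^i$ (a consequence of $a(\alpha)=0$ and $a_e=1$ in characteristic two) equals $\vU(\alpha^e)$; this makes the column identification consistent with (\ref{153337_11May25}) and completes the base of the induction. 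Everything else is bookkeeping.
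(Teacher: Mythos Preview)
Your proof is correct and follows essentially the same approach as the paper: both hinge on the column identification $A(\alpha)\vU(\alpha^j)=\vU(\alpha^{j+1})$ for $j\in[e]$, extend by $\Fb_2$-linearity and induction to obtain (\ref{044514_22Mar25}), and derive additivity by testing against $\vU(\delta)$. The only minor difference is ordering: the paper obtains multiplicativity (\ref{044508_22Mar25}) directly from the definition $A(\alpha^k)=A(\alpha)^k$ in one line, whereas you deduce it from (\ref{044514_22Mar25}) by the same ``apply to $\vU(\delta)$'' trick used for additivity---either route is immediate.
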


\begin{proof}
The case for the zero element is trivial, so we assume that $\gamma_1, \gamma_2, \delta \in \mathbb{F}_q^\times$.
Let $\eU^{(i)}$ denote the standard basis vector of $\mathbb{F}_2^e$ whose $i$th component is 1 and the others are 0. By the definition of $\vU$, we have $\vU(\alpha^i) = \eU^{(i)}$. Furthermore, by the definition of the companion matrix $A(\alpha)$, its action on $\eU^{(i)}$ corresponds to the $i$th column of $A(\alpha)$. Hence, we have
\begin{align}
  A(\alpha)\vU(\alpha^i) = \vU(\alpha^{i+1}) \quad \text{for all } i \in [e]. \label{025939_22Mar25}
\end{align}
Let $\gamma_1 = \alpha^i$ and $\gamma_2 = \alpha^j$. Then, since $A(\alpha^k) = A(\alpha)^k$ for all $k \in \mathbb{Z}$, we have
\[
A(\gamma_1) A(\gamma_2) = A(\alpha)^i A(\alpha)^j = A(\alpha)^{i+j} = A(\alpha^{i+j}) = A(\gamma_1 \gamma_2),
\]
which establishes both multiplicativity and commutativity.
To prove \eqref{044514_22Mar25}, suppose $\gamma = \alpha^j$ and $\delta = \alpha^i$. Then,
\[
A(\gamma)\vU(\delta) = A(\alpha^j)\vU(\alpha^i) = A(\alpha)^j \vU(\alpha^i) = \vU(\alpha^{i+j}) = \vU(\gamma \delta),
\]
where we used \eqref{025939_22Mar25} repeatedly.
To prove \eqref{044517_22Mar25}, note that for any $\delta \in \mathbb{F}_q$, we compute:
\[
(A(\gamma_1) + A(\gamma_2))\vU(\delta)
= A(\gamma_1)\vU(\delta) + A(\gamma_2)\vU(\delta)
= \vU(\gamma_1 \delta) + \vU(\gamma_2 \delta)
= \vU((\gamma_1 + \gamma_2)\delta)
= A(\gamma_1 + \gamma_2)\vU(\delta).
\]
Since $\vU$ is injective, the equality
\[
A(\gamma_1 + \gamma_2) = A(\gamma_1) + A(\gamma_2)
\]
follows.
\end{proof}

The matrix representation $A(\gamma)$ not only captures the algebraic operations in $\mathbb{F}_q$, but also provides a useful tool for performing field multiplication and addition in terms of binary vector and matrix operations. In particular, the mapping $A$ allows us to translate scalar equations over $\mathbb{F}_q$ into linear-algebraic expressions over $\mathbb{F}_2$ via the map $\vU$.

The following theorem illustrates a basic but important connection between the vector representation $\vU(\alpha^i)$ and the matrix representation $A(\alpha^i)$.
\begin{teiri}
The vector $\vU(\alpha^i)$ agrees with the leftmost column of $A(\alpha^i)$.
\end{teiri}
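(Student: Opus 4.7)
The plan is to deduce this from the already established identity $A(\gamma)\vU(\delta)=\vU(\gamma\delta)$ (equation \eqref{044514_22Mar25} of Theorem~\ref{163515_22Mar25}) by specializing $\delta=1$. The leftmost column of any matrix $M\in\mathbb{F}_2^{e\times e}$ coincides with the product $M\eU^{(0)}$, where $\eU^{(0)}$ denotes the standard basis vector whose $0$-th component equals $1$ and whose remaining components vanish. So the whole task is to identify $\eU^{(0)}$ with $\vU(1)$, and then apply the multiplicative identity.

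First I would observe that, by the definition of $\vU$, the field element $1=\alpha^0$ has coefficient expansion $1\cdot \alpha^0 + 0\cdot \alpha^1 + \cdots + 0\cdot \alpha^{e-1}$, so that $\vU(1)=\eU^{(0)}$. Next, I apply \eqref{044514_22Mar25} with $\gamma=\alpha^i$ and $\delta=1$ to obtain
\begin{align}
A(\alpha^i)\,\vU(1) \;=\; \vU(\alpha^i\cdot 1)\;=\;\vU(\alpha^i).
\end{align}
Finally, the left-hand side is $A(\alpha^i)\eU^{(0)}$, which by elementary linear algebra is exactly the leftmost column of $A(\alpha^i)$. Combining these gives the claim.

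There is no real obstacle here: the statement is essentially a one-line corollary of Theorem~\ref{163515_22Mar25}, and the only subtlety is recognizing that selecting the first column of a matrix is the same as multiplying it by $\eU^{(0)}=\vU(1)$. The case $i=1$ can also be verified directly from the explicit form of the companion matrix in \eqref{153337_11May25}, which already displays its columns as $\vU(\alpha^1),\vU(\alpha^2),\ldots,\vU(\alpha^{e})$; this provides a sanity check consistent with the general argument above.
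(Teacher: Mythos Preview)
Your proof is correct and follows essentially the same approach as the paper: specialize \eqref{044514_22Mar25} with $\gamma=\alpha^i$, $\delta=1$, observe that $\vU(1)=\eU^{(0)}$, and conclude that $A(\alpha^i)\eU^{(0)}=\vU(\alpha^i)$ is the leftmost column. The added sanity check via \eqref{153337_11May25} is a nice touch but not needed.
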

\begin{proof}
From the multiplication property,
\[
A(\gamma_1)\vU(\gamma_2) = \vU(\gamma_1 \gamma_2).
\]
Let $\gamma_1 = \alpha^i$ and $\gamma_2 = 1 = \alpha^0$. Then
\[
A(\alpha^i)\vU(1) = \vU(\alpha^i).
\]
Since $\vU(1) = (1, 0, \ldots, 0)^{\top}$, this implies that $\vU(\alpha^i)$ agrees with the first column of $A(\alpha^i)$, which completes the proof.
\end{proof}

The next theorem provides an equivalence between scalar linear combinations over $\mathbb{F}_q$ and their binary vector representations using matrix multiplication. This property will be useful when solving equations or decoding in vectorized form.
\begin{teiri}\label{013653_26Mar25}
The following equivalence holds for any \(\delta_j, \xi_j, \sigma \in \Fb_q\):
\begin{align}
\sum_j \delta_j \xi_j = \sigma 
\quad \text{ if and only if }\quad 
\sum_j A(\delta_j)\, \vU(\xi_j) = \vU(\sigma).
\end{align}
\end{teiri}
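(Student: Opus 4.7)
The plan is to reduce the statement to a direct chain of equivalences built from the three algebraic properties of Theorem \ref{163515_22Mar25}, namely the multiplicative action $A(\gamma)\vU(\delta) = \vU(\gamma\delta)$ in \eqref{044514_22Mar25}, the additivity $A(\gamma_1 + \gamma_2) = A(\gamma_1) + A(\gamma_2)$ in \eqref{044517_22Mar25}, and the fact that $\vU$ is an $\Fb_2$-linear isomorphism (hence additive and injective). No new algebraic machinery is required beyond these.

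For the forward direction, I would start from $\sum_j \delta_j \xi_j = \sigma$ and apply $\vU$ to both sides. Additivity of $\vU$ turns the left-hand side into $\sum_j \vU(\delta_j \xi_j)$, and then a termwise application of \eqref{044514_22Mar25} rewrites each summand as $A(\delta_j)\vU(\xi_j)$, yielding $\sum_j A(\delta_j)\vU(\xi_j) = \vU(\sigma)$. For the converse, I would run the same chain in reverse: starting from $\sum_j A(\delta_j)\vU(\xi_j) = \vU(\sigma)$, apply \eqref{044514_22Mar25} and additivity of $\vU$ to obtain $\vU\bigl(\sum_j \delta_j \xi_j\bigr) = \vU(\sigma)$, and then invoke injectivity of $\vU$ to recover the scalar identity in $\Fb_q$.

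There is no significant obstacle here; the only point that deserves explicit mention is that each transformation is a genuine equivalence rather than a one-way implication, which hinges on the bijectivity of $\vU$ between $\Fb_q$ and $\Fb_2^{e}$. Accordingly, the proof can be presented as a short two-way computation, and it is natural to formulate it as a single display of equivalences:
\begin{align}
\sum_j \delta_j \xi_j = \sigma
&\iff \vU\Bigl(\sum_j \delta_j \xi_j\Bigr) = \vU(\sigma) \\
&\iff \sum_j \vU(\delta_j \xi_j) = \vU(\sigma) \\
&\iff \sum_j A(\delta_j)\vU(\xi_j) = \vU(\sigma),
\end{align}
where the first equivalence uses injectivity of $\vU$, the second uses its additivity, and the third uses \eqref{044514_22Mar25}.
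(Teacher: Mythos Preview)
Your proof is correct and follows essentially the same approach as the paper: both use the multiplicative property \eqref{044514_22Mar25} to rewrite $A(\delta_j)\vU(\xi_j)$ as $\vU(\delta_j\xi_j)$, then additivity of $\vU$ to collapse the sum, and finally injectivity of $\vU$ to obtain the equivalence. The only cosmetic difference is that you present the argument as a chain of iff's while the paper first computes $\sum_j A(\delta_j)\vU(\xi_j) = \vU\bigl(\sum_j \delta_j\xi_j\bigr)$ and then invokes injectivity; also note that the additivity of $A$ in \eqref{044517_22Mar25}, which you list in your plan, is not actually needed here.
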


\begin{proof}
Using the additivity of \(\vU\) and the multiplicative property \(A(\gamma)\vU(\delta) = \vU(\gamma \delta)\) from \eqref{044514_22Mar25}, we compute:
\begin{align}
\sum_j A(\delta_j)\, \vU(\xi_j) 
&= \sum_j \vU(\delta_j \xi_j) \notag\\
&= \vU\Bigl(\sum_j \delta_j \xi_j\Bigr). \notag
\end{align}
Therefore, the identity \(\sum_j A(\delta_j)\, \vU(\xi_j) = \vU(\sigma)\) holds  
if and only if \(\vU\left(\sum_j \delta_j \xi_j\right) = \vU(\sigma)\).  
Since \(\vU\) is injective, this is equivalent to \(\sum_j \delta_j \xi_j = \sigma\).
\end{proof}

\section{Finite Field Generated by \(A(\alpha)^\T\)}\label{234956_5Apr25}

In the previous section, we studied the algebraic structure of the field generated by the companion matrix \(A(\alpha)\), which faithfully represents multiplication in $\Fb_q$. In this section, we turn our attention to the transpose matrix \(A(\alpha)^\T\), which also generates a field isomorphic to \(\Fb_q\).

Our goal is to examine the algebraic properties of this transposed field representation and to define a new binary mapping \(\wU(\cdot)\) that works analogously with \(A(\cdot)^\T\), in a manner similar to \(\vU(\cdot)\) and \(A(\cdot)\) discussed previously. This will be important, for example, in dual code constructions or applications where right multiplication by vectors is preferable.
Specifically, we aim to identify the corresponding map \(\wU(\cdot)\) that yields similar algebraic properties when used in conjunction with \(A(\cdot)^\T\), analogous to those established in Theorem~\ref{163515_22Mar25}.

\begin{df}
Define the mapping \(A^\T: \Fb_q \to \Fb_2^{e \times e}\) as follows:
\begin{align}
    A^\T(0) &\defeq O, \\
    A^\T(\alpha^i) &\defeq (A(\alpha)^\T)^i = (A(\alpha)^i)^\T \quad \text{for } i = 0, \ldots, q-2.
\end{align}
\end{df}

The matrix mapping \( A^\T(\cdot) \) inherits many of the algebraic properties of \( A(\cdot) \), thanks to the fact that transposition preserves both addition and multiplication of matrices when defined over a field. In particular, since \( A(\alpha) \) generates a field isomorphic to \( \Fb_q \), its transpose \( A(\alpha)^\T \) does as well.
The following theorem summarizes the basic algebraic structure of the set \( A^\T(\Fb_q) \), and shows that it forms a field isomorphic to \( \Fb_q \). This will be particularly useful when we consider right-multiplication by row vectors in the next section.
\begin{teiri}
The map \(A^\T\) is injective, and its image \(A^\T(\Fb_q) \subset \Fb_2^{e \times e}\) forms a field that is isomorphic to \(\Fb_q\) under the correspondence defined by \(A^\T\). The following properties hold for all \(\gamma_1, \gamma_2 \in \Fb_q\):
\begin{align}
& A^\T(\gamma_1 + \gamma_2) = A^\T(\gamma_1) + A^\T(\gamma_2), \label{221847_25Mar25}\\
& A^\T(\gamma_1 \gamma_2) = A^\T(\gamma_1) A^\T(\gamma_2). \label{012141_26Mar25}
\end{align}
\end{teiri}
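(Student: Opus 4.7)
The plan is to reduce every claim to the corresponding property of $A$ already proved in Theorem~\ref{163515_22Mar25}, using the definition $A^\T(\gamma) = A(\gamma)^\T$ for $\gamma \ne 0$ together with $A^\T(0) = O$. Injectivity then follows immediately: matrix transposition is a bijection on $\Fb_2^{e \times e}$, so $A^\T$ is a composition of the injective maps $A$ and $(\cdot)^\T$, and hence injective.

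For the additivity \eqref{221847_25Mar25}, I would simply use that transposition commutes with matrix addition, together with \eqref{044517_22Mar25}:
\begin{align}
A^\T(\gamma_1 + \gamma_2) &= A(\gamma_1 + \gamma_2)^\T = \bigl(A(\gamma_1) + A(\gamma_2)\bigr)^\T \\
&= A(\gamma_1)^\T + A(\gamma_2)^\T = A^\T(\gamma_1) + A^\T(\gamma_2),
\end{align}
with the degenerate cases involving $0$ handled directly from $A^\T(0) = O$.

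The main obstacle is the multiplicativity \eqref{012141_26Mar25}, because matrix transposition is in general only an anti-homomorphism for multiplication: $(MN)^\T = N^\T M^\T$. The plan is to neutralize this order reversal using the commutativity part of \eqref{044508_22Mar25}. Starting from $A^\T(\gamma_1 \gamma_2) = A(\gamma_1 \gamma_2)^\T$ and expanding via the multiplicativity of $A$, the transpose-reverse identity yields $A^\T(\gamma_2) A^\T(\gamma_1)$; the desired order is then recovered by taking the transpose of both sides of \eqref{044508_22Mar25}, which states exactly that the two matrices $A^\T(\gamma_1)$ and $A^\T(\gamma_2)$ commute. This is the only step that uses something beyond routine transpose identities, and I expect it to be the conceptual crux of the proof.

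Finally, for the field-isomorphism claim, additivity together with multiplicativity make $A^\T\colon \Fb_q \to \Fb_2^{e \times e}$ a ring homomorphism. Since it is injective and $\Fb_q$ is a field, its image $A^\T(\Fb_q)$ is a subring of $\Fb_2^{e\times e}$ isomorphic to $\Fb_q$ as a ring, and hence as a field; the inverse map $A^\T(\Fb_q) \to \Fb_q$ is well-defined by injectivity, closing the isomorphism.
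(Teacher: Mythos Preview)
Your proposal is correct and follows exactly the same approach as the paper, which simply states that since $A^\T(\gamma)$ is the transpose of $A(\gamma)$, the result follows immediately from Theorem~\ref{163515_22Mar25}. Your version is more explicit about why the anti-homomorphism issue for multiplication is harmless (via the commutativity in \eqref{044508_22Mar25}), which the paper leaves implicit.
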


\begin{proof}
Since \(A^\T(\gamma)\) is the transpose of \(A(\gamma)\), the result follows immediately from Theorem~\ref{163515_22Mar25}.
\end{proof}

We define \(\wU(0) \defeq (0, \ldots, 0) \in \Fb_2^e\).  
For \(i = 1, \ldots, q-1\), define \(\wU(\alpha^i) \in \Fb_2^e\) as the first column of \(A^\T(\alpha^i)\).
Note that this definition is analogous to the map \(\vU: \Fb_q \to \Fb_2^e\) introduced in the previous section, where \(\vU(\alpha^i)\) corresponds to the \(i\)th standard basis vector and satisfies \(A(\gamma) \vU(\delta) = \vU(\gamma \delta)\).

In contrast, as we will prove in Theorem~\ref{161818_11May25}, the map \(\wU(\cdot)\) satisfies the relation
\[
A^\T(\gamma) \wU(\delta) = \wU(\gamma \delta),
\]
which mirrors the behavior of \(\vU(\cdot)\), but with respect to the transpose operator. In particular, while \(\vU(\gamma)\) is obtained as a column of \(A(\gamma)\), \(\wU(\gamma)\) is constructed from the row structure of \(A(\gamma)\) via transposition.

This symmetry between \(\vU\) and \(\wU\) allows us to characterize linear relations and matrix actions in terms of either map, depending on whether we prefer right or left multiplication.

The map \(\wU: \Fb_q \to \Fb_2^e\), defined as the first column of \(A^\T(\gamma)\), also preserves addition. That is, for all \(\gamma_1, \gamma_2 \in \Fb_q\), we have
\[
\wU(\gamma_1 + \gamma_2) = \wU(\gamma_1) + \wU(\gamma_2),
\]
which follows directly from the additivity of \(A^\T(\cdot)\) given in \eqref{221847_25Mar25}. Therefore, \(\wU\) defines a group isomorphism between \((\Fb_q, +)\) and \((\Fb_2^e, +)\).

\begin{teiri}\label{161818_11May25}
The following multiplicative property holds for all \(\gamma_1, \gamma_2 \in \Fb_q\):
\begin{align}
A^\T(\gamma_1)\, \wU(\gamma_2) = \wU(\gamma_1 \gamma_2).
\end{align}
\end{teiri}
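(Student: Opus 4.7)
The plan is to observe that the map $\wU(\cdot)$ can be expressed uniformly as the action of $A^\T(\cdot)$ on a single fixed vector, after which the claim reduces to the multiplicativity property \eqref{012141_26Mar25} already established for $A^\T$.

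First I would let $\eU^{(0)}=(1,0,\ldots,0)^\T\in\Fb_2^e$ denote the first standard basis vector and show that
\[
\wU(\gamma)=A^\T(\gamma)\,\eU^{(0)}\quad\text{for every }\gamma\in\Fb_q.
\]
For $\gamma=\alpha^i$ with $i\in\{0,1,\ldots,q-2\}$, this is exactly the definition of $\wU(\gamma)$ as the first column of $A^\T(\gamma)$, since right-multiplication by $\eU^{(0)}$ picks off the first column of any $e\times e$ matrix. For $\gamma=0$ we have $A^\T(0)=O$ by definition, so the product vanishes, matching $\wU(0)=\zeroU$. Hence the uniform identity holds on all of $\Fb_q$.

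With this identity in hand, the theorem follows by a single application of the multiplicativity property \eqref{012141_26Mar25} together with associativity of matrix multiplication:
\[
A^\T(\gamma_1)\,\wU(\gamma_2)=A^\T(\gamma_1)\bigl(A^\T(\gamma_2)\,\eU^{(0)}\bigr)=\bigl(A^\T(\gamma_1)A^\T(\gamma_2)\bigr)\eU^{(0)}=A^\T(\gamma_1\gamma_2)\,\eU^{(0)}=\wU(\gamma_1\gamma_2).
\]

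I do not anticipate any real obstacle: the statement is essentially a column-vector reformulation of the multiplicativity of $A^\T$. The only mild subtlety is the handling of $\gamma=0$ when invoking the ``first column'' viewpoint, which is disposed of by the zero-matrix convention. The genuine algebraic content has already been absorbed into the earlier identity $A^\T(\gamma_1\gamma_2)=A^\T(\gamma_1)A^\T(\gamma_2)$, which itself is a direct consequence of Theorem~\ref{163515_22Mar25} together with $(XY)^\T=Y^\T X^\T$.
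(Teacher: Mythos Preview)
Your proof is correct and follows essentially the same approach as the paper: both express $\wU(\gamma)$ as $A^\T(\gamma)$ applied to the first standard basis vector (the paper writes this as $\wU(1)$, which equals $\eU^{(0)}$), and then reduce the claim to the multiplicativity of $A^\T$ established in \eqref{012141_26Mar25}. Your treatment of the zero case via the uniform identity is slightly cleaner than the paper's separate case analysis, but the argument is the same.
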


\begin{proof}
If either \(\gamma_1\) or \(\gamma_2\) is zero, both sides are clearly zero, so the identity holds trivially. Suppose now that \(\gamma_1 = \alpha^j\) and \(\gamma_2 = \alpha^i\) for some \(i, j \in \mathbb{Z}_{\ge 0}\). 

By the definition of \(\wU(\cdot)\), we have
\begin{align}
\wU(\alpha^i) = A^\T(\alpha^i)\, \eU^{(0)} = A^\T(\alpha^i)\, \wU(\alpha^0) = A^\T(\alpha^i)\, \wU(1). \label{012303_26Mar25}
\end{align}
Using the multiplicativity of \(A^\T(\cdot)\) established in \eqref{012141_26Mar25}, we have:
\begin{align}
A^\T(\alpha^j)\, \wU(\alpha^i)
&\stkeq{\eqref{012303_26Mar25}} A^\T(\alpha^j)\, A^\T(\alpha^i)\, \wU(1) \notag\\
&= A^\T(\alpha^{i+j})\, \wU(1) \notag\\
&\stkeq{\eqref{012303_26Mar25}} \wU(\alpha^{i+j}),
\end{align}
where all exponents are interpreted modulo \(q - 1\). Hence, the identity holds for all \(\gamma_1, \gamma_2 \in \Fb_q^\times\), and the general case follows.
\end{proof}

To facilitate the expression of scalar relations over \(\Fb_q\) in terms of binary matrix operations involving \(A^\T(\cdot)\), we introduce the mapping \(\wU : \Fb_q \to \Fb_2^e\), which plays a role analogous to \(\vU(\cdot)\) used with \(A(\cdot)\). In particular, \(\wU(\cdot)\) is designed to interact naturally with the transposed field representation so that right-multiplication corresponds to field multiplication.
The following theorem establishes the fundamental equivalence between scalar linear combinations over \(\Fb_q\) and their binary vector representation via \(A^\T(\cdot)\) and \(\wU(\cdot)\).
\begin{teiri}\label{wU_linear_relation_theorem}
The following equivalence holds for any \(\gamma_j, \delta_j, \tau\in \Fb_q\):
\begin{align}
\sum_j \gamma_j \zeta_j = \tau
\quad \Longleftrightarrow \quad 
\sum_j A^\T(\gamma_j)\, \wU(\zeta_j) = \wU(\tau).
\end{align}
\end{teiri}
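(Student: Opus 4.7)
The plan is to mirror, almost verbatim, the proof of Theorem~\ref{013653_26Mar25}, replacing the pair $(\vU, A)$ with the pair $(\wU, A^\T)$. All the algebraic ingredients needed have already been assembled: the additivity of $\wU$ (inherited from the additivity of $A^\T$ stated in~\eqref{221847_25Mar25}, which transfers to $\wU$ via its definition as the first column of $A^\T$), the multiplicative property $A^\T(\gamma_1)\wU(\gamma_2)=\wU(\gamma_1\gamma_2)$ from Theorem~\ref{161818_11May25}, and the injectivity of $\wU$ (a group isomorphism between $(\Fb_q,+)$ and $(\Fb_2^e,+)$).

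First, I would apply the multiplicative identity of Theorem~\ref{161818_11May25} termwise to rewrite the left-hand side of the matrix equation, obtaining
\begin{align}
\sum_j A^\T(\gamma_j)\,\wU(\zeta_j) = \sum_j \wU(\gamma_j\zeta_j).
\end{align}
Next, I would pull the sum inside $\wU$ using additivity, yielding
\begin{align}
\sum_j \wU(\gamma_j\zeta_j) = \wU\Bigl(\sum_j \gamma_j\zeta_j\Bigr).
\end{align}
Therefore the matrix-vector equation $\sum_j A^\T(\gamma_j)\wU(\zeta_j)=\wU(\tau)$ is equivalent to $\wU\bigl(\sum_j \gamma_j\zeta_j\bigr)=\wU(\tau)$, and by injectivity of $\wU$ this in turn is equivalent to the scalar identity $\sum_j \gamma_j\zeta_j=\tau$ in $\Fb_q$.

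No substantive obstacle is anticipated: the theorem is the transposed counterpart of Theorem~\ref{013653_26Mar25}, and all the nontrivial work has already been done in Theorem~\ref{161818_11May25} and in establishing additivity of $A^\T$ in~\eqref{221847_25Mar25}. The only point worth flagging is the appeal to injectivity of $\wU$; this is immediate from the fact that $\wU$ is a group isomorphism, which itself follows from the injectivity of $A^\T$ combined with the definition of $\wU$ as the first column of $A^\T(\gamma)$ (so that $\wU(\gamma_1)=\wU(\gamma_2)$ forces equality of the corresponding columns of $A^\T$, hence of $\gamma_1$ and $\gamma_2$).
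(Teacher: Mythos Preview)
Your proposal is correct and follows essentially the same approach as the paper: apply the multiplicative identity $A^\T(\gamma_j)\wU(\zeta_j)=\wU(\gamma_j\zeta_j)$ from Theorem~\ref{161818_11May25} termwise, use additivity of $\wU$ to pull the sum inside, and conclude by injectivity of $\wU$. The paper's proof is slightly terser but identical in substance, explicitly noting (as you do) that it is analogous to Theorem~\ref{013653_26Mar25}.
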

\begin{proof}
This can be shown by an argument analogous to that of Theorem~\ref{013653_26Mar25}.  
Using the multiplicativity of \(A^\T(\cdot)\), we have \(A^\T(\gamma_j)\, \wU(\zeta_j) = \wU(\gamma_j \zeta_j)\). Therefore,
\begin{align}
\sum_j A^\T(\gamma_j)\, \wU(\zeta_j) 
= \sum_j \wU(\gamma_j \zeta_j) 
= \wU\bigl( \sum_j \gamma_j \zeta_j \bigr).
\end{align}
Thus, the left-hand side equals \(\wU(\tau)\) if and only if \(\sum_j \gamma_j \zeta_j = \tau\), since \(\wU\) is injective.
\end{proof}

\renewcommand{\arraystretch}{1.5} 
\begin{table}[h]
    \caption{Companion matrices and binary representation for primitive element $\alpha\in \Fb_8$ with primitive polynomial $a(x)=1+x+x^3$.}
    \begin{align}
        \begin{array}{c|ccccccc}
            i & 0 & 1& 2&3&4&5&6
            \\\hline\hline
            \alpha^i
            &\alpha^0
            &\alpha^1
            &\alpha^2
            &\alpha^3
            &\alpha^4
            &\alpha^5
            &\alpha^6
            \\\hline    
            \vU(\alpha^i)&
            (100)^\T&
            (010)^\T&
            (001)^\T&
            (110)^\T&
            (011)^\T&
            (111)^\T&
            (101)^\T
            \\\hline
            A^i & \begin{pmatrix}
            100 \\
            010 \\
            001
            \end{pmatrix} & \begin{pmatrix}
            001 \\
            101 \\
            010
            \end{pmatrix} & \begin{pmatrix}
            010\\
            011\\
            101
            \end{pmatrix} & \begin{pmatrix}
            101\\
            111\\
            011
            \end{pmatrix} & \begin{pmatrix}
            011\\
            110\\
            111
            \end{pmatrix} & \begin{pmatrix}
            111\\
            100\\
            110
            \end{pmatrix} & \begin{pmatrix}
            110\\
            001\\
            100
            \end{pmatrix}
            \\\hline
            \wU(\alpha^i)&
            (100)^\T&
            (001)^\T&
            (010)^\T&
            (101)^\T&
            (011)^\T&
            (111)^\T&
            (110)^\T
            \\\hline
            (A^\T)^i & \begin{pmatrix}
            100 \\
            010 \\
            001
            \end{pmatrix} & \begin{pmatrix}
            010\\
            001\\
            110
            \end{pmatrix} & \begin{pmatrix}
            001\\
            110\\
            011
            \end{pmatrix} & \begin{pmatrix}
            110\\
            011\\
            111
            \end{pmatrix} & \begin{pmatrix}
            011\\
            111\\
            101
            \end{pmatrix} & \begin{pmatrix}
            111\\
            101\\
            100
            \end{pmatrix} & \begin{pmatrix}
            101\\
            100\\
            010
            \end{pmatrix}
        \end{array}
    \end{align}
\renewcommand{\arraystretch}{1.0} 
\end{table}

\section{Proof for \texorpdfstring{$\eqref{224009_19May25}$}{(24)}}\label{164912_4Jun25}
The orthogonality of $H_X$ and $H_Z$ follows from the linearity and multiplicativity of the companion matrix mapping $A$, together with the orthogonality of the underlying non-binary matrices $H_\Gamma$ and $H_\Delta$:
\begin{align}
    \left(H_X H_Z^{\top}\right)_{i, j} 
    &= \sum_{k} A\left(\gamma_{i, k}\right) A\left(\delta_{k, j}\right) \\
    &= \sum_{k} A\left(\gamma_{i, k} \delta_{k, j}\right) \\
    &= A\left(\sum_{k} \gamma_{i, k} \delta_{k, j}\right) \\
    &= A\left(\left(H_{\Gamma} H_{\Delta}^{\top}\right)_{i, j}\right) \\
    &= A(0) \\
    &= O.
\end{align}

\section{Proof for \texorpdfstring{$\eqref{015905_13Feb25}$ and $\eqref{095733_27May25}$}{(24)}}\label{224018_19May25}
In this appendix, we show that if the orthogonality condition \eqref{212247_11Feb25}:
\begin{align}
H_\Gamma (H_\Delta)^\top = O 
\end{align}
holds, then the congruence relations \eqref{015905_13Feb25} and \eqref{095733_27May25}:
\begin{align}
\left(
\begin{array}{c|c|c|c}
-\HH_Z^{(\Lrm)} & \HH_Z^{(\Rrm)} & \HH_Z^{(\Lrm)} & -\HH_Z^{(\Rrm)}
\end{array}
\right)
 \log \gammaU &= \zeroU \pmod{q - 1}, \\
\left(
\begin{array}{c|c|c|c}
-\HH_X^{(\Lrm)} & \HH_X^{(\Rrm)} & \HH_X^{(\Lrm)} & -\HH_X^{(\Rrm)}
\end{array}
\right)
 \log \deltaU &= \zeroU \pmod{q - 1} 
\end{align}
necessarily follow.

Let $\deltaU_{jr}^\top$ denote the $r$-th row of $H_\Delta^{(j)}$, where $H_\Delta^{(j)}$ is the $j$-th row block of $H_\Delta$. 
By the orthogonality condition, we have $H_\Gamma \deltaU_{jr} = \zeroU$.  
Note that $H_\Gamma$ shares the same support (i.e., positions of nonzero entries) as $\HH_X$, which is structured in a specific pattern. 
According to Theorem~\ref{053905_28Apr25}, the submatrix of $H_\Gamma h_{jr}$ that contributes to this equation forms a square matrix of size $L$, and it constitutes a Tanner cycle $\Ct$ contained in the UTCBC $\uU(j)$ (defined in Section~\ref{sec:235443_19May25}) in $H_\Gamma$.  

As shown in the proof of Theorem~\ref{053905_28Apr25}, the Tanner cycle $\Ct$ starts from the $(0,0)$ block and traverses four elements in a clockwise direction, repeating this pattern for $L/2$ rounds.
Label the elements visited in the order of the cycle as follows:
\begin{align}
&\gamma^{(0)}_0 \to \gamma^{(0)}_1 \to \gamma^{(0)}_2 \to \gamma^{(0)}_3 \to \\
&\gamma^{(1)}_0 \to \gamma^{(1)}_1 \to \gamma^{(1)}_2 \to \gamma^{(1)}_3 \to \\
&\quad \vdots \\
&\gamma^{(L/2-1)}_0 \to \gamma^{(L/2-1)}_1 \to \gamma^{(L/2-1)}_2 \to \gamma^{(L/2-1)}_3 \to \gamma^{(0)}_0.\label{151512_2Jun25}
\end{align}
Note that for each $\ell \in [L/2]$, the elements $\gamma^{(\ell)}_1$ and $\gamma^{(\ell)}_2$ lie in the same column in the right half of $H_\Gamma$, while $\gamma^{(\ell)}_3$ and $\gamma^{(\ell+1)}_0$ lie in the same column in the left half of $H_\Gamma$.  
The set of all such columns coincides with the support (i.e., nonzero positions) of $\deltaU_{jr}$.

The determinant of this submatrix $\Ct$ is given by
\begin{align}
\prod_{\ell \in [L/2]} \gamma^{(\ell)}_0 \gamma^{(\ell)}_2 
+ \prod_{\ell \in [L/2]} \gamma^{(\ell)}_1 \gamma^{(\ell)}_3. \label{184118_30May25}
\end{align}
Since we work over a finite field of characteristic 2, i.e., $q = 2^e$, the sign does not affect the result.  
Since we work over a finite field of characteristic 2, i.e., $q = 2^e$, the sign does not affect the result.  
Moreover, since $\Ct$ is orthogonal to the vector extracted from the nonzero entries of $\deltaU_{jr}$, the rank of $\Ct$ must be deficient, that is, strictly less than $L$, and hence~\eqref{184118_30May25} must equal zero.  
Therefore, the following equation holds:  
\[
\prod_{\ell \in [L/2]} \frac{\gamma^{(\ell)}_1 \gamma^{(\ell)}_3}{\gamma^{(\ell)}_0 \gamma^{(\ell)}_2} = 1.
\]  
Applying the discrete logarithm $\log_\alpha(\cdot)$ (with the base $\alpha$ omitted), we obtain  
\begin{align}
\sum_{\ell \in [L/2]} \big( -\log \gamma^{(\ell)}_0 + \log \gamma^{(\ell)}_1 + \log \gamma^{(\ell)}_3 - \log \gamma^{(\ell)}_2 \big) = 0 \pmod{q - 1}. \label{151702_2Jun25}
\end{align}

Observe that the columns in which the variables $\gamma^{(\ell)}_0$ for $\ell\in[L/2]$ appear in $H_\Gamma$ coincide with the positions of nonzero entries in the $r$-th row of the $j$-th row block of $\HH_Z^{(\Lrm)}$.  
The same holds for $\gamma^{(\ell)}_3$.
Likewise, the variables $\gamma^{(\ell)}_i$ for $i = 1, 2$ appear in the same columns as the nonzero entries in the $r$-th row of the $j$-th row block of $\HH_Z^{(\Rrm)}$.

Let $\hU_{jr}^\top$ denote the $r$-th row in the $j$-th row block of $\HH_X$.  
We write the left and right halves of $\hU_{jr}^\top$ as $\hU_{jr,\Lrm}^\top$ and $\hU_{jr,\Rrm}^\top$, respectively.  
Recall the definition of $\gammaU = (\gamma_0, \ldots, \gamma_{2N-1})$ given in~\eqref{235116_14Jun25}.  
The first and second halves of the vector $\gammaU$ correspond to the $N$ symbols $\gamma_{ij}$ that appear in the upper and lower rows of $H_\Gamma$, respectively, arranged from left to right.  
Recall that for each $\ell \in [L/2]$, the elements $\gamma^{(\ell)}_1$ and $\gamma^{(\ell)}_2$ lie in the same column in the right half of $H_\Gamma$, while $\gamma^{(\ell)}_3$ and $\gamma^{(\ell+1)}_0$ lie in the same column in the left half of $H_\Gamma$.  
Based on this structure, equation~\eqref{151702_2Jun25} can be rewritten as follows.  
\begin{align}
\left(
\begin{array}{c|c|c|c}
-\hU_{jr,\Lrm}^\top & \hU_{jr,\Rrm}^\top & \hU_{jr,\Lrm}^\top & -\hU_{jr,\Rrm}^\top 
\end{array}
\right)
\log \gammaU = \zeroU \pmod{q - 1}.
\end{align}
Since this argument applies to all $j \in [J]$ and $r \in [P]$, we can represent it in matrix-vector form as:
\begin{align}
\left(
\begin{array}{c|c|c|c}
-\HH_Z^{(\Lrm)} & \HH_Z^{(\Rrm)} & \HH_Z^{(\Lrm)} & -\HH_Z^{(\Rrm)}
\end{array}
\right)
\log \gammaU = \zeroU \pmod{q - 1}.
\end{align}
This establishes~\eqref{015905_13Feb25}.  
The identity~\eqref{095733_27May25} can be proved in the same manner.

\section{Proof of \texorpdfstring{$\eqref{160947_1Jun25}$ and $\eqref{160950_1Jun25}$}{(24)}}\label{161030_1Jun25}

In this section, we prove that Equation~\eqref{160947_1Jun25} is equivalent to the condition that every Tanner cycle $\Ct$ contained in the UTCBC $\uU(2)$ in $H_\Gamma$ satisfies that $N(\Ct;H_\Gamma)$ contains no nonzero codewords.

Let $\Ct$ be a Tanner cycle of length $2L$ included in the UTCBC $\uU(2)$, which intersects the $r$-th row of $H_\Delta^{(2)}$. Note that $\HH_Z$ shares the same support (i.e., positions of nonzero entries) as $H_\Delta$. Following the discussion in Appendix~\ref{224018_19May25}, we label each variable in $\Ct$ as in Equation~\eqref{151512_2Jun25}.
The dimension of $N(\Ct;H_\Gamma)$ is zero if and only if $\Ct$ has full rank, i.e., rank $L$. The condition that $\Ct$ is full rank can be formulated analogously to the argument for non-full-rank cycles in Equation~\eqref{151702_2Jun25}, as follows:
\begin{align}
 \sum_{\ell \in [L/2]} \big( -\log \gamma^{(\ell)}_0 + \log \gamma^{(\ell)}_1 + \log \gamma^{(\ell)}_3 - \log \gamma^{(\ell)}_2 \big) =  c_i \pmod{q - 1}, 
\end{align}
for some $c_i \neq 0$.

Since this condition must hold for all $r \in [P]$, we can rewrite the expression in matrix-vector form:  
\begin{align}
\left(
\begin{array}{c|c|c|c}
-\HH_Z^{(2,\Lrm)} & \HH_Z^{(2,\Rrm)} & \HH_Z^{(2,\Lrm)} & -\HH_Z^{(2,\Rrm)}
\end{array}
\right)
\log \gammaU = \cU \pmod{q - 1},
\end{align}
where $\cU$ is a vector with $c_i \neq 0$ for all $i \in [P]$.  
This establishes equation~\eqref{160947_1Jun25}.  
Equation~\eqref{160950_1Jun25} can be derived in the same manner.  

\bibliographystyle{IEEEtran}
\bibliography{IEEEabrv,../../../literature/00kasai} 
\end{document}